\def\BibTeX{{\rm B\kern-.05em{\sc i\kern-.025em b}\kern-.08em
    T\kern-.1667em\lower.7ex\hbox{E}\kern-.125emX}}
\tikzset{join/.code=\tikzset{after node path={%
\ifx\tikzchainprevious\pgfutil@empty\else(\tikzchainprevious)%
edge[every join]#1(\tikzchaincurrent)\varphii}}}
\tikzset{>=stealth',every on chain/.append style={join},
         every join/.style={->}}
\tikzstyle{labeled}=[execute at begin node=$\scriptstyle,
\newcommand{\hm}{\mathrm {hm}}
\providecommand{\dotdiv}{
  \mathbin{
    \vphantom{+}
    \text{
      \mathsurround=0pt 
      \ooalign{
        \noalign{\kern-.35ex}
        \hidewidth$\smash{\cdot}$\hidewidth\cr 
        \noalign{\kern.35ex}
        $-$\cr 
      }%
    }%
  }%
}
\newcommand{\commentout}[1]{}
\newcommand{\supp}{\mathrm {supp}}
\newcommand{\butnot}{\mathrel{%
  \hspace{.1ex}
  \begin{tikzpicture}[baseline=-.57ex, line width=.125ex]
    \draw[-] (.2ex,0) --(1.5ex,0);
    \draw[-, line width=.01ex, fill=black]
             (1.35ex,0) -- (1.84ex, .48ex)
                       -- (1.91ex ,.418ex)
                       -- (1.55ex,   0ex)
                       -- (1.91ex ,-.418ex)
                       -- (1.84ex,-.48ex)
                       -- (1.35ex,0ex);
  \end{tikzpicture}
\hspace{.1ex}}}
\newtheorem{Def}{Definition}
\newtheorem{Thm}{Theorem}
\newtheorem{Rmk}{Remark}
\newtheorem{Exm}{Example}
\newtheorem{Lem}{Lemma}
\newtheorem{Pro}{Proposition}
\theoremstyle{plain}
\theoremstyle{definition}
\title{Codd's Theorem for Databases over Semirings} 
\author{Guillermo Badia \footnote{School of Historical and Philosophical Inquiry, University of Queensland, Brisbane, St Lucia, QLD 4072, Australia.}
\and
Phokion G. Kolaitis \footnote{
University of California Santa Cruz, USA and IBM  Research,  USA.}
\and
Carles Noguera \footnote{Department of Information Engineering and Mathematics,  University of Siena 
         San Niccol\`o, via Roma 56, 53100 Siena, Italy.}}
\begin{document}

\maketitle

\begin{abstract}
Codd's Theorem, a fundamental result of database theory,  asserts that relational algebra and relational calculus have the same expressive power on relational databases. We explore Codd's Theorem for databases over semirings and establish two different versions of this result for such databases: the first version involves the five basic operations of relational algebra, while in the second version the division operation is added to the  five basic operations of relational algebra. 
In both versions, the difference operation of relations is given semantics using semirings with monus, while on the side of relational calculus a limited form of negation is used. 
The reason for considering these two different versions of Codd's theorem is that, unlike the case of ordinary relational databases, the division operation need  not be expressible in terms of the five basic operations of relational algebra for databases over an arbitrary positive semiring; in fact, we show that this inexpressibility result holds even for bag databases.  
\end{abstract}

\section{Introduction}


\noindent{\bf Background and Motivation}~ 
Codd's Theorem~\cite{Codd}
is 
arguably the most fundamental result in relational database theory. Informally, Codd's Theorem asserts that relational algebra and relational calculus have the same expressive power on relational databases. The significance of Codd's Theorem is that it demonstrated that  a procedural query language based on operations on relations and a declarative query language based on first-order logic can express precisely the same queries. As regards its impact, Codd's Theorem gave rise to the notion of \emph{relational completeness} as a benchmark for the expressive power of query languages. Furthermore, it became the catalyst for numerous subsequent investigations of both sublanguages of relational algebra and relational calculus, such as conjunctive queries, and extensions of relational
algebra and relational calculus, such as Datalog.

In more recent years, there has been an extensive study of databases over semirings, i.e., databases in which the tuples of the relations are annotated with elements from a fixed semiring. Standard relational databases are, of course, databases over the Boolean semiring $\mathbb{B}=(\{0,1\}, \vee, \wedge, 0,1)$, while \emph{bag} databases are databases over the semiring $\mathbb{N}=(N, +, \cdot, 0,1)$ of the natural numbers. The interest in semirings other than the Boolean  and the bag ones was sparked by the influential paper~\cite{GKV}
in which semirings of polynomials were used to 
formalize and study the \emph{provenance} of database queries. 
Originally, the semiring-based study of provenance focused on the positive fragment of relational algebra (i.e., unions of conjunctive queries), but it was subsequently extended to the study of provenance in first-order logic~\cite{G-T} and least fixed-point logic~\cite{DBLP:conf/csl/DannertGNT21}.
Researchers  also investigated a variety of other database problems under semiring semantics, including the query containment problem~\cite{Green:2011,DBLP:journals/tods/KostylevRS14}, the evaluation of Datalog queries~\cite{DBLP:journals/jacm/KhamisNPSW24,DBLP:journals/pacmmod/ZhaoDKRT24}, 
and the interplay between local   consistency and global consistency for relations over  semirings~\cite{AtseriasK2023,DBLP:journals/pacmmod/AtseriasK24}. In a different direction, connections between databases over semirings and quantum information theory were established~\cite{DBLP:conf/birthday/Abramsky13}.

In view of the aforementioned developments, it is rather surprising that the following questions have not been addressed thus far: Is there a Codd's Theorem for databases over semirings? In other words, do relational algebra and relational calculus have the same expressive power on databases over semirings? In particular, is there a version of Codd's Theorem for bag databases?
Our goal in this paper is to investigate these questions and to provide some answers.

 Before proceeding further, we need to take a closer look at the precise statement of Codd's Theorem. In a literal sense, it is simply \emph{not} true that relational algebra and relational calculus have the same expressive power over relational databases. One of reasons for this is that relational calculus allows for the use of unrestricted negation, which implies that relational calculus formulas can define infinite relations. Codd~\cite{Codd} was aware of this issue, thus he carefully crafted a collection of relational calculus formulas, called \emph{$\alpha$-expressions}, and showed that $\alpha$-expressions have the same expressive power as relational algebra. Since the syntax of $\alpha$-expressions is somewhat convoluted, there was extensive follow-up research on richer syntactically-defined fragments of relational calculus that have the same expressive power as relational algebra (see, e.g., \cite{DBLP:journals/tods/GelderT91}). In a parallel direction, it was also realized that the equivalence between relational algebra and relational calculus holds if the relational calculus formulas are \emph{domain independent}, which means that they return the same answers as long as the universe of discourse contains the \emph{active domain} of the database at hand. This version of
Codd's Theorem is the one presented in standard database theory textbooks (e.g., in~\cite{AbiVi}). In what follows, we will explore whether and how this version of Codd's Theorem extends to databases over semirings.

\smallskip

\noindent{\bf Analysis of Choices}~
A semiring is a structure $\mathbb{K}=(K,+,\cdot,0,1)$, where addition $+$ and multiplication $\cdot$ are binary operations on $K$, and certain identities hold (precise definitions are given in the next section). When  positive relational algebra is interpreted  over semirings, $+$ and $\cdot$ are used to give semantics to union $\cup$  and Cartesian product $\times$, respectively.  Similarly, $+$ and $\cdot$ are used to give semantics to disjunction $\vee$  and conjunction $\wedge$, when unions of conjunctive queries are interpreted over semirings. However, relational algebra also contains the difference operation $\setminus$ as a basic operation.
Therefore, to interpret full relational algebra over semirings,  we must expand the semirings  with a ``difference'' operation. It is well understood, though, that there is no unique way to expand a semiring with a well-behaved ``difference'' operation (see~\cite{DBLP:conf/birthday/Suciu24} for a thorough coverage of this issue). Thus, a choice has to be made concerning the meaning of the ``difference'' operation on semirings.

Here, we opt to focus on semirings with a \emph{monus} operation, an operation that is well defined for \emph{naturally ordered} semirings with some additional properties. 
In the study of provenance of arbitrary relational algebra expressions, 
the  monus operation was adopted  in~\cite{GP} as an interpretation of the difference operation on relations;  this choice was  criticized in~\cite{DBLP:conf/tapp/AmsterdamerDT11} on the grounds that certain desirable algebraic identities involving natural join and difference fail on some semirings with monus.  Our decision to focus on semirings with monus is based on two considerations: first, semirings with monus comprise a large class of semirings that, beyond the Boolean semiring, includes the bag semiring, the tropical semiring on the natural numbers, and 
every  complete bounded distributive lattice (hence, the fuzzy semiring);
 second, we do not believe that the failure of some useful identities on some semirings with monus is sufficient justification to disregard semirings with monus. After all, such fundamental identities as $R\cup R = R$ fail on the bag semiring, yet bag semantics is the default SQL semantics. And, of course, if one wishes to establish a version of Codd's Theorem for bag databases, then the monus operation on bags has to be used, since, this is  what is used in the semantics of SQL.

On the side of relational calculus, we have to make a  choice concerning negation.
In the study of provenance of first-order formulas,  Gr\"adel and Tannen~\cite{G-T} allowed for   unrestricted negation, but the formulas are always assumed to be in negation normal form (i.e., the negation symbol is pushed in front of atomic formulas). 
Furthermore, the notion of an
\emph{interpretation} is used to give semiring semantics to 
 first-order formulas in negation normal form, where an interpretation
is an assignment of semiring values to every atomic and negated atomic statements. Note also that Gr\"adel and Tannen~\cite{G-T} do not explicitly define what it means for a database over a semiring $\mathbb K$ to satisfy a first-order formula;  such a definition, however, could be obtained by fixing a particular \emph{canonical}
interpretation associated with the database of interest. We believe that the 
restriction to formulas in negation normal form is rather severe from a query language point of view. 
For this reason, we do not require the formulas to be in any normal form, but we also avoid unrestricted negation and, instead, we opt to use the binary connective $\butnot$ (``but not'').  This connective expresses a limited form of negation and has a long history of uses in  algebra and in non-classical logics~\cite{MT,Good,rauszer1974semi}. 
We note that, when logics are given semiring semantics, 
 the use of the connective $\butnot$  to model negation was criticized in~\cite{G-T3}. That criticism was mainly based on the argument that the connective $\butnot$ does not give rise to a ``reasonable'' form of negation, when the negation $\neg \varphi$ of a formula $\varphi$  is defined as $1\butnot \varphi$, i.e., when $\butnot$ is used to define an unrestricted form of negation. As mentioned earlier, however, the connective $\butnot$ has been traditionally used to model a restricted form of negation (a ``safe'' negation, if you will) and not an unrestricted negation. In our setting,  we need a restricted form of negation that is the counterpart of the monus operation.


There are two other important issues to bring up  at this point. First, when one starts thinking about a possible Codd's Theorem for databases over semirings, one quickly realizes that relational algebra has to be expanded with a \emph{support} operation. This operation takes as input a relation over a semiring, removes all annotations from the semiring, and returns the underlying (ordinary) relation as output. For the Boolean semiring, there is no need to consider the support as an additional operation, since the support of a relation is the relation itself. For relations over the fuzzy semiring, however, it is easy to show that the support is not expressible in terms of the basic operations of relational algebra, thus the support operation has to be added to the syntax of relational algebra for relations over semirings. In turn, this implies that we need to consider semirings expanded with a support operation $\mathsf{s}$ such that
$\mathsf{s}(a)=1$ if $a\neq 0$, while $\mathsf{s}(a) =0$ if $a=0$.

The second issue is more subtle and, as it turns out, more consequential. Early on, Codd showed that the five basic operations of relational algebra (union, difference, Cartesian product, selection, and projection) can express the \emph{division} (or the  \emph{quotient}) of two relations~\cite{Codd}.  Division is a powerful operation that can be used to express universal quantification in the translation of relational calculus to relational algebra.
The division operation has natural semantics for relations over semirings.
At first, one may expect that division is always expressible in terms of the five basic relational algebra operations when we consider relations over semirings. Unfortunately, this is not true. Using a combinatorial argument, we show that already on the bag semiring 
$\mathbb{N}=(N,+, \cdot,0,1)$, the division operation is \emph{not} expressible in terms of the five basic relational algebra operations. Thus, to obtain a version of Codd's Theorem for relational calculus formulas involving universal quantification, we need to expand relational algebra with a division operation.

\smallskip

\noindent{\bf Contributions} With the insights gained from the preceding analysis, we proceed to establish the following two versions of Codd's Theorem for databases over semirings, one for relational algebra without division and one for relational algebra with division. 

Let $\mathbb{K}=(K,+,\cdot,-,\mathsf{s},0,1)$ be a 
zero-sum-free commutative semiring  with monus  and support, where \emph{zero-sum-free} means that if $a+b=0$, then $a=0$ and $b=0$.  We write $\mathcal{BRA}$ to denote the collection of all \emph{basic relational algebra} expressions, that is, all relational algebra expressions over a schema built using the operations union, difference, Cartesian product, selection, projection, and support. We also write $\mathcal{BRC}$ to denote the collection of all \emph{basic relational calculus} formulas, that is, all formulas built from atomic formulas, using the connectives $\wedge$, $\vee$, $\butnot$, and existential quantification. We show that a query $q$ on $\mathbb{K}$-databases is expressible by a $\mathcal{BRA}$-expression if and only if it is definable by a domain independent
$\mathcal{BRC}$-formula (Theorem~\ref{Codd1}).

Let $\mathbb{K}=(K,+,\cdot,-,\mathsf{s},0,1)$ be a 
positive semiring with monus and support, where \emph{positive} means that $\mathbb{K}$ is commutative, zero-sum-free, and has no zero-divisors (i.e., if $a\cdot b=0$, then $a=0$ or $b=0$).  We write $\mathcal{RA}$ to denote the collection of all \emph{relational algebra} expressions, that is,  all
 expressions over a schema built using the operations union, difference, Cartesian product, selection, projection, support, and division. We also write $\mathcal{RC}$ to denote the collection of all \emph{relational calculus} formulas, that is,  all formulas built from atomic formulas, using the connectives $\wedge$, $\vee$, $\butnot$, and both existential and universal  quantification. 
 We   show that a query $q$ on $\mathbb{K}$-databases is expressible by an $\mathcal{RA}$-expression if and only if it is definable by a domain independent
$\mathcal{RC}$-formula (Theorem~\ref{Codd2}). An immediate corollary of this result is a version of Codd's Theorem for bag databases.  In our view, this corollary alone justifies the choices of the monus operation and of the connective
$\butnot$ that we made here.

The proofs of these two results proceed by induction on the construction of the (basic) relational algebra expressions and the construction of the (basic) relational calculus formulas. 
Some of the inductive steps, require a rather delicate analysis because of the nuances  that arise when $\mathbb K$-databases over  arbitrary positive semirings are considered.
Note also that in the simulation of (basic) relational calculus by
 (basic) relational algebra, the proofs make repeated use of the fact that there is a relational algebra expression that uniformly defines the active domain of a $\mathbb{K}$-database, a fact in which the use of the support operation is of the essence.

 In conclusion, our work makes a  contribution to the ongoing study of databases over semirings by investigating Codd's Theorem in this context.  Specifically, our work identifies the ``right'' constructs of relational algebra and relational calculus  needed to establish Codd's Theorem for databases over semirings,  delineates the features of these constructs, unveils differences between the Boolean semiring and other semirings, and potentially  establishes benchmarks for comparing the expressive power of these and other formalisms under semiring semantics.
 


\section{Semirings with Monus and Support}

\begin{Def}[Semiring] A \emph{semiring} is a  structure $\mathbb{K}=(K, +, \cdot,0,1)$, where  addition $+$ and multiplication $\cdot$ are binary operations on $K$, and  $0$, $1$ are elements of $K$ such that the following properties hold: 
\begin{itemize}
\item $(K, +, 0)$  is a commutative monoid,  $(K, \cdot, 1)$ is a monoid, and $0\neq 1$;
\item (distributivity): for all elements $a,b,c$ in $K$, we have that 
$a\cdot (b+c) = a\cdot b + a \cdot c$;
\item (annihilation): for every element $a$ in $K$, we have that $a \cdot 0 = 0 \cdot a = 0$. 
\end{itemize}
\end{Def}

\noindent Let $\mathbb{K}=(K, +, \cdot,0,1)$ be a semiring. We will consider several  additional properties.
\begin{itemize}

\item  $\mathbb{K}$ is \emph{commutative}  if the monoid $(K,\cdot,1)$ is commutative.

\item $\mathbb{K}$ is \emph{zero-sum-free} if for all elements $a, b \in K$ with $a+b=0$, we have that $a=0$ and $b=0$.
\item 
$\mathbb{K}$ has
\emph{no zero divisors} if for all elements $a, b \in K$ with  $a\cdot b=0$, we have that $a=0$ or $b=0$.
\item  $\mathbb{K}$ is a  \emph{positive} semiring if it is commutative,  zero-sum-free, and has no zero divisors.
\end{itemize}

As is well known, a \emph{ring} is a semiring
$\mathbb{K}=(K, +, \cdot,0,1)$ such that
$(K,+,0)$ is a commutative group. In other words, a ring is a commutative semiring in which every element has an additive inverse, i.e., for every $b\in K$, there is an element $-b \in K$ such that
$b+(-b) = 0$. Therefore, on every ring, a binary \emph{difference} operation $-$ can be defined by setting $a-b = a+ (-b)$. This raises the question: is it  possible to expand a semiring with a binary operation that possesses some of the properties of the difference operation on rings. As discussed in~\cite{DBLP:conf/birthday/Suciu24}, there are several alternatives to defining such an operation. Here, we will focus on the \emph{monus} operation~\cite{amer1984equationally,GP}, which, however, requires that the semiring at hand is \emph{naturally ordered}. As we will see, not every naturally ordered semiring can be expanded to a semiring with monus.

If $\mathbb{K}=(K,+,\cdot,\mathsf{s},0,1)$ is a semiring, then the 
\emph{natural preorder} of $\mathbb{K}$ is the binary relation $\preceq$ on $K$ defined as follows: $a\preceq b$ if there is an element $c\in K$ such that $a+c=b$. Clearly,  $\preceq$ is a reflexive and transitive relation on $K$, i.e., $\preceq$ is indeed a preorder. If $\preceq$ happens to be a partial order (that is,  $a\preceq b$ and $b\preceq a$ imply  $a=b$), then  $\preceq$~is called the \emph{natural order} of the semiring $\mathbb{K}$. In this case, we say that $\mathbb{K}$ is a \emph{naturally ordered}
semiring.
 
\begin{Def}[Semiring with monus] \label{msemring} A  \emph{semiring with monus} is a structure  $\mathbb{K}= (K, +, \cdot, -,  0, 1)$ such that the following  hold:
\begin{itemize}
\item $(K, +, \cdot, 0, 1)$ is a naturally ordered commutative semiring;
\item For all elements $a,b \in K$,  there is a smallest (in the sense of the natural order $\preceq$) element $c$ in $K$, denoted by $a-b$,  such that $a\preceq b+c$.
\end{itemize}
\end{Def}

The next proposition provides different characterizations of the monus operation (see~\cite{amer1984equationally,DBLP:conf/tapp/AmsterdamerDT11}).
\begin{Pro}\label{monusprop}
Let\/ $\mathbb{K}= (K, +, \cdot,  0, 1)$ be a naturally ordered semiring and let $-'$ be a binary operation on $K$. Then, the following statements are equivalent:
\begin{itemize}
\item For all $a,b \in K$, we have $a -' b = a-b$, i.e., 
$a -' b$ is the smallest element c such that
$a\preceq b+c$.

\item For all $a,b, c \in K$, we have: $a -' b \preceq c$~ if and only if~ $a\preceq b+c$.
\item  The following identities hold for  $-'$:
\begin{enumerate}
\item $a -' a = 0$
\item $0 -' a = 0$
\item  $a + (b -' a) = b + (a -' b)$
\item  $a -' (b + c) = (a -' b) -' c$.
\end{enumerate}
\end{itemize}
\end{Pro}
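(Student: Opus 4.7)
The plan is to establish the three characterizations as equivalent by the cyclic chain $(1) \Rightarrow (2) \Rightarrow (3) \Rightarrow (1)$, relying throughout on two basic facts about a naturally ordered semiring: that $0$ is the minimum of $\preceq$ (because $0 + x = x$ gives $0 \preceq x$) and that $+$ is monotone with respect to $\preceq$ (if $x+w=y$ then $x+z+w=y+z$). The step $(1) \Rightarrow (2)$ is routine: the ``only if'' direction is immediate from the minimality built into (1), and the ``if'' direction combines monotonicity of $+$ with the fact $a \preceq b + (a -' b)$, which is also built into (1).

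For $(2) \Rightarrow (3)$, I would derive (i) and (ii) by a single application of (2) with $c=0$, using $0 \preceq x$ to pin down equality. Identity (iv) is the cleanest: for every $d$, $a -'(b+c) \preceq d$ iff $a \preceq b+c+d$ iff $a-'b \preceq c+d$ iff $(a-'b)-'c \preceq d$, so both terms have the same down-set and hence coincide. The interesting identity is (iii). I plan to prove it by showing $a + (b-'a) = \sup\{a,b\}$ in the natural order: the upper bound property follows from $0 \preceq b-'a$ and from a trivial application of (2), while for the least-upper-bound property any common upper bound $d$ of $a$ and $b$ can be written as $d = a + d'$, after which (2) gives $b-'a \preceq d'$, so $a+(b-'a) \preceq a+d' = d$ by monotonicity. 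Symmetry of $\sup$ then yields (iii).

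For $(3) \Rightarrow (1)$, I would first verify $a \preceq b + (a-'b)$ using (iii) together with $0 \preceq b-'a$: $a = a + 0 \preceq a + (b-'a) = b + (a-'b)$. For minimality, assume $a \preceq b+c$ and write $b+c = a+e$. Then (iv), (i), and (ii) give $a -'(b+c) = a-'(a+e) = (a-'a) -' e = 0 -' e = 0$, so $(a-'b) -' c = 0$ by a second application of (iv). Applying (iii) to the pair $(a-'b, c)$ yields $c = c + 0 = c + ((a-'b)-'c) = (a-'b) + (c -'(a-'b))$, which exhibits $a-'b \preceq c$.

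The main obstacle is identity (iii) in the step $(2) \Rightarrow (3)$. Unlike (i), (ii), and (iv), which fall directly out of the Galois-type equivalence expressed by (2), the symmetry $a + (b-'a) = b + (a-'b)$ does not follow from any single application of (2), and direct attempts to compare the two sides stall because semirings lack additive cancellation. Recognizing both sides as $\sup\{a,b\}$ is the idea that breaks the apparent deadlock, and it is also what makes the downstream step (3) $\Rightarrow$ (1) work, since reading (iii) as ``$a + (b-'a)$ dominates $a$'' is exactly what supplies the inequality $a \preceq b + (a-'b)$ needed to recover characterization (1).
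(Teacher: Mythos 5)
The paper states this proposition without proof (it only cites Amer and Amsterdamer--Deutch--Tannen), so there is no in-paper argument to compare against; judged on its own, your proof is correct and complete. The cycle $(1)\Rightarrow(2)\Rightarrow(3)\Rightarrow(1)$ works, the two background facts ($0$ is the $\preceq$-minimum and $+$ is monotone) are exactly what is needed, and the key insight --- identifying $a+(b-'a)$ as $\sup\{a,b\}$, which both yields the symmetric identity (iii) and later supplies $a\preceq b+(a-'b)$ in $(3)\Rightarrow(1)$ --- is sound: your element is an upper bound of $\{a,b\}$ and lies below every upper bound, so by antisymmetry of the natural order it is the unique least upper bound, and symmetry in $a,b$ gives (iii). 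The derivation of minimality in $(3)\Rightarrow(1)$ via $(a-'b)-'c=0$ and then reading (iii) at the pair $(c,\,a-'b)$ to exhibit $a-'b\preceq c$ is also correct. Two cosmetic slips worth fixing: in $(1)\Rightarrow(2)$ the labels are swapped --- it is the \emph{if} direction ($a\preceq b+c$ implies $a-'b\preceq c$) that is immediate from minimality, while the \emph{only if} direction is the one needing $a\preceq b+(a-'b)$ together with monotonicity; and in the proof of (iv) the sets $\{d: x\preceq d\}$ you compare are up-sets (principal filters), not down-sets. Neither affects the validity of the argument.
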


We now present several examples of important semirings with monus.

\begin{Exm}\label{bool} The two-element Boolean semiring $\mathbb{B}= (\{\top, \bot\}, \wedge, \vee, \bot, \top)$ can be expanded to a semiring with monus in which $a -b:=a\wedge b^*$, where $*$ is the standard Boolean complement.
\end{Exm}

\begin{Exm}
The semiring $\mathbb{N}=  (N, +, \cdot,  0, 1)$ of the natural numbers, also known as the \emph{bag} semiring, is a naturally ordered semiring in which the natural order coincides with the standard order $\leq$ on $N$. The bag semiring 
can be expanded to a semiring with monus in which
the monus operation is the \emph{truncated subtraction}  $a \dotdiv b$, that is,  $a \dotdiv b= a -b$ if $a\geq b$, while $a \dotdiv b =0$ if $a<b$. 
\end{Exm}

\begin{Exm}\label{latt}
Every bounded distributive lattice $\mathbb{D
}=(L,\vee,\wedge,\bot,\top)$ is a naturally ordered semiring in which $a\preceq b$ if and only if $a\vee b = b$. Such a lattice is \emph{complete} if every non-empty subset of $L$ has a smallest element. Every complete and bounded distributive lattice can be expanded to a semiring with monus in which $a-b = \min \{c: a \preceq b \vee c\}$.
Complete bounded distributed lattices expanded with monus are called \emph{Brouwerian algebras}, a term coined by   McKinsey and Tarski~\cite{MT}. A prominent example of a Brouwerian algebra  is the \emph{fuzzy} semiring $\mathbb{F}=([0,1], \max, \min,0,1)$ expanded with monus; in this case, we have that $a-b = a$ if $a > b$, while
$a-b = 0$ if $a\leq b$.
\end{Exm}

\begin{Exm}\label{doubt} The \emph{\L ukasiewicz} semiring $\mathbb{L}=([0, 1], \max, \odot,0,1)$,
where $a \odot  b= \max(a+b-1,0)$, is a naturally ordered semiring in which the natural order coincides with the standard order on $[0,1]$.
 It can be  expanded to a semiring with
 monus in which $a-b = a$ if $a>b$, while  $a-b=0$ if $a\leq b$. 
\end{Exm}


\begin{Exm} The \emph{tropical} semiring $\mathbb{T}(R) = (\mathbb{R}\cup\{\infty\},\min, +, \infty, 0)$ of the real numbers is a semiring in which the natural order is the reverse standard order on $R$, i.e., $a\preceq b$ if and only if $a\geq b$.
It can be expanded with monus by letting
$a-b = a$ if $a < b$ (in the standard order of the reals), while $a-b =  \infty$ if $a\geq b$. Similarly,  the tropical semiring $\mathbb{T}(N) = (\mathbb{N}\cup\{\infty\},\min,+,\infty,0)$ of the natural numbers can be expanded to  a semiring with monus (namely, the restriction of the previously defined operation to this semiring).

\end{Exm}

\begin{Exm} \label{exam:suciu} Let $\mathbb{N}[X]$ be the  semiring of all polynomials with variables from  a set $X=\{x_1, \dots, x_n\}$ and with coefficients from the set $N$ of non-negative integers. This semiring is called the \emph{provenance} semiring because of its use in the study of database provenance \cite{G-T}. As pointed out in~\cite[Example 9]{GP}, the provenance semiring $\mathbb{N}[X]$ can be expanded to a semiring with monus in the following way. For every tuple $\alpha=(\alpha_1,\ldots,\alpha_n) \in N^n$, let $x^\alpha$
denote the monomial
 $x_1^{\alpha_1} x_2^{\alpha_2} \dots x_n^{\alpha_n}$, where we set $x_i^0=1$.
 For a finite set $I\subseteq \mathbb{N}^n$, let $f[X] = \sum_{\alpha \in I}f_\alpha x^\alpha$ and  $g[X] = \sum_{\alpha \in I}g_\alpha x^\alpha$ be  two polynomials in $\mathbb{N}[X]$. The monus is defined by letting $f[X] - g[X] = \sum_{\alpha \in I} (f_\alpha \dotdiv g_\alpha)x^\alpha$, where $\dotdiv$ is the truncated difference on $N$, that is, $f_\alpha\dotdiv g_\alpha= f_\alpha - g_\alpha$ if $f_\alpha \geq g_\alpha$, and $f_\alpha\dotdiv g_\alpha=0$ if $f_\alpha < g_\alpha$.
\end{Exm}

It should be noted that not every naturally ordered semiring can be expanded to a semiring with monus.  An example of such a semiring can be found in \cite{Monet2016}. Here we give a different example  that was provided to us by Dan Suciu \cite{Suciu2025}. 

\begin{Exm} Consider the linear order $(L,\leq)$, where $L=\{\mathrm I, \mathrm T, \mathrm S, \mathrm C, \mathrm P\}$ 
and $ \mathrm I < \mathrm T < \mathrm S < \mathrm C < \mathrm P$. Intuitively, the elements of this linear order are \emph{security} levels, where $\mathrm I$ stands for ``inaccessible", $\mathrm T$ stands for ``top secret", $\mathrm S$ stands for ``secret", $\mathrm C$ stands for ``confidential" and $\mathrm P$ stands for ``public".

Let $K  = \{(x, s) :  x \in \{ 1,2\ldots, \}, s \in L\} \cup \{0\}$.  Equip $K$ with two  operations $+$ and
$*$ defined as follows:
\begin{itemize}
\item $0 + k  =  k + 0  =  k$, for all $k \in K$;
\item $0*k  =  k*0  =  0$,  for all $k \in K$;
\item $(x,s) + (y,s') = (x+y, \min(s,s'))$, for all $x, x'\in \{1,2,\ldots\}$ and for all $s, s' \in L$;
\item $(x,s) * (y,s') = (x\times y, \min(s,s'))$, 
for all $x, x'\in \{1,2,\ldots\}$ and for all $s, s' \in L$.
\end{itemize}
In words,
the operations $+$ and $*$ are the standard addition and multiplication on $\{1, 2, \ldots\}$, while they keep the highest security level on $L$.  

Consider the structure  ${\mathbb K}=(K, +, *, 0, (1,\mathrm P))$. It is easy to verify that $\mathbb K$ is a  commutative semiring with $0$ as the neutral element of $+$ and $(1,P)$ as the neutral element of $*$. 
Furthermore, $\mathbb K$ is naturally ordered, where the natural order $\preceq$ is as follows:
\begin{itemize}
\item $0$ is the smallest element;
\item $(x,s) \preceq (x',s')$ if and only if
$x \leq  x'$ and $s\geq s'$.
\end{itemize}
  For example, $(42, \mathrm T) \prec (43,\mathrm T)$ and $(42,\mathrm T) \prec (43,\mathrm I)$. In contrast,  $(42,\mathrm T)$ and $(42,\mathrm I)$ are incomparable, because there is no element $(x,s)\in K$ such that
  $(42,\mathrm I)+(x,s) = (42,\mathrm T)$
  or $(42,\mathrm T)+(x,s) = (42,\mathrm I)$.
  
 We now claim that $\mathbb K$ cannot be expanded to a semiring with monus. Towards a contradiction, assume that $\minus$ is a monus operation of $\mathbb K$. Then
 $(43,\mathrm I)\minus (1,\mathrm I)$ should be the smallest element of  the set 
 $$P= \{(x,s)\in K: (43,\mathrm I)\preceq (1,\mathrm  I)+(x,s)\}.$$
 However, $P$ does not have a smallest element because
 $$P= \{(42,\mathrm I), (42,\mathrm T),
 (42,\mathrm S), (42,\mathrm C), (42,\mathrm P),  (43,\mathrm I), (43,\mathrm T),(43,\mathrm S), (43,\mathrm C), (43,\mathrm P),  (44,\mathrm I), \ldots\}.$$
\end{Exm}

We now introduce the \emph{support} operation on semirings.

\begin{Def}[Semiring with support]
A \emph{semiring with support} is a structure
$\mathbb{K}=(K, +, \cdot,\mathsf{s},0,1)$  such that $(K, +, \cdot,0,1)$ is a semiring and $s$ is the unary operation on $K$ such that for every $a \in K$, 
$\mathsf{s} (a) =\begin{cases} 1 & \text{if }a \neq 0; \\ 0 & \text{otherwise.} \end{cases}$
\end{Def}

Unlike monus, every semiring can be expanded to a semiring with support (and in a unique way).

\begin{Def}
[Semiring with monus and  support]
A \emph{semiring with monus and support} is a structure $\mathbb{K}=(K, +, \cdot,-,\mathsf{s},0,1)$ such that $(K, +,\cdot,-,0,1)$ is a semiring with monus and $(K, +, \cdot,\mathsf{s},0,1)$ is a semiring with support.
\end{Def}

\section{Codd's Theorem for Relational Algebra with Monus and Support}

\subsection{Semiring-annotated Relations  and   Basic Relational Algebra}

In what follows, we assume that $D$ is a fixed denumerable set, which  will be the \emph{domain} of elements occurring in databases. For  $n\geq 1$, we write $D^n$ to denote the set of all $n$-tuples from $D$. We also write  $D^0$ to denote the singleton  $\{ <  >\}$, where $<  >$ is the empty tuple (we assume that $<  > \ \notin D$). 

\begin{Def} \label{rel-defn} Let  $\mathbb{K}=(K, +, \cdot,  0,1)$ be a semiring and let $n\geq 0$ be a non-negative integer.
\begin{itemize} 
\item An $n$-ary $\mathbb{K}$-relation $R$ is a function $R \colon D^n\longrightarrow K$ of finite support, that is to say, the set $\supp(R) := \{(d_1,\ldots,d_n)\in D^n: R(d_1,\ldots,d_n) \neq 0\}$ is finite.  We call $n$ the \emph{arity} of $R$.

\item A \emph{$\mathbb{K}$-relation} is an $n$-ary $\mathbb{K}$-relation $R$, for some $n\geq 0$.

\item If $n\geq 1$ and $A\subseteq D$, then an \emph{$n$-ary $\mathbb{K}$-relation on $A$} is an $n$-ary $\mathbb{K}$-relation $R$ with $\supp(R)\subseteq A^n$.

\end{itemize}
\end{Def}

Note that there are  $|K|$-many $0$-ary $\mathbb{K}$-relations. Indeed,  for every $a\in K$, we have the $0$-ary relation $R^0_a$ such that $R^0_a(<  >)= a$. If $a\neq 0$, then $\supp (R^0_a) = \{<  >\}$, while $\supp (R^0_0) = \emptyset$.

Next, we will define several different operations on $\mathbb{K}$-relations. Before doing so, we need to introduce some useful terminology and notation.
\begin{itemize}
\item 
Let ${\bf t}=(d_1,\ldots,d_n) \in D^n$ be an $n$-tuple of elements from $D$, where $n\geq 1$.  If $v=(i_1,\ldots,i_k)$ is a $k$-tuple of distinct indices from $\{1,\ldots, n\}$, then ${\bf t}[v]$ denotes the \emph{projection of ${\bf t}$ on $v$}, that is to say,
${\bf t}[v]=(d_{i_1},\ldots,d_{i_k}) \in D^k$. 
Observe that, if $k=n$, then ${\bf t}[v]$  is a permutation of ${\bf t}$.
\item An \emph{$n$-ary selection condition} $\theta$ is an expression built by  using (some  or all of) the variables $x_1\ldots,x_n$, equalities $x_i=x_j$, negated equalities $x_i\neq x_j$, conjunctions, and disjunctions, where $1\leq i,j\leq n$.  Every $n$-ary selection condition $\theta$ gives rise to an  $n$-ary \emph{selection function}
$P_\theta: D^n\rightarrow \{0,1\}$ such that for every $n$-tuple ${\bf t}=(d_1,\ldots,d_n)\in D^n$, we have that $P_\theta({\bf t})=1$, if the assignment $x_i\mapsto d_i$, $1\leq i \leq n$, satisfies $\theta$, while $P_\theta({\bf t})=0$, otherwise. 

Note that every $n$-ary selection condition is also an $m$-ary selection condition, for every $m\geq n$. 
\end{itemize}

\begin{Def}[Operations on $\mathbb{K}$-relations]\label{rel-op-def} Let\/ $\mathbb{K}=(K,+,\cdot, - , s, 0,1)$ be a semiring with monus and support. We define the following operations on $\mathbb{K}$-relations.
\begin{itemize}
\item {\bf Union:}   If $R_1$ and $R_2$ are two $n$-ary $\mathbb{K}$-relations, then the \emph{union}  $R_1 \cup R_2$ is the $n$-ary $\mathbb{K}$-relation defined by   
$(R_1 \cup R_2)({\bf t}) = R_1({\bf t}) + R_2({\bf t})$, for all ${\bf t} \in D^n$.
\item {\bf Difference:}
If $R_1$ and $R_2$ are two $n$-ary $\mathbb{K}$-relations, then the \emph{difference}  $R_1 \setminus R_2$ is the $n$-ary $\mathbb{K}$-relation defined by   
$(R_1 \setminus R_2)({\bf t}) = R_1({\bf t}) - R_2({\bf t})$, for all ${\bf t} \in D^n$.
\item {\bf Cartesian Product:} 
If $R_1$ is an $m$-ary $\mathbb{K}$-relation and  $R_2$ is an $n$-ary $\mathbb{K}$-relation, then the \emph{Cartesian product}  $R_1 \times R_2$ is the $(m+n)$-ary $\mathbb{K}$-relation defined by   
$(R_1 \times R_2)({\bf t}) = R_1({\bf t}[v_1]) \cdot R_2({\bf t}[v_2])$, for all ${\bf t} \in D^{m+n}$,
where
$v_1=(1,\ldots,m)$ and $v_2=(m+1,\ldots,m+n)$.
\item {\bf Projection:}
\begin{itemize}
    \item If $R$ is an $n$-ary $\mathbb{K}$-relation with $n\geq 1$ and $v=(i_1,\ldots,i_k)$   is a non-empty sequence  of distinct indices from $\{1,\ldots,n\}$, then the
\emph{projection} 
$\pi_v(R)$ is the $k$-ary $\mathbb{K}$-relation defined by
$$\pi_v(R)({\bf t}) = \sum_{{\bf w}[v]={\bf t}~ \text{and}~R({\bf w}) \neq 0} R({\bf w}), $$
 for all ${\bf t}\in D^n$.
(We take the empty sum to be  $0$.)

\item If $R$ is a $\mathbb{K}$-relation, then  the \emph{projection of $R$ on the empty sequence}
is the $0$-ary relation 
$\pi_{[  ]>}(R)$ defined by $\pi_{[  ] >}(R)(< >) = \sum_{R({\bf t})\neq 0}R({\bf t})$.
\end{itemize}

\item {\bf Selection:}
If $R$ is an $n$-ary $\mathbb{K}$-relation and  $\theta$ is an $n$-ary selection condition, then the \emph{selection $\sigma_\theta(R)$} is the $n$-ary $\mathbb{K}$-relation defined by $\sigma_\theta(R) ({\bf t}) = R({\bf t}) \cdot  P_\theta ({\bf t})$, for all ${\bf t}\in D^n$, where $P_\theta$ is the $n$-ary selection function arising from $\theta$.

 Note that $\sigma_\theta(R)$ is indeed a $\mathbb K$-relation, since $\mathrm{supp}(\sigma_\theta(R))\subseteq \mathrm{supp}(R)$, and $\mathrm{supp}(R)$ is a finite set.

\item {\bf Support:}
If $R$ is an $n$-ary $\mathbb{K}$-relation, then the \emph{support $\mathrm{supp}(R)$ of $R$} is the $n$-ary $\mathbb{K}$-relation defined by 
$\mathrm{supp}(R) ({\bf t})= \mathsf{s} (R({\bf t}))$, for all ${\bf t}\in D^n$.

\end{itemize}
\end{Def}


In Definition~\ref{rel-defn}, we defined the support $\supp(R)$ of an $n$-ary $\mathbb{K}$-relation as the set of all $n$-tuples  in $D^n$ on which $R$ takes a non-zero value, while here we just defined $\supp(R)$ to be another $\mathbb{K}$-relation. There is no inconsistency between these two definitions, because $\supp(R)$ viewed as a $\mathbb{K}$-relation takes values $0$ and $1$ only, so it coincides with the ordinary relation consisting of all $n$-tuples on which $R$ takes a non-zero value. In what follows, we will use these two views of $\supp(R)$ interchangeably. 

We now introduce the syntax of basic relational algebra, which is a formalism for expressing combinations of the operations on $\mathbb{K}$-relations that were described in Definition~\ref{rel-op-def}.  As usual, a \emph{relational schema} or, simply, a \emph{schema} is a finite sequence $\tau= (\hat{R_1}, \ldots, \hat{R_p})$ of relation symbols, where each relation symbol $\hat{R_i}$ has a designated positive integer $r_i$ as its \emph{arity}.

\begin{Def}[Basic Relational Algebra] \label{bradef} Let
$\tau= (\hat{R_1},\ldots,\hat{R_p})$ be a schema. The \emph{basic relational algebra language $\mathcal{BRA}(\tau)$ over $\tau$}  is the set of all expressions defined recursively  as follows:
\begin{itemize}
\item   Each relation symbol $\hat{R_i}$ is a $\mathcal{BRA}(\tau)$-expression of arity $r_i$, 
 $1\leq i\leq p$.
\item If  $E_1, E_2$ are $\mathcal{BRA}(\tau)$-expressions of the  same arity $n$, then $E_1 \cup E_2 $ and $E_1\setminus E_2$ are $\mathcal{BRA}(\tau)$-expressions of arity $n$. 
\item If $E_1, E_2$ are $\mathcal{BRA}(\tau)$-expressions of  arities $n, m$, respectively,  then $E_1 \times E_2 $ is a $\mathcal{BRA}(\tau)$-expression of  arity $n+k$.

\item If $E$ is a $\mathcal{BRA}(\tau)$-expression of  arity $n\geq 1$  and $v=(i_1,\ldots,i_k)$ is a sequence of distinct indices from $\{1,\ldots,n\}$,
then $\pi_v(E) $ is a $\mathcal{BRA}(\tau)$-expression of arity $k$. Furthermore, if $E$ is a $\mathcal{BRA}(\tau)$-expression of arity $n\geq 0$, then $\pi_{[]}(E)$ is a $\mathcal{BRA}(\tau)$-expression of arity $0$.

\item if $E$ is a $\mathcal{BRA}(\tau)$-expression with  arity $n$ and $\theta$ is an $n$-ary selection condition,  then $\sigma_\theta(E) $ is a $\mathcal{BRA}(\tau)$-expression of  arity $n$,

\item if $E$ is a $\mathcal{BRA}(\tau)$-expression of  arity $n$, then $\supp(E) $ is a $\mathcal{BRA}(\tau)$-expression of arity $n$.

\end{itemize}
\end{Def}

\begin{Def} [Semantics of Basic Relational Algebra] \label{algebraop}
Let $\mathbb{K}=(K,+,\cdot, -, \mathsf{s}, 0,1)$ be a semiring with monus and support, and let  $\tau= (\hat{R_1}, \ldots,\hat{R_p})$ be a schema.
\begin{itemize}
\item A  \emph{$\mathbb{K}$-database over $\tau$} is a finite sequence $\mathbf{I}=(R_1,\ldots,R_p)$ of\/ $\mathbb{K}$-relations such that the arity of each $\mathbb{K}$-relation $R_i$ matches the arity of the  relation symbol $\hat{R_i}$, $1\leq i\leq p$.  

We say that $\mathbf I$ is \emph{non-trivial}  
if at least one of the relations $R_i$ of $\mathbf I$
has non-empty support.

In what follows, we make the blanket assumption that all 
$\mathbb K$-databases considered are non-trivial.

\item If $E$ is a $\mathcal{BRA}(\tau)$-expression of arity $n$ and\/ $\mathbf{I}= (R_1,\ldots,R_p)$ is a $\mathbb{K}$-database, then $E^\mathbf{I}$ is the $n$-ary $\mathbb{K}$-relation obtained by evaluating $E$ on $\mathbf{I}$ according to the definitions of the operations on $\mathbb{K}$-relations. For the sake of completeness, we spell out the precise definition of  the semantics of $\mathcal{BRA}$-expressions:
\begin{itemize}
    \item $(\hat{R_i})^\mathbf{I}= R_i$,
    where $1\leq i\leq p$. 
    \item $(E_1\cup  E_2)^\mathbf{I}= E_1^\mathbf{I}\cup E_2^\mathbf{I}$
    and $(E_1\setminus   E_2)^\mathbf{I}= E_1^\mathbf{I}\setminus E_2^\mathbf{I}$,  where $E_1$ and $E_2$ are $\mathcal{BRA}$-expressions of the same arity.
    \item $(E_1\times  E_2)^\mathbf{I}= E_1^\mathbf{I}\times E_2^\mathbf{I}$.
    \item $(\pi_v(E))^\mathbf{I}= \pi_v(E^\mathbf{I}) $ and $(\pi_{[]}(E))^\mathbf{I}= \pi_{[]}(E^\mathbf{I})$.
    \item $(\sigma_\theta(E))^\mathbf{I}=
    \sigma_\theta(E^\mathbf{I})$.
    \item $(\supp(E))^\mathbf{I}= \supp(E^\mathbf{I})$.
    
\end{itemize}
\end{itemize}
\end{Def}

To simplify the notation in relational algebra expressions, we will often use $R_i$ to denote a relation symbol, instead of $\hat{R_i}$. Also, when the schema $\tau$ is understood from the context, we will write
$\mathcal{BRA}$, instead of
$\mathcal{BRA}(\tau)$, and will talk about $\mathcal{BRA}$-expressions, instead of $\mathcal{BRA}(\tau)$-expressions.

\begin{Rmk} [Necessity of Empty Projections]
Codd~\cite{Codd} did not allow for empty projections $\pi_{[]}(R)$ in the definition of relational algebra. If empty projections are disallowed, then all relational algebra expressions have positive arity, thus they cannot express $0$-ary queries; in particular, they cannot express queries expressible by first-order sentences. In subsequent expositions of relational algebra (e.g., in~\cite{AbiVi}), empty projections are allowed, which is the choice we made here.

\end{Rmk}

\begin{Rmk}[Necessity of Support]\label{rem:support}
In Codd's original paper on the relational data model~\cite{DBLP:journals/cacm/Codd70}, the support operation was not among the operations on relations considered. There is a good reason for this: on the Boolean semiring, we have that $\supp(E)=E$ holds for every relational algebra expression $E$.
On an arbitrary semiring $\mathbb{K}$, though,  the support is needed as a separate operation, because, in general, the support is not expressible using the other relational algebra operations.
For example, 
it is easy to show that, for the fuzzy semiring
$\mathbb{F}=([0, 1], \max, \min,0,1)$, the support operation cannot be expressed in terms of the other five relational algebra operations. Indeed, assume that $\tau$ consists of a single relation symbol $R$ of arity $1$. Let $\mathbf{I}$ be the $\mathbb{F}$-database in which $R$ is the  $\mathbb{F}$-relation with  $R(a)=0.5$ and $R(b)=0$, for all $b\neq a$.  A straightforward induction shows that if $E$ a relational algebra expression (not involving  support) over $\tau$, then  $E(\mathbf{I})$ is an $\mathbb{F}$-relation such that $E^\mathbf{I}({\bf t}) \in [0,0.5]$ if ${\bf t} = (a,\ldots,a)$, and $E^\mathbf{I}({\bf t})=0$ if ${\bf t}\neq (a,\ldots,a)$. Thus, $E^\mathbf{I}\neq \supp(R)$.  This shows that $\mathsf{s}$ is not definable from the other operations in the fuzzy semiring.
\end{Rmk}

\subsection{Basic Relational Calculus}

In this section, we introduce a first-order language called \emph{basic relational calculus}, which has existential quantification and  
a limited form of negation, but not universal quantification. 
Let $\tau=(\hat{R_1},\ldots \hat{R_p})$ be a schema. The building blocks of the \emph{basic relational calculus over $\tau$}, denoted by $\mathcal{BRC}(\tau)$, consist of a countable set $V=\{x_1,\ldots, x_n, \ldots \}$ of variables, 
a distinguished binary relation $=$ (equality), the relation symbols of $\tau$,
three binary connectives $\wedge$ (`and'), $\vee$ (`or') and $\butnot$ (`but not'), 
a unary connective $\nabla$ (`it is not false that'), and the existential  quantifier $\exists$ (`there exists'). 
In what follows, we will  write
$y_1,\ldots, y_i,\ldots$ to denote \emph{meta-variables}, i.e., each $y_i$ is one of the variables in the set $V$. 

We begin by rigorously defining the  syntax of 
$\mathcal{BRC}(\tau)$-formulas and also the notion of  the free variables of a $\mathcal{BRC}(\tau)$-formula. 
\begin{Def}  \label{brc-synt-def}
Let $\tau =(\hat{R_1},\ldots \hat{R_p})$ be a schema. The \emph{basic relational calculus language $\mathcal{BRC}(\tau)$ over $\tau$}  is the set of all expressions defined recursively  as follows:
\begin{itemize}
\item If $y_i$ and $y_j$ are variables, then
the expression $y_i=y_j$ is a $\mathcal{BRC}(\tau)$-formula with $y_i$ and $y_j$ as its free variables.
\item If $\hat{R_i}$ is a relation symbol  of arity $r_i$ and if $y_1,\ldots,y_{r_i}$ are variables, then
the expression $\hat{R_i}(y_1,\ldots,y_{r_i})$ is a $\mathcal{BRC}(\tau)$-formula with $y_1,\ldots,y_{r_i}$ as its free variables.
\item If $\varphi$ is a $\mathcal{BRC}(\tau)$-formula with $y_1,\ldots,y_n$ as its free variables and if $\psi$ is a $\mathcal{BRC}(\tau)$-formula with $y_{n+1},\ldots,y_{n+k}$ as its free variables, then the expressions
$\varphi \vee \psi$, $\varphi \wedge \psi$, and $\varphi \butnot \psi$ are $\mathcal{BRC}(\tau)$-formulas, and each of them has $y_1,\ldots,y_n,y_{n+1},\ldots,y_{n+k}$ as its free variables. Note that   the sets $\{y_1,\ldots,y_n\}$ and $\{y_{n+1},\ldots, y_{n+k}\}$ need not be disjoint, so when we write $y_1,\ldots,y_n,y_{n+1},\ldots,y_{n+k}$ we assume that repeated occurrences have been eliminated from the list.
\item  If $\varphi$ is a $\mathcal{BRC}(\tau)$-formula with $y_1,\dots,y_n$ as its free variables,
then the expression $\nabla \varphi$ is
 a $\mathcal{BRC}(\tau)$-formula with $y_1,\dots,y_n$ as its free variables.
\item If $\varphi$ is a $\mathcal{BRC}(\tau)$-formula with $y_1,\dots,y_n$ as its free variables, then for every $i$ with $1\leq i\leq n$, then the expression
$\exists y_i\varphi$ is a  $\mathcal{BRC}(\tau)$-formula with the variables $y_1,\ldots,y_n$ other than $y_i$ as its free variables.
\end{itemize}
From now on,  the notation $\varphi(y_1,\ldots,y_n)$  denotes a $\mathcal{BRC}(\tau)$-formula $\varphi$ whose free variables are $y_1,\ldots,y_n$. 
\end{Def}

\begin{Def}[$\mathbb{K}$-structures and assignments] Let\/ $\mathbb{K}=(K,+,\cdot,-,\mathsf{s},0,1)$ be a semiring with monus and support, and 
let  $\tau= (\hat{R_1}, \ldots, \hat{R_p})$ be a schema.
\begin{itemize}
\item A \emph{$\mathbb{K}$-structure over $\tau$} is a tuple $\mathbf{A}= (A,R_1, \dots, R_p)$, 
where $A$ is a non-empty subset of $D$, called the \emph{universe} of $\mathbf{A}$,
and each $R_i$ is a $\mathbb{K}$-relation on $A$ 
whose arity matches the arity  of the  relation symbol $\hat{R_i}$,
$1\leq i\leq p$.

We say that $\mathbf{A}$ is a  \emph{finite $\mathbb{K}$-structure} if its  universe $A$  is a finite set.

\item An \emph{assignment} on $\mathbf{A}$ is a function $\alpha\colon V\longrightarrow A$, assigning to every variable  an element in $A$.
\item If $\alpha$ is an assignment on $\mathbf{A}$ and $b$ is an element of $A$, then we write $\alpha[x_i/b]$ for the assignment $\beta$ on $\mathbf{A}$ such that
$\beta(x_i)=b$ and $\beta(x_j) = \alpha(x_j)$, for every variable $x_j \neq  x_i$. 

\end{itemize}
\end{Def}


In what follows, we make the blanket assumption that all $\mathbb{K}$-structures considered are finite, unless explicitly stated otherwise.
We are now ready to give semiring semantics to $\mathcal{BRC}(\tau)$-formulas.

\begin{Def}\label{FOL-sem-def}
Let\/ $\mathbb{K}=(K,+, \cdot, -, \mathsf{s}, 0,1)$ be a semiring with monus and support,  
let $\tau=(\hat{R_1},\ldots, \hat{R_p})$ be a schema, and let $\varphi$ be a $\mathcal{BRC}(\tau)$-formula.
The \emph{semantics of $\varphi$ on $\mathbb{K}$} is a function $\|\varphi\|$ that takes as input a $\mathbb{K}$-structure  $\mathbf{A}=(A,R_1,\ldots,R_p)$ and an assignment  $\alpha$ on $\mathbf{A}$, and returns   a  value $\|\varphi\|(\mathbf{A}, \alpha)$ in $K$ that is defined recursively and simultaneously for all assignments as follows: 
\begin{itemize}

\item $\|x_i=x_j\|(\mathbf{A},\alpha) = 
\begin{cases} 1 & \text{if } \alpha(x_i) = \alpha(x_j) ; \\ 0 & \mbox{otherwise.} \end{cases}$

\item $ \|\hat{R_i} (x_{i_1},\ldots,x_{i_n})\|(\mathbf{A}, \alpha) = R_i(\alpha(x_{i_1}), \ldots \alpha(x_{i_n}))$, for $1\leq i\leq p$.

\item $\|\varphi_1 \wedge \varphi_2\|(\mathbf{A},\alpha) = \|\varphi_1\|(\mathbf{A}, \alpha) \cdot \| \varphi_2\|(\mathbf{A}, \alpha)$.

\item $\|\varphi_1 \vee \varphi_2\|(\mathbf{A},\alpha) = \|\varphi_1\|(\mathbf{A}, \alpha)+ \|\varphi_2\|(\mathbf{A}, \alpha)$.

\item $|\varphi_1 \butnot  \varphi_2\|(\mathbf{A},\alpha) = \|\varphi_1\|(\mathbf{A}, \alpha) - \| \varphi_2\|(\mathbf{A}, \alpha)$.

\item  $\|\nabla \varphi \|(\mathbf{A},\alpha) = \mathsf{s} (\|\varphi\|(\mathbf{A}, \alpha))  $.

\item $\|\exists x_i\varphi\| (\mathbf{A}, \alpha) = \sum_{b\in A}\| \varphi\|(\mathbf{A}, \alpha[x_i/b])$.
\end{itemize}

\end{Def}

Note that the semantics of the existential quantifier is well defined because of our blanket assumption that $\mathbf{A}$ has a finite universe. The semantics is still meaningful for $\mathbb{ K}$-structures with infinite universes, provided the semiring $\mathbb{K}$ considered has additional closure properties. In particular, this holds true when $\mathbb{K}$ is a complete bounded distributive lattice. 

Using a straightforward induction, one can show that the value
$\|\varphi\|(\mathbf{A},\alpha)$ depends only on the values of the assignment $\alpha$ on the free variables of $\varphi$. More precisely, we have the following lemma.

\begin{Lem}[Relevance Lemma] \label{rel-lem} 
Let $\tau$ be a schema and $\varphi(y_1,\ldots,y_n)$ be a $\mathcal{BRC}(\tau)$-formula whose free variables are $y_1,\ldots,y_n$, and let $\mathbf{A}$ be a $(\tau)$-structure. If $\alpha$ and $\beta$ are two assignments on $\mathbf{A}$ such that
$\alpha(y_i)= \beta(y_i)$ for every free variable $y_i$ of $\varphi$, then
$\|\varphi\|(\mathbf{A},\alpha)= 
\|\varphi\|(\mathbf{A},\beta)$.
\end{Lem}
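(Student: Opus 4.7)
The plan is to proceed by structural induction on $\varphi$, following the recursive clauses of Definition~\ref{brc-synt-def} and matching them against the corresponding semantic clauses of Definition~\ref{FOL-sem-def}. In each case, the free-variable set of the formula is determined syntactically, so I would first note what that set is and then check that the semantic clause sees the assignments $\alpha$ and $\beta$ only through their values on those free variables.

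For the base cases, both atomic formulas are handled directly: if $\varphi$ is $y_i = y_j$, then $\|\varphi\|(\mathbf{A},\alpha)$ is defined purely in terms of $\alpha(y_i)$ and $\alpha(y_j)$, so agreement of $\alpha$ and $\beta$ at $y_i,y_j$ gives equal values; if $\varphi$ is $\hat{R_i}(y_1,\ldots,y_{r_i})$, then the value is $R_i(\alpha(y_1),\ldots,\alpha(y_{r_i}))$, which again depends only on the assignment at the free variables. For the binary connectives $\wedge$, $\vee$, $\butnot$, the free variables of $\varphi_1 \circ \varphi_2$ are the union of the free variables of $\varphi_1$ and $\varphi_2$, so $\alpha$ and $\beta$ in particular agree on the free variables of each subformula; the induction hypothesis then yields $\|\varphi_j\|(\mathbf{A},\alpha)=\|\varphi_j\|(\mathbf{A},\beta)$ for $j=1,2$, and the semantic clause closes the case by applying $\cdot$, $+$, or $-$ respectively. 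The $\nabla$ case is immediate, since $\nabla\psi$ has the same free variables as $\psi$ and $\mathsf{s}$ is a well-defined unary function.

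The only step that requires a little care is the existential quantifier: if $\varphi = \exists y_i \, \psi$, its free variables are those of $\psi$ other than $y_i$. Fixing an arbitrary $b \in A$, I would compare the modified assignments $\alpha[y_i/b]$ and $\beta[y_i/b]$: they agree at $y_i$ (both send it to $b$) and they agree at every other free variable of $\psi$ (those are free in $\exists y_i\psi$ as well, so $\alpha$ and $\beta$ already agree there). Hence $\alpha[y_i/b]$ and $\beta[y_i/b]$ agree on every free variable of $\psi$, and the induction hypothesis gives $\|\psi\|(\mathbf{A},\alpha[y_i/b]) = \|\psi\|(\mathbf{A},\beta[y_i/b])$; summing over $b \in A$ completes the argument. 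I do not anticipate any serious obstacle; the main care-point is simply the free-variable bookkeeping, especially handling the edge case where $y_i$ does not actually occur free in $\psi$, so that the quantifier is vacuous and $\alpha[y_i/b]$, $\beta[y_i/b]$ must still be shown to agree on the free variables of $\psi$.
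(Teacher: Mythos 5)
Your proposal is correct and is exactly the "straightforward induction" that the paper invokes without spelling out: a structural induction over the clauses of Definition~\ref{brc-synt-def}, with the only nontrivial bookkeeping occurring at the existential quantifier, which you handle properly via the modified assignments $\alpha[y_i/b]$ and $\beta[y_i/b]$. No gaps; note only that under the paper's syntax the quantified variable $y_i$ is required to be free in $\psi$, so the vacuous-quantifier edge case you flag does not actually arise.
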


  If $\varphi(y_1,\ldots,y_n)$ is a formula and $(a_1,\ldots,a_n)$ is a sequence of elements from the universe $A$ of a $\mathbb{K}$-structure $\mathbf{A}$, then we will write
 $\|\varphi\|(\mathbf{A},(a_1,\ldots,a_n))$ to denote the value $\|\varphi\|(\mathbf{A}, \alpha)$, where $\alpha$ is any assignment on $\mathbf{A}$ such that
$\alpha(y_1)=a_1,\ldots,\alpha(y_n)=a_n$. By the preceding Relevance Lemma, this value is well defined. Furthermore, if $\varphi$ is a \emph{sentence} (i.e., $\varphi$ has no free variables), we will write $\|\varphi\|(\mathbf{A},< >)$ to denote $\|\varphi\|(\mathbf{A}, \alpha)$, where $\alpha$ is an arbitrary assignment; note that in this case, $\|\varphi\|(\mathbf{A},\alpha)$ is the same value, independently of the assignment $\alpha$.

\begin{Def} \label{brc-eval-def}
Let\/ $\mathbb{K}$ be a semiring with monus and support, and let $\tau$ be a schema.
\begin{itemize}
    \item
If $\varphi(y_1,\ldots,y_n)$ is a $\mathcal{BRC}(\tau)$-formula whose free variables are $y_1,\ldots,y_n$  and if $\mathbf{A}$ is a $\mathbb{K}$-structure, then $\varphi^\mathbf{A}\colon D^n\longrightarrow K$ is the $n$-ary $\mathbb{K}$-relation defined as follows:

$\varphi^\mathbf{A}(a_1,\ldots,a_n) =
\begin{cases} \|\varphi\|(\mathbf{A},(a_1,\ldots,a_n))& \mbox{if $(a_1,\ldots,a_n)\in A^n$} \\
0 & \mbox{if $(a_1,\ldots,a_n) \notin A^n$} \end{cases}$

Note that $\varphi^\mathbf{A}$ is actually a $\mathbb{K}$-relation on $A$ since, obviously, $\supp(\varphi^\mathbf{A})\subseteq A^n$. 

\item If $\varphi$ is a $\mathcal{BRC}(\tau)$-sentence  and if $\mathbf{A}$ is a $\mathbb{K}$-structure, then $\varphi^\mathbf{A}\colon D^0\longrightarrow K$ is the $0$-ary $\mathbb{K}$-relation such that  $\varphi^{\mathbf A}(<>) =   \|\varphi\|(\mathbf{A},<>)$.
\end{itemize}
\end{Def}

The next proposition describes some useful properties of the semantics of $\mathcal{BRC}$-formulas that will be used in the proofs of the main results.

\begin{Pro}\label{pro:basic}
Let $\tau=(\hat{R_1},\ldots \hat{R_p})$ be a schema. The following statements are true.
\begin{enumerate}
\item Consider the atomic formula
     $\hat{R_i}(y_1,\ldots,y_n)$, where $\hat{R_i}$ is one of the relation symbols of the schema $\tau$,  $n$ is the arity of $\hat{R_i}$, and the $y_i$'s are distinct variables.
    If $\mathbf{A}=(A,R_1\ldots,R_p)$ is a $\mathbb{K}$-structure, then   $(\hat{R_i}(y_1,\ldots,y_n))^\mathbf{A}= R_i$.
    \item Let $\varphi_1(y_1,\ldots,y_n)$ and $\varphi_2(y_1,\ldots,y_n)$ be two $\mathcal{BRC}(\tau)$-formulas with the same free variables. If  $\mathbf{A}=(A,R_1\ldots,R_p)$ is a $\mathbb{K}$-structure, then
    \begin{enumerate}
\item $(\varphi_1 \vee \varphi_2)^\mathbf{A} = \varphi_1^\mathbf{A} \cup \varphi_2^\mathbf{A}$.
\item  $(\varphi_1 \butnot \varphi_2)^\mathbf{A} = \varphi_1^\mathbf{A} \setminus \varphi_2^\mathbf{A}$.
\end{enumerate}
 
\item Let $\varphi_1(y_1,\ldots,y_n)$ and $\varphi_2(y_{n+1},\ldots,y_{n+m})$ be two $\mathcal{BRC}(\tau)$-formulas, where
$y_1,\ldots,y_{n+m}$ are distinct variables. If $\mathbf{A}$ is a $\mathbb{K}$-structure, then
$(\varphi_1\wedge \varphi_2)^\mathbf{A}=
\varphi_1^\mathbf{A}\times \varphi_2^\mathbf{A}$.
\item Let $\varphi_1(y_1,\ldots,y_n)$ be a $\mathcal{BRC}(\tau)$-formula. If $\mathbf{A}=(A,R_1\ldots,R_p)$ is a $\mathbb{K}$-structure, then $(\nabla \varphi)^\mathbf{A} = \supp (\varphi^\mathbf{A}) $.
\item Let $\varphi(y_1,\ldots,y_n)$ be a $\mathcal{BRC}(\tau)$-formula, let $i$ be an index with $1\leq i\leq n$, and let $v=(i_1,\ldots,i_{n-1})$ be the sequence of distinct indices from $\{1,\ldots,n\}\setminus \{i\}$ ordered by the standard ordering of the natural numbers.
If $\mathbf{A}=(A,R_1\ldots,R_p)$ is a $\mathbb{K}$-structure, then $(\exists y_i \varphi)^\mathbf{A} = \pi_v(\varphi^\mathbf{A})$. 
     \end{enumerate}
\end{Pro}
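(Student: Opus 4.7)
The plan is to verify each of the five items by unpacking the definitions (Definitions~\ref{FOL-sem-def}, \ref{brc-eval-def}, and \ref{rel-op-def}) and checking equality pointwise. In every case I would split into two subcases according to whether the tuple at which we are evaluating lies in the appropriate Cartesian power of the universe $A$: if it does, the desired equation follows directly from the matching clauses in the definitions; if it does not, both sides reduce to $0$ for structural reasons (the semantic convention of Definition~\ref{brc-eval-def} on one side, and the fact that all the $\mathbb{K}$-relations produced have support contained in the corresponding power of $A$ on the other).

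For item (1), distinctness of $y_1,\ldots,y_n$ means that no coordinate collapses, so the equality is immediate from the clause for atomic formulas in Definition~\ref{FOL-sem-def} and the fact that $R_i$ is a $\mathbb{K}$-relation on $A$, hence vanishes outside $A^n$. For items (2a) and (2b), the definitions of $\cup$ and $\setminus$ on $\mathbb{K}$-relations are precisely the pointwise applications of $+$ and $-$, so the two sides coincide on $A^n$; outside $A^n$, both sides are $0$, using $0 - 0 = 0$ from Proposition~\ref{monusprop} for the monus case. Item (4) is essentially a restatement: the semantics of $\nabla$ applies $\mathsf{s}$ pointwise, which is exactly the definition of $\supp$.

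The two cases calling for a touch more care are (3) and (5). For (3), I would appeal to the Relevance Lemma (Lemma~\ref{rel-lem}) to argue that, on a tuple in $A^{n+m}$, $\|\varphi_1\|(\mathbf{A},\alpha)$ depends only on $\alpha(y_1),\ldots,\alpha(y_n)$ and $\|\varphi_2\|(\mathbf{A},\alpha)$ only on $\alpha(y_{n+1}),\ldots,\alpha(y_{n+m})$; this is where the disjointness of the two free-variable lists is used, and it exhibits the factorisation $\|\varphi_1\|\cdot\|\varphi_2\|$ matching the definition of $\varphi_1^\mathbf{A} \times \varphi_2^\mathbf{A}$. Outside $A^{n+m}$ both sides vanish. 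For (5), the quantifier sum ranges over $b \in A$, while the projection $\pi_v(\varphi^\mathbf{A})$ sums $\varphi^\mathbf{A}({\bf w})$ over all ${\bf w} \in D^n$ whose $v$-projection matches the given tuple and whose value is nonzero. The reconciliation uses $\supp(\varphi^\mathbf{A}) \subseteq A^n$, which forces every nonzero contributor ${\bf w}$ to have its $i$-th coordinate in $A$, so the two sums coincide (extra zero summands do not matter). If the free-variable tuple lies outside $A^{n-1}$, any completion of it by a value for $y_i$ lies outside $A^n$, so $\varphi^\mathbf{A}$ vanishes on it, and both sides are $0$.

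The main, though modest, obstacle is the bookkeeping in item (5): keeping the index list $v=(i_1,\ldots,i_{n-1})$ aligned with the induced free-variable list of $\exists y_i \varphi$ and verifying that the extra summands coming from $D \setminus A$ in the projection contribute nothing. Everything else is a direct match between the semantic clauses and the algebraic operations.
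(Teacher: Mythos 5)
Your proposal is correct and follows essentially the same route as the paper's proof: a pointwise verification for each item, splitting on whether the evaluated tuple lies in the relevant Cartesian power of $A$, invoking the Relevance Lemma for the conjunction/Cartesian-product case, and using $\supp(\varphi^\mathbf{A})\subseteq A^n$ to reconcile the quantifier sum over $A$ with the projection sum in item (5). No gaps.
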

\begin{proof}
The first part about the atomic formula  $\hat{R_i}(y_1,\ldots,y_n)$ follows easily from Definitions~\ref{FOL-sem-def} and~\ref{brc-eval-def}.

For the second part, let
$\varphi_1$ and $\varphi_2$ be two $\mathcal{BRC}(\tau)$-formulas with $y_1,\ldots,y_n$ as their free variables, and let $\mathbf{A}$ be a $\mathbb{K}$-structure. 
To show that
 $(\varphi_1 \vee \varphi_2)^\mathbf{A} = \varphi_1^\mathbf{A} \cup \varphi_2^\mathbf{A}$, we have to show that for every tuple $(a_1,\ldots,a_n)\in D^n$, we have 
 $(\varphi_1 \vee \varphi_2)^\mathbf{A}(a_1,\ldots,a_n) = (\varphi_1^\mathbf{A} \cup \varphi_2^\mathbf{A})(a_1,\ldots,a_n) $. We distinguish two cases for the tuple $(a_1,\ldots,a_n) \in D^n$, namely, whether or not $(a_1,\ldots,a_n) \in A^n$, where $A$ is the universe of $\mathbf{A}$.

 If $(a_1,\ldots,a_n)\in A^n$, then we have that
$$ (\varphi_1 \vee \varphi_2)^\mathbf{A}(a_1,\ldots,a_n)=
\|\varphi_1 \vee \varphi_2\|(\mathbf{A},(a_1,\ldots,a_n)) = 
\|\varphi_1\|
  (\mathbf{A}, (a_1,\ldots,a_n))+
  \| \varphi_2\| (\mathbf{A}, (a_1,\ldots,a_n)) =$$
  $$\varphi_1^\mathbf{A}(a_1,\ldots,a_n) + \varphi_2^\mathbf{A}(a_1,\ldots,a_n)= (\varphi_1^\mathbf{A} \cup \varphi_2^\mathbf{A})(a_1,\ldots,a_n),$$
  where the first equation follows from Definition~\ref{brc-eval-def}, the second  follows from the semantics of $\mathcal{BRC}(\tau)$-formulas in Definition~\ref{FOL-sem-def}, the third follows from Definition~\ref{brc-eval-def}, and the last  follows from the definition of operations on $\mathbb{K}$-relations in Definition~\ref{rel-op-def}.
  
   If $(a_1,\ldots,a_n) \not \in A^n$, then $ (\varphi_1 \vee \varphi_2)^\mathbf{A}(a_1,\ldots,a_n)=0$ and also
   $ \varphi_1^\mathbf{A}(a_1,\ldots,a_n)= 0 =  \varphi_1^\mathbf{A}(a_1,\ldots,a_n)$ by Definition~\ref{brc-eval-def}.  Therefore,
  $$(\varphi_1 \vee \varphi_2)^\mathbf{A}(a_1,\ldots,a_n) = 0 = \varphi_1^\mathbf{A}(a_1,\ldots,a_n)  + \varphi_2^\mathbf{A}(a_1,\ldots,a_n) 
  = (\varphi_1^\mathbf{A} \cup \varphi_2^\mathbf{A})(a_1,\ldots,a_n) .$$ 
   The proof that
 $(\varphi_1 \butnot \varphi_2)^\mathbf{A} = \varphi_1^\mathbf{A} \setminus \varphi_2^\mathbf{A}$ is entirely analogous to the preceding one, but with the connective $\butnot$ in the role of the connective $\vee$, and with the connective $\setminus$
 in  the role of the connective $\cup$. This completes the proof of the second part.

 For the third part, assume that $\varphi_1(y_1,\ldots,y_n)$ and $\varphi_2(y_{n+1},\ldots,y_{n+m})$ are two $\mathcal{BRC}(\tau)$-formulas, where
$y_1,\ldots,y_{n+m}$ are distinct variables 
and let $\mathbf{A}$ be a $\mathbb{K}$-structure.  To show that
$(\varphi_1\wedge  \varphi_2)^\mathbf{A}=
\varphi_1^\mathbf{A}\times \varphi_2^\mathbf{A}$, we have to show that for every tuple $(a_1,\ldots,a_{n+m})\in D^n$, we have that $$(\varphi_1\wedge \varphi_2)^\mathbf{A}(a_1,\ldots,a_{n+m})=
(\varphi_1^\mathbf{A}\times \varphi_2^\mathbf{A})(a_1,\ldots,a_{n+m}).$$ We  distinguish two cases for the tuple $(a_1,\ldots,a_n) \in D^n$, namely, whether or not $(a_1,\ldots,a_{n+m}) \in A^{n+m}$, where $A$ is the universe of $\mathbf{A}$. 
 
 If $(a_1,\ldots,a_{n+m})\in A^{n+m}$, then 
 $$(\varphi_1\wedge \varphi_2)^\mathbf{A}(a_1,\ldots,a_{n+m})=
 \|\varphi_1\wedge \varphi_2\|(\mathbf{A}, (a_1,\ldots,a_{n+m}))= $$
$$\|\varphi_1\|(\mathbf{A}, (a_1,\ldots,a_{n})) \cdot 
\|\varphi_2\|(\mathbf{A}, (a_{n+1},\ldots,a_{n+m}))=$$
$$ 
\varphi^\mathbf{A}(a_1,\ldots,a_n) \cdot
\varphi^\mathbf{A}(a_{n+1},\ldots,a_{n+m})= (\varphi_1^\mathbf{A}\times \varphi_2^\mathbf{A})(a_1,\ldots,a_{n+m})
,$$
where the first equation follows from Definition~\ref{brc-eval-def},
the second  follows 
the semantics of $\mathcal{BRC}(\tau)$-formulas in Definition~\ref{FOL-sem-def} and  from the Relevance Lemma~\ref{rel-lem}, the third  follows from Definition~\ref{brc-eval-def}, and the last  follows from the definition of the operations on $\mathbb{K}$-relations in Definition~\ref{rel-op-def}.

If $(a_1,\ldots,a_{n+m})\not \in A^{n+m}$, then $(\varphi_1\wedge \varphi_2)^\mathbf{A}(a_1,\ldots,a_{n+m})=0$ 
 by Definition~\ref{brc-eval-def}. 
Furthermore, we  have that
$(a_1,\dots,a_n)\not \in A^n$ or
$(a_{n+1},\ldots,a_{n+m}) \not \in A^m$. By Definition~\ref{brc-eval-def}, we have that
$\varphi_1^\mathbf{A}(a_1,\ldots,a_n)=0$
or $\varphi_2^\mathbf{A}(a_{n+1},\ldots,a_{n+m})=0$. In either case, we have that 
$$0= \varphi^\mathbf{A}(a_1,\ldots,a_n) \cdot
\varphi^\mathbf{A}(a_{n+1},\ldots,a_{n+m})= (\varphi_1^\mathbf{A}\times \varphi_2^\mathbf{A})(a_1,\ldots,a_{n+m}),$$  hence
$(\varphi_1\wedge \varphi_2)^\mathbf{A}(a_1,\ldots,a_{n+m})=
(\varphi_1^\mathbf{A}\times \varphi_2^\mathbf{A})(a_1,\ldots,a_{n+m})$. 
This completes the proof of the third part.

For the fourth part, assume that $\varphi(y_1,\ldots,y_n)$ is a $\mathcal{BRC}(\tau)$-formula and   $\mathbf{A}=(A,R_1\ldots,R_p)$ is a $\mathbb{K}$-structure. To show that  $(\nabla \varphi)^\mathbf{A} = \supp (\varphi^\mathbf{A}) $ we have   to show that for every tuple $(a_1,\ldots,a_{n})\in D^n$, we have that
$(\nabla \varphi)^\mathbf{A}(a_1,\ldots,a_{n}) = \supp (\varphi^\mathbf{A})(a_1,\ldots,a_{n}) $. We distinguish two cases for the tuple $(a_1,\ldots,a_n) \in D^n$, namely, whether or not $(a_1,\ldots,a_n) \in A^n$, where $A$ is the universe of $\mathbf{A}$.  If $(a_1,\ldots,a_n) \in D^n$, then, by Definition~\ref{brc-eval-def}, 
 $$(\nabla \varphi)^\mathbf{A} (a_1,\ldots,a_n)=\begin{cases} 1 & \text{if }  \varphi^\mathbf{A}(a_1,\ldots,a_n) \neq 0 \\ 0 & \text{otherwise}. \end{cases} $$
 Thus, $(\nabla \varphi)^\mathbf{A}  (a_1,\ldots,a_n)= \supp(\varphi^\mathbf{A})(a_1,\ldots,a_n)$. If $(a_1,\ldots,a_n) \notin D^n$, then, by  Definition~\ref{brc-eval-def}, both $(\nabla \varphi)^\mathbf{A}  (a_1,\ldots,a_n)=0$ and $\varphi^\mathbf{A} (a_1,\ldots,a_n)  =0$, so  $\supp(\varphi^\mathbf{A})(a_1,\ldots,a_n) = 0 =(\nabla \varphi)^\mathbf{A}  (a_1,\ldots,a_n)$ as desired.

 For the fifth part, assume that 
   $\varphi(y_1,\ldots,y_n)$ is a $\mathcal{BRC}(\tau)$-formula,  $i$ is an index with $1\leq i\leq n$, and $v=(i_1,\ldots,i_{n-1})$ is the sequence of distinct indices from $\{1,\ldots,n\}\setminus \{i\}$ ordered by the standard  ordering of the natural numbers.
 Let 
 $\mathbf{A}=(A,R_1\ldots,R_p)$ is a $\mathbb{K}$-structure. To show that $(\exists y_i \varphi)^\mathbf{A} = \pi_v(\varphi^\mathbf{A})$,  we have to show that for 
every tuple  $(a_1,\ldots,a_{n-1})\in D^{n-1}$, we have that $(\exists y_i \varphi)^{\mathbf{A}}(a_1,\ldots,a_{n-1}) = \pi_v(\varphi)^\mathbf{A}(a_1,\ldots,a_{n-1})$.
 We distinguish two cases for the tuple $(a_1,\ldots,a_{n-1}) \in D^{n-1}$, namely, whether or not $(a_1,\ldots,a_{n-1}) \in A^{n-1}$. Suppose that $(a_1,\ldots,a_{n-1}) \in A^{n-1}$.   From Definition \ref{rel-op-def} of the operations on $\mathbb K$-relations, we have that 
 $$\pi_v(\varphi^\mathbf{A})(a_1,\ldots,a_{n-1}) =
 \sum_{{\bf b}[v]= (a_1,\ldots,a_{n-1})\ \ \text{and} \ {\varphi^\mathbf{A}({\bf b}) \neq 0}} \varphi^\mathbf{A}({\bf b}). $$
 Furthermore, we have that $$ (\exists y_i \varphi)^\mathbf{A}(a_1, \dots, a_{n-1})=
  \| \exists y_i \varphi \|({\bf A}, (a_1,\ldots a_{n-1})) = 
 \sum_{\substack{{\bf b} \ \in \ A^n \\ {\bf b}[v] = (a_1,\ldots,a_{n-1})}} \| \varphi \| (\mathbf{A},{\bf b})=  
 \sum_{\substack{{\bf b} \ \in \ A^n \\ {\bf b}[v] = (a_1,\ldots,a_{n-1})}} \varphi^\mathbf{A}({\bf b}),$$
 where the first and the third equations follow from Definition \ref{brc-eval-def} and the second equation follows 
 from the semantics of $\mathcal{BRC}(\tau)$-formulas  in Definition~\ref{FOL-sem-def}.  Observe  that $\varphi_{1}^\mathbf{A}({\bf b}) \neq 0$  only if ${\bf b}$ is a tuple in $A^n$ (since,  if ${\bf b}$ is not a tuple in $A^n$, then $\varphi_{1}^\mathbf{A}({\bf b}) = 0$  by Definition~\ref{brc-eval-def}). Thus, 
$$(\exists y_i \varphi)^\mathbf{A}(a_1, \dots, a_{n-1}) = \sum_{{\bf b}[v]= (a_1,\ldots,a_{n-1})\ \ \text{and} \ {\varphi}^\mathbf{A}({\bf b}) \neq 0} \varphi^\mathbf{A}({\bf b}) = \pi_v(\varphi^{\bf A})(a_1,\ldots,a_{n-1}). $$

Suppose now that $(a_1,\ldots,a_{n-1}) \notin A^{n-1}$. By Definition~\ref{FOL-sem-def}, we have  $ (\exists y_i \varphi)^\mathbf{A}(a_1, \dots, a_{n-1})= 0$. Also, if ${\bf b} \in D^n$ is a tuple such that ${\bf b}[v]= (a_1,\ldots,a_{n-1})$, then ${\bf b}\not \in A^n$, hence, by Definition ~\ref{FOL-sem-def} again,
$\varphi^{\bf A}({\bf b}) = 0$. It follows that  $\pi_v(\varphi_{1}^\mathbf{A})(a_1,\ldots,a_{n-1}) =0$ and so 
$(\exists y_i \varphi)^\mathbf{A}(a_1, \dots, a_{n-1}) =  \pi_v(\varphi^{\bf A})(a_1,\ldots,a_{n-1})$. This completes the proof of the fifth part.
\end{proof}

\begin{Rmk} \label{rmk:neg-equal}
Let $ x_i\neq x_j$ be the formula
$(x_i= x_i) \butnot (x_i= x_j)$. It is easy to verify that for every $\mathbb{K}$-structure $\mathbf{A}$ and every two elements $a$ and $b$ in the universe of $\mathbf{A}$, we have that\\
$\|x_i\neq x_j\|(\mathbf{A}, a, b) = \begin{cases}
1 & \mbox{if $a\neq b$ }\\
0 & \mbox{if $a=b$.}
\end{cases}$

\noindent Therefore, $\mathcal{BRC}$ can express negated equalities, a fact that we will use in the sequel.
\end{Rmk}


\begin{Rmk}
In the case of the Boolean semiring $\mathbb{B}$, the connectives $\wedge, \vee, \neg$ are inter-definable with the connectives $\wedge, \vee, \butnot$. In particular, $\varphi \butnot \psi$ has the same semantics on the Boolean semiring as $\varphi \wedge \neg \psi$, while $\neg \varphi$ has the same semantics as $(x_1=x_1) \butnot \varphi$. 
This, however, does not hold for arbitrary semirings. Concretely, it is not true that on  every  complete bounded distributive lattice (see Example~\ref{latt}), 
the connectives $\wedge, \vee, \neg$  are inter-definable with the connectives  $\wedge, \vee, \butnot, \top$.  Examples where this does not hold can be extracted from~\cite{Mcc, Smiley1975-SMITIO-46} by considering the algebraic \emph{duals} (i.e.,  the algebras where one reverses the order and swaps meets with joins) of the  complete bounded distributive lattices presented there.
\end{Rmk}


\subsection{Basic Relational Algebra vs.\ Basic Relational Calculus}

In this section,  $\mathbb{K}=(K, +, \cdot,-, \mathsf{s}, 0,1)$ is a fixed  zero-sum-free semiring with monus and support, and  $\tau = (\hat{R_1}, \ldots, \hat{R_p})$ is  a fixed schema. All $\mathbb{K}$-databases and $\mathbb{K}$-structures considered are over $\tau$.

\begin{Def}[Active domain]
If\/ $\mathbf{I}=(R_1,\ldots,R_p)$ is a $\mathbb{K}$-database, then the \emph{active domain}  of\/ $\mathbf{I}$, denoted by $\mathrm{adom}(\mathbf{I})$, is the set of all elements of $D$ that occur in the support of at least one relation $R_i$ of\/ $\mathbf{I}$.
\end{Def}

Since we have made the blanket assumption that we only consider non-trivial $\mathbb K$-databases (see Definition \ref{algebraop}), 
it follows that the active domain $\mathrm{adom}({\mathbf I})$ is a non-empty set. Furthermore,
since every relation $R_i$, $1\leq i\leq p$, of $\mathbf{I}$ has finite support, it follows that $\mathrm{adom}(\mathbf{I})$ is a finite  subset of the domain $D$. Thus,  $\mathrm{adom}(\mathbf{I})$ can be identified with a unary $\mathbb{K}$-relation that takes values $0$ and $1$ only. As it turns out,  there is a $\mathcal{BRA}$-relational algebra expression that uniformly defines the active domain on every $\mathbb{K}$-database, as spelled out in the next proposition. This is a basic fact that will be repeatedly used in the sequel.

\begin{Pro} \label{pro:active} Let $\mathbb{K}$ be a zero-sum-free semiring with monus and support, and let $\tau$ be a schema. Then there is a $\mathcal{BRA}(\tau)$-expression $E_{\mathrm{adom}}$ such that 
for every  $\mathbb{K}$-database $\mathbf{I}$, we have that
$E_{\mathrm{adom}}^\mathbf{I}= \mathrm{adom}(\mathbf{I})$.
\end{Pro}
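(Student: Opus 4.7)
The plan is to build $E_{\mathrm{adom}}$ by extracting each column of each relation symbol of $\tau$, taking a big union, and finally applying the support operation so that the result is $\{0,1\}$-valued. Concretely, for each $i$ with $1 \leq i \leq p$ and each index $j$ with $1 \leq j \leq r_i$, consider the unary $\mathcal{BRA}(\tau)$-expression $E_{i,j} := \pi_{(j)}(\hat{R_i})$. I would then set
\[
E' \;:=\; \bigcup_{i=1}^{p}\bigcup_{j=1}^{r_i} E_{i,j}, \qquad E_{\mathrm{adom}} \;:=\; \mathrm{supp}(E').
\]
This is a legitimate $\mathcal{BRA}(\tau)$-expression because every relation symbol of $\tau$ has positive arity, so each projection $\pi_{(j)}(\hat{R_i})$ is defined, and unions of unary expressions are unary.

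The verification proceeds by unfolding the semantics via Definitions~\ref{rel-op-def} and~\ref{algebraop}. For every $d \in D$,
\[
E_{i,j}^{\mathbf{I}}(d) \;=\; \sum_{\substack{\mathbf{w}\in D^{r_i}\\ w_j = d,\ R_i(\mathbf{w})\neq 0}} R_i(\mathbf{w}),
\]
and hence $(E')^{\mathbf{I}}(d) = \sum_{i,j} E_{i,j}^{\mathbf{I}}(d)$. Here is the point at which the hypothesis that $\mathbb{K}$ is \emph{zero-sum-free} is crucial: a finite sum of elements of $K$ equals $0$ if and only if every summand equals $0$. Consequently, $E_{i,j}^{\mathbf{I}}(d) \neq 0$ iff some $\mathbf{w} \in \mathrm{supp}(R_i)$ has $w_j = d$, and therefore $(E')^{\mathbf{I}}(d) \neq 0$ iff $d$ occurs in some position of some tuple in the support of some $R_i$, i.e., iff $d \in \mathrm{adom}(\mathbf{I})$. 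Applying the semantics of the support operation then gives $E_{\mathrm{adom}}^{\mathbf{I}}(d) = \mathsf{s}((E')^{\mathbf{I}}(d)) = 1$ exactly when $d \in \mathrm{adom}(\mathbf{I})$ and $0$ otherwise, which is precisely the unary $\{0,1\}$-valued $\mathbb{K}$-relation identified with $\mathrm{adom}(\mathbf{I})$.

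The construction itself is short; the one delicate point to flag in the writeup is the essential use of zero-sum-freeness. Without it, the summed weights inside a projection could cancel to $0$ even when individual tuples carry nonzero annotations, so the column-extraction trick could miss elements of the active domain. The outer $\mathrm{supp}$ is needed for a dual reason: the raw sum $(E')^{\mathbf{I}}(d)$ may be a nontrivial semiring element (for example, a natural number greater than $1$ in the bag semiring), whereas the conclusion demands the canonical $\{0,1\}$-valued relation. Once these two observations are stated, the proof is merely a matter of chasing the definitions on each coordinate $d \in D$.
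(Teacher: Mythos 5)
Your construction $E_{\mathrm{adom}} = \mathrm{supp}\bigl(\bigcup_{i,j}\pi_{(j)}(\hat{R_i})\bigr)$ and the verification via zero-sum-freeness are exactly the paper's argument, just written out for a general schema rather than the paper's illustrative single binary relation symbol. The proof is correct and takes essentially the same approach.
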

\begin{proof}
For concreteness, let us assume that $\tau$ consists of a single binary relation symbol $\hat{R}$.  The argument 
presented below extends easily to arbitrary schemas with only slight modifications.

Let $E_{\mathrm {adom}}$  be the $\mathcal{BRA}(\tau)$-relational algebra expression $\supp(\pi_1(\hat{R}) \cup \pi_2(\hat{R}))$. We claim that for every $\mathbb{K}$-database, we have that $E_{\mathrm{adom}}^\mathbf{I}= \mathrm{adom}(\mathbf{I})$. 

Let $\mathbf{I}=(R)$ be a $\mathbb{K}$-database 
and let $a$ be an element of the domain $D$. First, observe that,  using the definitions of the semantics of $\mathcal{BRA}$-expressions, we have that
$$(\pi_1(\hat{R}) \cup \pi_2(\hat{R}))^\mathbf{I}(a) = 
((\pi_1(\hat{R}))^\mathbf{I}\cup (\pi_2(\hat{R}))^\mathbf{I})(a) =  \sum_{ R(a,b) \neq 0} R(a,b)~~ + \sum_{ R( b,a) \neq 0} R(b,a).$$

 If $a$ is in the active domain $\mathrm{adom}(\mathbf{I})$ of 
$\mathbf{I}$,  there is at least one element $b\in D$ such that $R(a,b) \not = 0$ or $R(b,a) \not = 0$.
Therefore, at least one summand in at least one of the two sums above is different from $0$. Since $\mathbb{K} $ is a zero-sum-free semiring, it follows  that
$\sum_{ R(a,b) \neq 0} R(a,b) ~+~ \sum_{ R( b,a) \neq 0} R(b,a)\not =0$, hence 
$(\pi_1(\hat{R}) \cup \pi_2(\hat{R}))^\mathbf{I}(a)\not = 0$. Consequently, $a\in (\supp(\pi_1(\hat{R}) \cup \pi_2(\hat{R})))^\mathbf{I}= E_{\mathrm{adom}}^\mathbf{I}$.

If $a$ is not in the active domain of $\mathbf{I}$, then each summand in both of the two sums above is equal to $0$, hence $(\pi_1(\hat{R}) \cup \pi_2(\hat{R}))^\mathbf{I}(a)=0$ and so $a \not \in (\supp(\pi_1(\hat{R}) \cup \pi_2(\hat{R})))^\mathbf{I}= E_{\mathrm{adom}}^\mathbf{I}$.
Note that in this step we did not use the hypothesis that $\mathbb{K}$ is a zero-sum-free semiring.
\end{proof}

 \begin{Rmk}
 It is important to point out that if\/ $\mathbb{K}$ is not a zero-sum-free semring, then the $\mathcal{BRA}$-expression $\supp(\pi_1(\hat{R}) \cup \pi_2(\hat{R}))$ does not define the active domain. Indeed, suppose that $k_1$ and $k_2$ are two elements in the universe $K$ of\/ $\mathbb{K}$ such that $k_1+k_2=0$. Let $a,b$ be two distinct elements of the domain $D$ and consider the $\mathbb{K}$-database  $\mathbf{I}=(R)$, where $R(a,b)=k_1$,  $R(b,a)=k_2$, and
 $R(c,d)=0$, for all other pairs $(c,d)$ from $D$. Then $\mathrm{adom}(\mathbf{I})=\{a,b\}$, but 
 $(\pi_1(\hat{R}) \cup \pi_2(\hat{R}))^\mathbf{I}(a)= k_1+k_2=0$
 and $(\pi_1(\hat{R}) \cup \pi_2(\hat{R}))^\mathbf{I}(b)=k_2+k_1=0$, hence $E^\mathbf{I}_{\mathrm{adom}}=\emptyset \not = \mathrm{adom}(\mathbf{I})$.
 \end{Rmk}
 
\begin{Def} [Query] Let $n\geq 0$. 
A $n$-ary \emph{query} $q$    is a mapping defined on $\mathbb{K}$-databases and such that on every $\mathbb{K}$-database $\mathbf{I}$,
$q$ returns an $n$-ary $\mathbb{K}$-relation  $q^\mathbf{I}$ on $\mathrm{adom}(\mathbf{I})$.
\end{Def}

Clearly, every $\mathcal{BRA}$-expression $E$ of arity $n\geq 0$ gives rise to an $n$-ary query associating with every $\mathbb{K}$-database $\mathbf{I}$ the $n$-ary $\mathbb{K}$-relation $E^\mathbf{I}$ as defined in Definition~\ref{algebraop}. 

\begin{Def}
Given a $\mathbb{K}$-database
$\mathbf{I}=(R_1,\ldots,R_p)$, we write  $\mathbf{A}(\mathbf{I})$ to denote the $\mathbb{K}$-structure with universe the active domain of\/ $\mathbf{I}$ and with relations those of\/ $\mathbf{I}$. In symbols, we have  
$\mathbf{A}(\mathbf{I})= (\mathrm{adom}(\mathbf{I}),R_1,\ldots,R_p)$. 
\end{Def}

Clearly, every $\mathcal{BRC}$-formula $\varphi$ with $n\geq 0$ free variables, gives  
to an $n$-ary query associating with every $\mathbb{K}$-database $\mathbf{I}$ the $n$-ary $\mathbb{K}$-relation $\varphi^{\mathbf{A}(\mathbf{I})}$ as defined in Definition~\ref{brc-eval-def}.

\begin{Def}[Domain Independence]
A $\mathcal{BRC}$-formula $\varphi$ is \emph{domain independent} if for every $\mathbb{K}$-database $\mathbf{I}= (R_1,\ldots,R_p)$ and  every   $\mathbb{K}$-structure $\mathbf{B}=(B,R_1,\ldots,R_p)$ with  $\mathrm{adom}(\mathbf{I}) \subseteq B \subseteq D$, we have $\varphi^\mathbf{B}= \varphi^{\mathbf{A}(\mathbf{I})}$.
\end{Def}

The next proposition describes certain properties of domain independent formulas that will be used in the proofs of the main results. 

\begin{Pro}\label{pro:dom-ind}
Let $\tau=(\hat{R_1},\ldots \hat{R_p})$ be a schema. The following statements are true.

\begin{enumerate}
\item Let $\varphi_1(y_1,\ldots,y_n)$ and $\varphi_2(y_1,\ldots,y_n)$ be two $\mathcal{BRC}(\tau)$-formulas with the same free variables. If $\varphi_1$ and $\varphi_2$ are domain independent $\mathcal{BRC}$-formulas, then $\varphi_1 \vee \varphi_2$
and $\varphi_1 \butnot \varphi_2$ are domain independent.

\item If $\varphi_1(y_1,\ldots,y_n)$ and $\varphi_2(y_{n+1},\ldots,y_{n+m})$ are two domain independent $\mathcal{BRC}(\tau)$-formulas,\\ where
$y_1,\ldots,y_{n+m}$ are distinct variables, then
$\varphi_1\wedge \varphi_2$ is domain independent.

\item If $\varphi_1(y_1,\ldots,y_n)$ is a domain independent $\mathcal{BRC}(\tau)$-formula, then $\nabla \varphi$ is domain independent.

\item If $\varphi(y_1,\ldots,y_n)$ is a $\mathcal{BRC}(\tau)$-formula that is domain independent, then $\exists y_i \varphi$ is 
domain independent.

\end{enumerate}

\end{Pro}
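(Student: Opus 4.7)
The plan is to derive each of the four closure properties as a direct consequence of Proposition~\ref{pro:basic}, which has already packaged the semantics of every connective and of the existential quantifier as the corresponding relational algebra operation on the $\mathbb K$-relations of the component subformulas. Fix a $\mathbb K$-database $\mathbf I=(R_1,\ldots,R_p)$ and an arbitrary $\mathbb K$-structure $\mathbf B=(B,R_1,\ldots,R_p)$ with $\mathrm{adom}(\mathbf I) \subseteq B \subseteq D$; in every case the task reduces to verifying that the composite formula $\psi$ satisfies $\psi^{\mathbf B} = \psi^{\mathbf A(\mathbf I)}$.

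For parts (1), (2), and (3), I would apply the appropriate clause of Proposition~\ref{pro:basic} at both $\mathbf B$ and $\mathbf A(\mathbf I)$ and then substitute using the domain independence of the subformulas. For instance, for disjunction, Proposition~\ref{pro:basic} part (2) gives $(\varphi_1 \vee \varphi_2)^{\mathbf B} = \varphi_1^{\mathbf B} \cup \varphi_2^{\mathbf B}$, and the hypothesis $\varphi_i^{\mathbf B} = \varphi_i^{\mathbf A(\mathbf I)}$ together with the same clause applied at $\mathbf A(\mathbf I)$ yields $(\varphi_1 \vee \varphi_2)^{\mathbf B} = (\varphi_1 \vee \varphi_2)^{\mathbf A(\mathbf I)}$. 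The argument is entirely analogous for $\butnot$, for $\wedge$ applied to formulas with disjoint free variables (using clause (3)), and for $\nabla$ (using clause (4)); the operations $\cup$, $\setminus$, $\times$, and $\mathrm{supp}$ on $\mathbb K$-relations make no reference to any universe, so there is nothing further to check.

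Part (4), existential quantification, is the only potentially delicate case, since the raw semantics of $\exists y_i \varphi$ on $\mathbf B$ involves a sum over $b \in B$ whose range depends on $B$. The key observation, however, is that Proposition~\ref{pro:basic} part (5) already collapses this apparent dependence: it identifies $(\exists y_i \varphi)^{\mathbf B}$ with the projection $\pi_v(\varphi^{\mathbf B})$, and projection on $\mathbb K$-relations is a purely semiring-theoretic operation that refers only to the $\mathbb K$-relation $\varphi^{\mathbf B}$ and not to any choice of universe. Domain independence of $\varphi$ then gives $\varphi^{\mathbf B} = \varphi^{\mathbf A(\mathbf I)}$, whence $(\exists y_i\varphi)^{\mathbf B} = \pi_v(\varphi^{\mathbf B}) = \pi_v(\varphi^{\mathbf A(\mathbf I)}) = (\exists y_i \varphi)^{\mathbf A(\mathbf I)}$. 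I anticipate no real obstacle: the genuine work has already been carried out in Proposition~\ref{pro:basic}, and once its universe-agnostic identities are available, domain independence propagates through each of the four constructs by simple substitution.
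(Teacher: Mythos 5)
Your proposal is correct and follows essentially the same route as the paper's proof: each closure property is obtained by applying the relevant clause of Proposition~\ref{pro:basic} at both $\mathbf{B}$ and $\mathbf{A}(\mathbf{I})$ and substituting via the domain independence of the subformulas, with the existential case handled exactly as you describe through the projection identity of clause~(5). Nothing is missing.
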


\begin{proof}
For the first part, we give the proof for the case of the connective $\vee$. The proof for the case of the connective $\butnot$ is entirely analogous.
Let $\mathbf{I}=(R_1,\ldots,R_p)$ be a $\mathbb{K}$-database and let $\mathbf{B}=(B,R_1,\ldots,R_p)$ be a $\mathbb{K}$-structure with $\mathrm{adom}(\mathbf{I})\subseteq B\subseteq D$.
By the domain independence of $\varphi_1$ and $\varphi_2$, we have that
$\varphi_1^\mathbf{B}= \varphi_1^{\mathbf{A}(\mathbf{I})} $ and $\varphi_2^\mathbf{B}= \varphi_2^{\mathbf{A}(\mathbf{I})} $.
By combining these facts with the second part of Proposition~\ref{pro:basic}, we have that 
$(\varphi_1\vee \varphi_2)^\mathbf{B}=
\varphi_1^\mathbf{B}\cup  \varphi_2^\mathbf{B}= \varphi_1^{\mathbf{A}(\mathbf{I})} \cup  \varphi_2^{\mathbf{A}(\mathbf{I})} = (\varphi_1\vee \varphi_2)^{\mathbf{A}(\mathbf{I})}
$, which establishes the domain independence of the formula $\varphi_1 \vee \varphi_2$. 

For the second part, let $\mathbf{I}=(R_1,\ldots,R_p)$ be a $\mathbb{K}$-database and let $\mathbf{B}=(B,R_1,\ldots,R_p)$ be a $\mathbb{K}$-structure with $\mathrm{adom}(\mathbf{I})\subseteq B\subseteq D$. By the domain independence of $\varphi_1$ and $\varphi_2$, we have that
$\varphi_1^\mathbf{B}= \varphi_1^{\mathbf{A}(\mathbf{I})} $ and $\varphi_2^\mathbf{B}= \varphi_2^{\mathbf{A}(\mathbf{I})} $. By combining these facts with the third part of Proposition~\ref{pro:basic}, we have that $(\varphi_1\wedge \varphi_2)^\mathbf{B}=
\varphi_1^\mathbf{B}\times \varphi_2^\mathbf{B}= 
\varphi_1^{\mathbf{A}(\mathbf{I})}\times \varphi_2^{\mathbf{A}(\mathbf{I})} =(\varphi_1\wedge \varphi_2)^{\mathbf{A}(\mathbf{I})}$. 

For the third part, let $\mathbf{I}=(R_1,\ldots,R_p)$ be a $\mathbb{K}$-database and let $\mathbf{B}=(B,R_1,\ldots,R_p)$ be a $\mathbb{K}$-structure with $\mathrm{adom}(\mathbf{I})\subseteq B\subseteq D$.  By the domain  independence of $\varphi$, we have that $\varphi^{\mathbf{A}(\mathbf{I})} = \varphi^\mathbf{B}$. By the domain  independence of $\varphi$, we have that $\varphi^{\mathbf{A}(\mathbf{I})} = \varphi^\mathbf{B}$. By combining these facts with the fourth part of Proposition~\ref{pro:basic}, we have that $(\nabla \varphi)^\mathbf{B} = \supp (\varphi^\mathbf{B}) = \supp(\varphi^{\mathbf{A}(\mathbf{I})}) = (\nabla \varphi)^{\mathbf{A}(\mathbf{I})}$. 

For the fourth part, let $\mathbf{I}=(R_1,\ldots,R_p)$ be a $\mathbb{K}$-database and let $\mathbf{B}=(B,R_1,\ldots,R_p)$ be a $\mathbb{K}$-structure with $\mathrm{adom}(\mathbf{I})\subseteq B\subseteq D$. By the domain  independence of $\varphi$, we have that $\varphi^{\mathbf{A}(\mathbf{I})} = \varphi^\mathbf{B}$. By combining these facts with the fifth part of Proposition~\ref{pro:basic}, we have that $(\exists y_i \varphi)^\mathbf{B} = \pi_v(\varphi^\mathbf{B}) = \pi_v(\varphi^{\mathbf{A}(\mathbf{I})}) = (\exists y_i \varphi)^{\mathbf{A}(\mathbf{I})}$,
where $v=(i_1,\ldots,i_{n-1})$ is the sequence of distinct indices from $\{1,\ldots,n\}\setminus \{i\}$ ordered by the standard order on $N$. \qedhere
\end{proof}

\begin{Rmk} \label{rmk:dom-ind} It should be pointed out that, in the first part of Proposition \ref{pro:dom-ind}, the assumption that the $\mathcal{BRC}(\tau)$-formulas
$\varphi_1$ and $\varphi_2$ have the same free variables is of the essence. 

To see this, assume that
$\hat{R_1}$ is a unary relation symbol and $\hat{R_2}$ is a binary relation symbol. Let $\varphi(y_1,y_2)$ be the
$\mathcal{BRC}(\tau)$-formula $\hat{R_1}(y_1)\vee \hat{R_2}(y_1,y_2)$, which is the disjunction of the domain independent $\mathcal{BRC}(\tau)$-formulas
$\hat{R_1}(y_1)$ and $\hat{R_2}(y_1,y_2)$.  We claim that $\varphi(y_1,y_2)$ is not domain independent. Let ${\mathbf I}=(R_1,R_2)$ be the $\mathbb K$-database such that $R_1(a) = 1$, $R_1(c) = 0$ if $c\not =a$, $R_2(a,a) =1$, $R_2(c,d)=0$ if $(c,d)\not = (a,a)$. Clearly, $\varphi^{\mathbf{A}(\mathbf{I})}(a,a)=1+1=1$ and $\varphi^{\mathbf{A}(\mathbf{I})}(c,d)=0$ for $(c,d)\not = (a,a)$. Consider the $\mathbf K$-structure
${\mathbf B}=(\{a,b\}, R_1,R_2)$, where $b\not =a$.  Then $\varphi^{\mathbf B}(a,b)= R_1(a) + R_2(a,b)= 1+0 = 1$. Thus,
$\varphi^{\mathbf{A}(\mathbf{I})}\not = \varphi^{\mathbf B}$, hence the formula $\varphi(y_1,y_2)$ is not domain independent.
\end{Rmk}

We are now ready to state and prove the first main theorem connecting basic relational algebra to basic relational calculus.

{\begin{Thm}\label{Codd1} Let\/ $\mathbb{K}$ be a zero-sum-free semiring with monus and support, let $\tau=(\hat{R_1},\ldots \hat{R_p})$ be a schema, and let $q$ be an $n$-ary query. The following
statements are equivalent:
\begin{itemize}
\item[1.] There is a $\mathcal{BRA}(\tau)$-expression $E$ of arity $n$ such that
$q^\mathbf{I}= E^\mathbf{I}$, for every $\mathbb{K}$-database  $\mathbf{I}$.
\item[2.] There is a domain independent $\mathcal{BRC}(\tau)$-formula $\varphi(y_1,\ldots,y_n)$ such
that $q^\mathbf{I}= \varphi^{\mathbf{A}(\mathbf{I})}$, for every $\mathbb{K}$-database  $\mathbf{I}$.
\item[3.] There is a $\mathcal{BRC}(\tau)$-formula
$\varphi(y_1,\ldots,y_n)$ such that $q^\mathbf{I}= \varphi^{\mathbf{A}(\mathbf{I})}$, for every $\mathbb{K}$-database  $\mathbf{I}$.
\end{itemize}
\end{Thm}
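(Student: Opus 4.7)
My plan is to establish the cycle $(1) \Rightarrow (2) \Rightarrow (3) \Rightarrow (1)$, with the last implication carrying the bulk of the technical work.

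For $(1) \Rightarrow (2)$ I proceed by induction on the structure of a $\mathcal{BRA}(\tau)$-expression $E$ of arity $n$, producing a domain independent $\mathcal{BRC}(\tau)$-formula $\varphi_E(y_1,\ldots,y_n)$ such that $\varphi_E^{\mathbf{A}(\mathbf{I})} = E^\mathbf{I}$ for every $\mathbb{K}$-database $\mathbf{I}$. The base case $E = \hat{R_i}$ is translated to the atomic formula $\hat{R_i}(y_1,\ldots,y_{r_i})$, whose value on any $\mathbb{K}$-structure is $R_i$ by the first part of Proposition~\ref{pro:basic}; this immediately yields domain independence. The inductive steps mirror each algebraic operation by the corresponding logical connective: $\cup$ by $\vee$, $\setminus$ by $\butnot$, $\supp$ by $\nabla$, $\times$ by $\wedge$ (after renaming free variables so they are disjoint), the projection $\pi_v$ by successive existential quantifications over the variables not listed in $v$, and the selection $\sigma_\theta$ by a conjunction with a $\mathcal{BRC}$-formula obtained by rewriting $\theta$ in terms of equalities and negated equalities as in Remark~\ref{rmk:neg-equal}. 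Correctness of each step follows from Proposition~\ref{pro:basic} and preservation of domain independence from Proposition~\ref{pro:dom-ind}; the selection step needs a short direct verification, since the conjoined equality formula shares variables with $\varphi_E$ so Proposition~\ref{pro:dom-ind} item 2 does not literally apply, but the combined formula still vanishes outside $\mathrm{adom}(\mathbf{I})^n$ whenever $\varphi_E$ does.

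The implication $(2) \Rightarrow (3)$ is immediate, since a domain independent formula witnessing $(2)$ also witnesses $(3)$ simply by being evaluated on $\mathbf{A}(\mathbf{I})$.

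For $(3) \Rightarrow (1)$ I induct on the structure of a $\mathcal{BRC}(\tau)$-formula $\varphi(y_1,\ldots,y_n)$, constructing a $\mathcal{BRA}(\tau)$-expression $E_\varphi$ of arity $n$ such that $E_\varphi^\mathbf{I} = \varphi^{\mathbf{A}(\mathbf{I})}$ for every $\mathbb{K}$-database $\mathbf{I}$. The pivotal ingredient is the expression $E_{\mathrm{adom}}$ of Proposition~\ref{pro:active}, which uniformly computes the active domain and is used both to interpret equality atoms and to pad tuple positions corresponding to free variables absent from a given subformula. Atomic formulas translate as follows: $y_i = y_j$ with $i \neq j$ becomes $\sigma_{x_1 = x_2}(E_{\mathrm{adom}} \times E_{\mathrm{adom}})$; $y_i = y_i$ becomes $E_{\mathrm{adom}}$; and $\hat{R_i}(y_{i_1},\ldots,y_{i_{r_i}})$ becomes an expression built from $\hat{R_i}$ by selections that enforce equalities among repeated variables and a projection that reorders coordinates according to the chosen order of the free variables. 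For $\vee$ and $\butnot$ whose subformulas have different free-variable sets, I pad each $E_{\varphi_j}$ by Cartesian product with $E_{\mathrm{adom}}$ in the positions corresponding to variables absent from $\varphi_j$, so that both expressions share the same arity and variable ordering, and then apply $\cup$ or $\setminus$ respectively. For $\wedge$ I use $\pi_v(\sigma_\theta(E_{\varphi_1} \times E_{\varphi_2}))$, where $\theta$ equates the two copies of every shared variable and $v$ retains one occurrence of each. The connective $\nabla$ translates to $\supp$, and $\exists y_i$ translates to the projection $\pi_v$ onto the sequence of remaining indices. Correctness at each step uses Proposition~\ref{pro:basic}, together with the observation that $\varphi^{\mathbf{A}(\mathbf{I})}$ is supported on $\mathrm{adom}(\mathbf{I})^n$, so padding by $E_{\mathrm{adom}}$ does not alter the semantics on tuples from the active domain and leaves both sides equal to $0$ outside it.

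The main obstacle is the bookkeeping of free variables in $(3) \Rightarrow (1)$: while the $\mathcal{BRC}$ syntax permits arbitrary sharing and reordering of free variables across subformulas, the $\mathcal{BRA}$ operations are strictly positional. Translating the binary connectives, and especially conjunction with overlapping variables, requires a careful interplay of Cartesian product, padding with $E_{\mathrm{adom}}$, selection to identify shared variables, and projection to reorder and eliminate duplicated coordinates. Throughout, the definability of $\mathrm{adom}(\mathbf{I})$ by the expression $E_{\mathrm{adom}}$ crucially depends on the zero-sum-free hypothesis on $\mathbb{K}$.
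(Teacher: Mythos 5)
Your proposal follows essentially the same route as the paper: the cycle $(1)\Rightarrow(2)\Rightarrow(3)\Rightarrow(1)$ with the two nontrivial implications done by structural induction, using Propositions~\ref{pro:basic} and~\ref{pro:dom-ind} for $(1)\Rightarrow(2)$ (including the same direct verification for the selection step, where the shared variables block a modular appeal to Proposition~\ref{pro:dom-ind}), and using $E_{\mathrm{adom}}$ from Proposition~\ref{pro:active} for padding, equality atoms, and variable bookkeeping in $(3)\Rightarrow(1)$. The only deviations are cosmetic (e.g., dropping the outer $\supp$ in the translation of $y_i=y_j$, which is harmless since $E_{\mathrm{adom}}$ is $\{0,1\}$-valued, and a slightly more explicit treatment of atomic formulas with repeated variables), so the argument is correct and matches the paper's proof.
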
}

\begin{proof}
The implication  $(2) \implies (3)$ is obvious.
The implications $(1) \implies (2)$  and $(3) \implies (1)$ are established  by induction on the construction of $\mathcal{BRA}(\tau)$-expressions and of $\mathcal{BRC}(\tau)$-formulas, respectively.  In some  steps in the proof of the implication $(3) \implies (1)$, we  will make certain simplifying assumptions about the arity of the relational algebra expressions and the pattern of free variables of the relational calculus formulas. In each such step, the proof extends to the general case with minor modifications only.

\smallskip

\noindent $(1) \implies (2):$  We proceed by induction on the construction of $\mathcal{BRA}(\tau)$-expressions. For every such expression $E$,  we produce a $\mathcal{BRC}(\tau)$-formula $\varphi_E$ such that $E^\mathbf{I}=\varphi^{\mathbf{A}(\mathbf{I})}_E$, for every $\mathbb{K}$-database $\mathbf{I}$.  Moreover, in each case, we show  that the $\mathcal{BRC}(\tau)$-formula $\varphi_E$ is domain independent.
\begin{itemize}
 \item Assume that $E$ is one of the relation symbols $\hat{R_i}$, $1\leq i\leq p$.
In this case, we take  $\varphi_E$ to be the atomic formula $\hat{R_i}(x_1,\ldots,x_n)$, where $n$ is the arity of $\hat{R_i}$. To see why this works, 
let $\mathbf{I}=(R_1,\ldots,R_p)$ be a $\mathbb{K}$-database. From Definition~\ref{algebraop} of the semantics of $\mathcal{BRA}(\tau)$-expressions  and  the first part of Proposition~\ref{pro:basic}, we have  
 $E^\mathbf{I}=R_i= (\hat{R_i}(x_1,\ldots,x_n))^{\mathbf{A}(\mathbf{I})}=\varphi_E^{\mathbf{A}(\mathbf{I})}$, which establishes the correctness of the formula $\varphi_E$. Furthermore, the first part of Proposition~\ref{pro:basic} immmediately implies the domain independence of the formula $\varphi_E$.

\item Assume that $E$ is a
$\mathcal{BRA}(\tau)$-expression of the form
$E_1 \cup E_2$ or of the form $E_1\setminus E_2$, where $E_1$ and $E_2$ are $\mathcal{BRA}(\tau)$-expressions of the same arity, say $n\geq 0$. By induction hypothesis, there are domain independent  $\mathcal{BRC}(\tau)$-formulas $\varphi_1$
and $\varphi_2$ each with $n$ free variables such that for every $\mathbb{K}$-database $\mathbf{I}$, we have that $E_i^\mathbf{I}= \varphi_i^{\mathbf{A}(\mathbf{I})}$, for $i=1,2$. If $E$ is $E_1 \cup E_2$,  we take $\varphi_E$ to be the $\mathcal{BRC}(\tau)$-formula $\varphi_1
\vee \varphi_2$; if $E$ is $E_1\setminus E_2$, we take $\varphi_E$ to be the $\mathcal{BRC}(\tau)$-formula $\varphi_1
\butnot \varphi_2$.
The correctness of the formula $\varphi_E$ follows from the semantics of $\mathcal{BRA}(\tau)$-expressions in Definition~\ref{algebraop}, the inductive hypothesis, and Proposition~\ref{pro:basic}. Specifically, if $\mathbf{I}$ is a $\mathbb{K}$ database, then $E^\mathbf{I}= (E_1\cup E_2)^\mathbf{I}= E_1^\mathbf{I}\cup E_2^\mathbf{I}= 
\varphi_1^{\mathbf{A}(\mathbf{I})} \cup
\varphi_2^{\mathbf{A}(\mathbf{I})}= (\varphi_1 \vee \varphi_2)^{\mathbf{A}(\mathbf{I})}=\varphi_E^{\mathbf{A(I})}$, as desired. The proof for the case of the difference operation $\setminus$ is entirely analogous.
The domain independence of the formula $\varphi_E$ follows from the inductive hypothesis and the first part of Proposition~\ref{pro:dom-ind}.

\item Assume that 
$E$ is a $\mathcal{BRA}(\tau)$-expression of the form $E_1 \times E_2$, where 
$E_1$ is a $\mathcal{BRA}(\tau)$-expression of arity $n$ and $E_2$ is a $\mathcal{BRA}(\tau)$-expression of arity $m$. By induction hypothesis, there are domain independent formulas $\mathcal{BRC}(\tau)$-formulas $\varphi_1$ and $\varphi_2$ such that $\varphi_1$ has $n$ free variables, $\varphi_2$ has $m$ free variables, and for every  for every $\mathbb{K}$-database $\mathbf{I}$, we have that $E_i^\mathbf{I}= \varphi_i^{\mathbf{A}(\mathbf{I})}$, for $i=1,2$. By renaming variables if necessary, we may assume that $\varphi_1$ and $\varphi_2$ have no free variables in common, so they are of the form $\varphi_1(y_1,\ldots,y_n)$ and $\varphi_2(y_{n+1},\ldots,y_{n+m})$, where $y_1,\ldots,y_{n+m}$ are distinct variables. We take $\varphi_E$ to be the $\mathcal{BRC}(\tau)$-formula 
$\varphi_1(y_1,\ldots,y_n) \wedge  \varphi_2(y_{n+1},\ldots,y_{n+m})$, which has $n+m$ free variables.

To show the correctness of the formula $\varphi_E$, we have to show
that if $\mathbf{I}$ is a $\mathbb{K}$-database, then $E^\mathbf{I}=\varphi_E^{\mathbf{A}(\mathbf{I})}$. This holds because $$E^\mathbf{I}= E_1^\mathbf{I}\times E_2^\mathbf{I}= \varphi_1^{\mathbf{A}(\mathbf{I})}\times
\varphi_2^{\mathbf{A}(\mathbf{I})}=
(\varphi_1\wedge \varphi_2)^{\mathbf{A}(\mathbf{I})}= \varphi_E^{\mathbf{A}(\mathbf{I})},$$
where the first equation follows from the semantics of $\mathcal{BRA}(\tau)$-expressions in Definition~\ref{algebraop}, the second follows from the inductive hypothesis, and  third follows from the third part of Proposition~\ref{pro:basic}, and the fourth follows from the definition of $\varphi_E$. The domain independence of the formula $\varphi_E$ follows by the inductive hypothesis and the second part of Proposition \ref{pro:dom-ind}.

\item Assume that $E$ is a $\mathcal{BRA}(\tau)$-expression of the form $\pi_v(E_1)$, where $E_1$ is an $n$-ary $\mathcal{BRA}(\tau)$-expression and $v=(i_1,\ldots,i_k)$ is a sequence of distinct indices from $\{1,\ldots,n\}$.
By induction hypothesis, there is a 
domain independent $\mathcal{BRC}(\tau)$-formula $\varphi_1(y_1,\ldots,y_n)$ such that for every $\mathbb{K}$-database $\mathbf{I}$, we have 
$E_1^\mathbf{I}=\varphi_1^{\mathbf{A}(\mathbf{I})}$. In this case, we take
$\varphi_E$ to be the $\mathcal{BRC}(\tau)$-formula $\exists y_{j_1}\ldots \exists y_{j_{n-k}}\varphi_1$, where $j_1,\ldots,j_{n-k}$ are the indices in $\{1,\ldots,n\}$ that do not appear in the sequence $v=(i_1,\ldots,i_k)$. Note that the free variables of $\varphi_E$ are
$y_{i_1},\ldots,y_{i_k}$.
For example, if
$E$ is of the form $\pi_{1,3}(E_1)$, where $E_1$ has arity $4$, then
$\varphi_E$ is $\exists y_2\exists y_4\varphi_1$ and its free variables are $y_1,y_3$.

To show the correctness of the formula $\varphi_E$, we have to show that if $\mathbf{I}$ is a $\mathbb{K}$-database, then $E^\mathbf{I}=\varphi_E^{\mathbf{A}(\mathbf{I})}$. This holds because 
$$E^\mathbf{I} = \pi_v(E^\mathbf{I}_1) = \pi_v(\varphi_1^\mathbf{\mathbf{A}(\mathbf{I})}) = (\exists y_{j_1}\ldots \exists y_{j_{n-k}}\varphi_1)^\mathbf{\mathbf{A}(\mathbf{I})} =  \varphi_E^\mathbf{\mathbf{A}(\mathbf{I})},$$
where the first equation follows from the semantics of $\mathcal{BRA}(\tau)$-expressions in Definition~\ref{algebraop}, the second follows from the inductive hypothesis, the third follows by iterating the fifth part of Proposition~\ref{pro:basic} to handle the string of quantifiers $\exists y_{j_1}\ldots \exists y_{j_{n-k}}$, and the fourth follows from the definition of $\varphi_E$. The domain independence of the formula $\varphi_E$ follows from the inductive hypothesis and the fourth part of Proposition \ref{pro:dom-ind}.

\item Assume that  $E$ is a $\mathcal{BRA}(\tau)$-expression of the form  $\sigma_\theta(E_1)$, where $\theta$ is a  selection condition and $E_1$ is an $n$-ary $\mathcal{BRA}(\tau)$-expression. By induction hypothesis, there is a 
domain independent $\mathcal{BRC}(\tau)$-formula $\varphi_1(y_1,\ldots,y_n)$ such that for every $\mathbb{K}$-database $\mathbf{I}$, we have 
$E_1^\mathbf{I}=\varphi_1^{\mathbf{A}(\mathbf{I})}$.  We take $\varphi_E$  to be the formula  $\varphi_{1} \wedge \nabla (\varphi_\theta)$, where  $\varphi_1(y_1, \dots, y_n)$ is the formula given by the inductive hypothesis for $E_1$, and 
$\varphi_\theta$ is a $\mathcal{BRC}(\tau)$-formula expressing the selection condition $\theta$ (recall that $\mathcal{BRC}(\tau)$ can express negated equalities, as spelled out in Remark \ref{rmk:neg-equal}).








To show the correctness of the formula $\varphi_E$, we have to show that if $\mathbf{I}$ is a $\mathbb{K}$-database, then $E^\mathbf{I}=\varphi_E^{\mathbf{A}(\mathbf{I})}$. Let $(a_1,\ldots,a_{n}) \in \mathrm{adom}(\mathbf{I})^n$. It is not difficult to see that $P_\theta(a_1,\ldots,a_{n}) =  (\nabla\varphi_\theta)^{\mathbf{A}(\mathbf{I})}(a_1,\ldots,a_{n})$.
The reason is that,
if $p,q \in \{0,1\}$, then $\mathsf{s}(p+ q)=\begin{cases} 1 & \text{if } p=1 \text{ or } q=1\\ 0 & \text{otherwise} \end{cases}$ and $\mathsf{s}(p\cdot q)=\begin{cases} 1 & \text{if } p=q=1\\ 0 & \text{otherwise} \end{cases}$, hence   disjunction and conjunction in  $(\nabla\varphi_\theta)$ are expressed by means of addition and product prefixed by $\mathsf{s}$. Now, by induction hypothesis, we have that $E_1^\mathbf{I}(a_1,\ldots,a_{n}) =  \varphi_{1}^{\mathbf{A}(\mathbf{I})}(a_1,\ldots,a_{n}) $, hence $ \varphi_{1}^{\mathbf{A}(\mathbf{I})}(a_1,\ldots,a_{n})  \cdot  \varphi_\theta^{\mathbf{A}(\mathbf{I})}(a_1,\ldots,a_{n})  = E_1^\mathbf{I}(a_1,\ldots,a_{n}) \cdot   P_\theta(a_1,\ldots,a_{n})= \sigma_\theta(E_1)^\mathbf{I}(a_1,\ldots,a_{n})$. If $(a_1,\ldots,a_{n})\notin \mathrm{adom}(\mathbf{I})^n$, by induction hypothesis, $E_1^\mathbf{I}(a_1,\ldots,a_{n}) = \varphi_{1}^{\mathbf{A}(\mathbf{I})}(a_1,\ldots,a_{n})= 0 $, and thus $$\varphi_{1}^{\mathbf{A}(\mathbf{I})}(a_1,\ldots,a_{n})  \cdot  \varphi_\theta^{\mathbf{A}(\mathbf{I})}(a_1,\ldots,a_{n})= 0= E_1^\mathbf{I}(a_1,\ldots,a_{n}) \cdot  P_\theta(a_1,\ldots,a_{n})=\sigma_\theta(E_1)^\mathbf{I}(a_1,\ldots,a_{n}).$$

Finally, we show that the formula $\varphi_E$ is domain independent. Let $\mathbf{I}=(R_1,\ldots,R_p)$ be a $\mathbb{K}$-database and let $\mathbf{B}=(B,R_1,\ldots,R_p)$ be a $\mathbb{K}$-structure with $\mathrm{adom}(\mathbf{I})\subseteq B\subseteq D$. First, observe that  we have  $ \varphi_\theta^\mathbf{B}(a_1,\ldots,a_{n}) = \varphi_\theta^{\mathbf{A}(\mathbf{I})}(a_1,\ldots,a_{n}) $. Furthermore, by induction hypothesis, we have that $\varphi_{1}^\mathbf{B}(a_1,\ldots,a_{n}) = \varphi_{1}^{\mathbf{A}(\mathbf{I})}(a_1,\ldots,a_{n})$.  Thus, $$\varphi_{1}^\mathbf{B}(a_1,\ldots,a_{n}) \cdot \varphi_\theta^\mathbf{B}(a_1,\ldots,a_{n}) = \varphi_{1}^{\mathbf{A}(\mathbf{I})}(a_1,\ldots,a_{n})\cdot \varphi_\theta^{\mathbf{A}(\mathbf{I})}(a_1,\ldots,a_{n}),$$
which entails that  $\varphi_E^\mathbf{B}(a_1,\ldots,a_{n}) = \varphi_E^{\mathbf{A}(\mathbf{I})}(a_1,\ldots,a_{n})$, as desired. If $(a_1,\ldots,a_{n}) $ is not a tuple from $\mathrm{adom}(\mathbf{I})$, since by induction hypothesis  $\varphi_{1}^\mathbf{B}(a_1,\ldots,a_{n}) = \varphi_{1}^{\mathbf{A}(\mathbf{I})}(a_1,\ldots,a_{n})$ and $\varphi_{1}^{\mathbf{A}(\mathbf{I})}(a_1,\ldots,a_{n}) =0$, we must have that  $\varphi_E^\mathbf{B}(a_1,\ldots,a_{n}) = 0=\varphi_E^{\mathbf{A}(\mathbf{I})}(a_1,\ldots,a_{n})$, as desired.

\item Assume that $E$ is a $\mathcal{BRA}(\tau)$-expression of the form $\text{supp}(E_1)$, where $E_1$ is an $n$-ary $\mathcal{BRA}(\tau)$-expression. By induction hypothesis, there is a 
domain independent $\mathcal{BRC}(\tau)$-formula $\varphi_1(y_1,\ldots,y_n)$ such that for every $\mathbb{K}$-database $\mathbf{I}$, we have 
$E_1^\mathbf{I}=\varphi_1^{\mathbf{A}(\mathbf{I})}$. We take
$\varphi_E$ to be the $\mathcal{BRC}(\tau)$-formula $\nabla \varphi_{1}$.

To show the correctness of the formula $\varphi_E$, we have to show that if $\mathbf{I}$ is a $\mathbb{K}$-database, then $E^\mathbf{I}=\varphi_E^{\mathbf{A}(\mathbf{I})}$. This holds because 
$$E^\mathbf{I} = \supp(E^\mathbf{I}_1) = \supp(\varphi_1^\mathbf{\mathbf{A}(\mathbf{I})}) = (\nabla \varphi_{1})^\mathbf{\mathbf{A}(\mathbf{I})} =  \varphi_E^\mathbf{\mathbf{A}(\mathbf{I})},$$
where the first equation follows from the semantics of $\mathcal{BRA}(\tau)$-expressions in Definition~\ref{algebraop}, the second follows from the inductive hypothesis,  the third follows from the fourth part of Proposition~\ref{pro:basic}, and the fourth follows from the definition of $\varphi_E$. The domain independence of the formula $\varphi_E$ follows from the inductive hypothesis and the fourth part of Proposition \ref{pro:dom-ind}.
\end{itemize}

$(3) \implies (1):$ We proceed by induction on the size of the $\mathcal{BRC}(\tau)$-formula $\varphi$.
\begin{itemize}
\item Assume that $\varphi (y_1, y_2)$ is a $\mathcal{BRC}(\tau)$-formula of the form $y_1= y_2$. In this case, we take $E_\varphi$ to be $$\text{supp}(\sigma_\theta(E_{\mathrm{adom}} \times E_{\mathrm{adom}} )),$$ where $\theta$ is $x_1=x_2$ and thus 

$P_\theta (a,b)=\begin{cases} 1 & \text{if } a=b \\ 0 & \text{otherwise}.\end{cases}$

Observe that  for $a,b$ in  $\mathrm{adom}(\mathbf{I})$, if $a=b$, we have that $E_{\mathrm{adom}}^\mathbf{I}(a) \neq 0$, $E_{\mathrm{adom}}^\mathbf{I}(b) \neq 0$, and thus $E_{\mathrm{adom}}^\mathbf{I}(a) \cdot E_{\mathrm{adom}}^\mathbf{I}(b) \neq 0$. Furthermore, since $P_\theta (a,b)=1$, we have that $$(\sigma_\theta(E_{\mathrm{adom}} \times E_{\mathrm{adom}}))^\mathbf{I}(a,b) = (E_{\mathrm{adom}} \times E_{\mathrm{adom}} )^\mathbf{I}(a,b),$$
which means that
$$(\sigma_\theta(E_{\mathrm{adom}} \times E_{\mathrm{adom}}))^\mathbf{I}(a,b) \neq 0,$$
hence $E_\varphi^\mathbf{I}(a,b) =1$. Similarly, if $a\neq b$, we have that $E_\varphi^\mathbf{I}(a,b) =0$.  Thus, if $a,b \in \mathrm{adom}(\mathbf{I})$, we have that $E_\varphi^\mathbf{I}(a,b) = (y_1= y_2)^{\mathbf{A}(\mathbf{I})}(a,b)$. Now, if $a \not \in \mathrm{adom}(\mathbf{I})$, then  $(y_1= y_2)^{\mathbf{A}(\mathbf{I})}(a,b)=0$ by definition and  $E_{\mathrm{adom}}^\mathbf{I}(a) = 0$, which entails  that $E_\varphi^\mathbf{I}(a,b)= 0$. The case in which  $b \not \in  \mathrm{adom}(\mathbf{I})$ is identical.

\item Assume that $\varphi (y_1, \ldots,y_n)$ is  a $\mathcal{BRC}(\tau)$-formula of the form $\hat{R_i}(y_1, \ldots,y_n)$, where $\hat{R_i}$ is one of the relation symbols of $\tau$. In this case, we take $E_\varphi$ to be simply the relation symbol $\hat{R_i}$. The correctness of $E_\varphi$ follows from Definitions \ref{algebraop}, \ref{FOL-sem-def}, and \ref{brc-eval-def}.

\item
 Assume that $\varphi (y_1, y_2, y_3, y_4)$ is a $\mathcal{BRC}(\tau)$-formula of the form is $\psi(y_1,  y_2, y_4) \butnot \chi(y_2, y_3)$. In this case, we take $E_\varphi$ to be $$\pi_{1,2,4,3}(E_\psi \times E_{\mathrm{adom}}) \setminus \pi_{1,3,4,2}( E_{\mathrm{adom}} \times  E_{\mathrm{adom}} \times E_\chi ). $$

For all elements $a_1,a_2,a_3, a_4  \in D$, we have that $E_\varphi^\mathbf{I}(a_1,a_2, a_3, a_4) =$\\ $=(\pi_{1,2,4,3}(E_\psi \times E_{\mathrm{adom}}))^\mathbf{I}(a_1,a_2, a_3, a_4) - (\pi_{1,3,4,2}( E_{\mathrm{adom}} \times  E_{\mathrm{adom}} \times E_\chi ))^\mathbf{I}(a_1,a_2, a_3, a_4).$

If $a_1,a_2,a_3, a_4 \in \mathrm{adom}(\mathbf{I})$, then
 $$(\pi_{1,2,4,3}(E_\psi \times E_{\mathrm{adom}}))^\mathbf{I}(a_1,a_2, a_3, a_4) =\sum_{{\bf w}[1,2,4,3]=a_1,a_2, a_3, a_4~ \text{and}~(E_\psi \times E_{\mathrm{adom}})^\mathbf{I}({\bf w}) \neq 0} (E_\psi \times E_{\mathrm{adom}})^\mathbf{I}({\bf w})$$
 and then where ${\bf w}[1,2,4,3]=a_1,a_2, a_3, a_4$ (and there is exactly one such ${\bf w}$, namely $(a_1, a_2, a_4, a_3)$), $$(\pi_{1,2,4,3}(E_\psi \times E_{\mathrm{adom}}))^\mathbf{I}(a_1,a_2, a_3, a_4)=(E_\psi \times E_{\mathrm{adom}})^\mathbf{I}({\bf w}) =
E_\psi^\mathbf{I}(a_1,a_2, a_4) \cdot E_{\mathrm{adom}}^\mathbf{I}(a_3)  $$
and $E_\psi^\mathbf{I}(a_1,a_2, a_4) \cdot E_{\mathrm{adom}}(a_3) = E_\psi^\mathbf{I}(a_1,a_2, a_4) \cdot 1 = \psi^{\mathbf{A}(\mathbf{I})}(a_1,a_2, a_4),$\\
 where the last equality follows from the  induction hypothesis, so
 $$(\pi_{1,2,4,3}(E_\psi \times E_{\mathrm{adom}}))^\mathbf{I}(a_1,a_2, a_3, a_4) = \psi^{\mathbf{A}(\mathbf{I})}(a_1,a_2, a_4).$$
 Similarly, $(\pi_{1,3,4,2}( E_{\mathrm{adom}} \times  E_{\mathrm{adom}} \times E_\chi ))^\mathbf{I}(a_1,a_2, a_3, a_4)= \chi^{\mathbf{A}(\mathbf{I})}(a_2,a_3).$
 Thus, when\\$a_1,a_2,a_3, a_4 \in \mathrm{adom}(\mathbf{I})$, we have that
$$E_\varphi^\mathbf{I}(a_1,a_2, a_3, a_4)  =  \psi^{\mathbf{A}(\mathbf{I})}(a_1,a_2, a_4) - \chi^{\mathbf{A}(\mathbf{I})}(a_2,a_3) =\varphi^{\mathbf{A}(\mathbf{I})}(a_1,a_2, a_3, a_4),$$
as desired. If at least one of $a_1,a_2,a_3, a_4$ is not in $\mathrm{adom}(\mathbf{I})$, then, by Definition \ref{brc-eval-def}, we have that $\psi^{\mathbf{A}(\mathbf{I})}(a_1,a_2, a_4)=0 $ or $ \chi^{\mathbf{A}(\mathbf{I})}(a_2,a_3)=0$; hence, by induction hypothesis,  $E_\psi^\mathbf{I}(a_1,a_2, a_4)=0 $ or $0 = E_\chi^\mathbf{I}(a_2, a_3)$. Suppose then that $a_1 \notin \mathrm{adom}(\mathbf{I})$, but
$a_2,a_3 \in \mathrm{adom}(\mathbf{I})$ and
$\chi^{\mathbf{A}(\mathbf{I})}(a_2,a_3)\neq 0 $ (the other cases are similar). Thus, 
$\psi^{\mathbf{A}(\mathbf{I})}(a_1,a_2, a_4)=0 = E_\psi^\mathbf{I}(a_1,a_2, a_4) $, but  $ E_\chi^\mathbf{I}(a_2,a_3)=\chi^{\mathbf{A}(\mathbf{I})}(a_2,a_3)\neq 0 $. Now, 
 $$E_{\mathrm{adom}}^\mathbf{I}(a_1) \cdot  E_{\mathrm{adom}}^\mathbf{I}(a_4) \cdot E_\chi^\mathbf{I}(a_2, a_3) = 0 \cdot E_{\mathrm{adom}}^\mathbf{I}(a_4)  \cdot E_\chi^\mathbf{I}(a_2, a_3)= 0,$$
 and $E_\psi^\mathbf{I}(a_1,a_2, a_4) \cdot E_{\mathrm{adom}}^\mathbf{I}(a_3) = 0 \cdot E_{\mathrm{adom}}^\mathbf{I}(a_3)= 0.$ Thus,
$$(\pi_{1,2,4,3}(E_\psi \times E_{\mathrm{adom}}))^\mathbf{I}(a_1,a_2, a_3, a_4) = 0 = (\pi_{1,3,4,2}( E_{\mathrm{adom}} \times  E_{\mathrm{adom}} \times E_\chi ))^\mathbf{I}(a_1,a_2, a_3, a_4).$$

Then, $E_\varphi(a_1,a_2, a_3, a_4) = 0$ since $0-0=0$, and 
by Definition \ref{brc-eval-def},  $\varphi^{\mathbf{A}(\mathbf{I})}(a_1,a_2,a_3, a_4) = 0$. Thus, $E_\varphi^\mathbf{I}(a_1,a_2, a_3, a_4) = \varphi^{\mathbf{A}(\mathbf{I})}(a_1,a_2,a_3, a_4)$, as desired.

 Note that this inductive step is the first place where the formula $\varphi$ may have no free variables. We deal with this case by producing an expression $E_\varphi$ that defines a $0$-ary query. We argue along the same lines we just  did above, but this time all we have to show is that $\varphi^{\mathbf{A}(\mathbf{I})}(< >) = E_\varphi^\mathbf{I}(< >)$. The details are left to the reader. This same comment applies to the remaining  inductive steps when the formula $\varphi$ has no free variables.
\item 
Assume that $\varphi (y_1, y_2, y_3, y_4)$ is a $\mathcal{BRC}(\tau)$-formula of the form $\psi(y_1,  y_2, y_4) \wedge \chi(y_2, y_3)$. Without loss of generality, assume that the variable $y_5$ does not occurr in the formulas $\psi$ and $\chi$. Let 
$\chi'$ be the formula obtained from $\chi$ by renaming every occurrence of the variable $y_2$ in $\chi$ $y_5$, thus the free variables of $\chi'$ are $y_5$ and $y_3$.
We take $E_\varphi$ to be $$\pi_{1,2,5,3}(\sigma_{\theta}(E_\psi \times E_{\chi'})), $$
where $\theta$ is the selection condition $y_2=y_5$. 

For all elements $a_1,a_2,a_3, a_4 \in D$, we have that  $$E_\varphi^\mathbf{I}(a_1,a_2, a_3, a_4)  = \sum_{{\bf w}[1,2,5,3]=a_1,a_2, a_3, a_4~ \text{and}~\sigma_{\theta}(E_\psi \times E_{\chi'})^\mathbf{I}({\bf w}) \neq 0} (\sigma_{\theta}(E_\psi \times E_{\chi'}))^\mathbf{I}({\bf w}), $$
and that 
$$\sum_{{\bf w}[1,2,5,3]=a_1,a_2, a_3, a_4~ \text{and}~\sigma_{\theta}(E_\psi \times E_{\chi'})^\mathbf{I}({\bf w}) \neq 0} (\sigma_{\theta}(E_\psi \times E_{\chi'}))^\mathbf{I}({\bf w}) $$ is identical to $$ \sum_{{\bf w}[1,2,5,3]=a_1,a_2, a_3, a_4~ \text{and}~\sigma_{\theta}(E_\psi \times E_{\chi'})^\mathbf{I}({\bf w}) \neq 0} (E_\psi^\mathbf{I}({\bf w}[v_1]) \cdot E_{\chi'}^\mathbf{I}({\bf w}[v_2]) \cdot P_\theta({\bf w})),$$
where $v_1=(1,2,4)$ and $v_2=(5,3)$.

If $a_1,a_2,a_3, a_4 \in \mathrm{adom}(\mathbf{I})$, then,  by  induction hypothesis, we have that 
$$\sum_{{\bf w}[1,2,5,3]=a_1,a_2, a_3, a_4~ \text{and}~\sigma_{\theta}(E_\psi \times E_{\chi'})^\mathbf{I}({\bf w}) \neq 0} (E_\psi^\mathbf{I}({\bf w}[v_1]) \cdot E_{\chi'}^\mathbf{I}({\bf w}[v_2]) \cdot P_\theta({\bf w}))$$
is identical to
$$ \sum_{{\bf w}[1,2,5,3]=a_1,a_2, a_3, a_4~ \text{and}~\sigma_{\theta}(E_\psi \times E_{\chi'})^\mathbf{I}({\bf w}) \neq 0} (\psi^{\mathbf{A}(\mathbf{I})} ({\bf w}[v_1]) \cdot \chi'^{\mathbf{A}(\mathbf{I})} ({\bf w}[v_2]) \cdot P_\theta({\bf w})),$$
which in turn is
$$ \psi^{\mathbf{A}(\mathbf{I})} (a_1,a_2,a_4) \cdot \chi'^{\mathbf{A}(\mathbf{I})} (a_2, a_3) \cdot 1 = \psi^{\mathbf{A}(\mathbf{I})} (a_1,a_2,a_4) \cdot \chi'^{\mathbf{A}(\mathbf{I})} (a_2, a_3),$$
since, when ${\bf w}[1,2,5,3]=a_1,a_2, a_3, a_4$, we have that $P_\theta({\bf w}) = 1$   only when ${\bf w}$ is $(a_1,a_2, a_3, a_4,a_2)$ and $P_\theta({\bf w}) = 0$ for all other possible tuples. Thus, $$E_\varphi^\mathbf{I}(a_1,a_2, a_3, a_4)  = \psi^{\mathbf{A}(\mathbf{I})}(a_1,a_2, a_4) \wedge \chi^{\mathbf{A}(\mathbf{I})}(a_2, a_3),$$
as desired.

If at least one of $a_1,a_2,a_3, a_4$ is not in  $\mathrm{adom}(\mathbf{I})$, then, by Definition \ref{brc-eval-def},   $\psi^{\mathbf{A}(\mathbf{I})}(a_1,a_2, a_4)=0 $ or $ \chi^{\mathbf{A}(\mathbf{I})}(a_2,a_3)=0$; hence,   by induction hypothesis, we have that  $E_\psi^\mathbf{I}(a_1,a_2, a_4)=0 $ or $E_\chi^\mathbf{I}(a_2, a_3) = 0$. Suppose then that $a_1 \notin \mathrm{adom}(\mathbf{I})$, but $a_2,a_3 \in \mathrm{adom}(\mathbf{I})$ and $\chi^{\mathbf{A}(\mathbf{I})}(a_2,a_3)\neq 0$ (the other cases are similar).
Thus, 
$\psi^{\mathbf{A}(\mathbf{I})}(a_1,a_2, a_4)=0 = E_\psi^\mathbf{I}(a_1,a_2, a_4)$, but   $ E_\chi^\mathbf{I}(a_2,a_3)=\chi^{\mathbf{A}(\mathbf{I})}(a_2,a_3)\neq 0$.  We have then that
$$E_\psi^\mathbf{I}(a_1,a_2,a_4) \cdot E_{\chi'}^\mathbf{I}(a_2, a_3) \cdot P_\theta(a_1,a_2, a_3, a_4,a_2) = 0.$$
  Thus,
$$(\pi_{1,2,5,3}(\sigma_{\theta}(E_\psi \times E_{\chi'})))^\mathbf{I}(a_1,a_2, a_3, a_4) = 0.$$

Then, $E_\varphi^\mathbf{I}(a_1,a_2, a_3, a_4) = 0$, and 
by Definition \ref{brc-eval-def},  we have that $\varphi^{\mathbf{A}(\mathbf{I})}(a_1,a_2,a_3, a_4) = 0$. Thus, $E_\varphi(a_1,a_2, a_3, a_4) = \varphi ^{\mathbf{A}(\mathbf{I})}(a_1,a_2,a_3, a_4)$, as desired.

\item 
Assume that   $\varphi (y_1, y_2, y_3, y_4)$ is a $\mathcal{BRC}(\tau)$-formula of the form $\psi(y_1,  y_2, y_4) \vee \chi(y_2, y_3)$. In this case, we take $E_\varphi$ to be $$\pi_{1,2,4,3}(E_\psi \times E_{\mathrm{adom}}) \cup \pi_{1,3,4,2}( E_{\mathrm{adom}} \times  E_{\mathrm{adom}} \times E_\chi ). $$
 
 For all  elements $a_1,a_2,a_3, a_4 \in D$, we have that {\small $$E_\varphi^\mathbf{I}(a_1,a_2, a_3, a_4) = (\pi_{1,2,4,3}(E_\psi \times E_{\mathrm{adom}}))^\mathbf{I}(a_1,a_2, a_3, a_4) + (\pi_{1,3,4,2}( E_{\mathrm{adom}} \times  E_{\mathrm{adom}} \times E_\chi ))^\mathbf{I}(a_1,a_2, a_3, a_4).$$}
If $a_1,a_2,a_3, a_4 \in \mathrm{adom}(\mathbf{I})$, then
 $$(\pi_{1,2,4,3}(E_\psi \times E_{\mathrm{adom}}))^\mathbf{I}(a_1,a_2, a_3, a_4) = \sum_{{\bf w}[1,2,4,3]=(a_1,a_2, a_3, a_4)~ \text{and}~(E_\psi \times E_{\mathrm{adom}})^\mathbf{I}({\bf w}) \neq 0} (E_\psi \times E_{\mathrm{adom}})^\mathbf{I}({\bf w}).$$ 
 There is only one ${\bf w}$ such that 
  ${\bf w}[1,2,4,3]=(a_1,a_2, a_3, a_4)$,  namely ${\bf w}=(a_1, a_2, a_4, a_3)$). Thus,
  $$(\pi_{1,2,4,3}(E_\psi \times E_{\mathrm{adom}}))^\mathbf{I}(a_1,a_2, a_3, a_4)=(E_\psi \times E_{\mathrm{adom}})(a_1,a_2,a_4,a_3) =
E_\psi^\mathbf{I}(a_1,a_2, a_4) \cdot E_{\mathrm{adom}}^\mathbf{I}(a_3)  $$
and
$$ E_\psi^\mathbf{I}(a_1,a_2, a_4) \cdot E_{\mathrm{adom}}^\mathbf{I}(a_3) = E_\psi(a_1,a_2, a_4) \cdot 1 = \psi^{\mathbf{A}(\mathbf{I})}(a_1,a_2, a_4),$$
 where the last equality follows by induction hypothesis, so
 $$(\pi_{1,2,4,3}(E_\psi \times E_{\mathrm{adom}}))^\mathbf{I}(a_1,a_2, a_3, a_4) = \psi^{\mathbf{A}(\mathbf{I})}(a_1,a_2, a_4).$$
 Similarly, 
 $$ (\pi_{1,3,4,2}( E_{\mathrm{adom}} \times  E_{\mathrm{adom}} \times E_\chi ))^\mathbf{I}(a_1,a_2, a_3, a_4)= \chi^{\mathbf{A}(\mathbf{I})}(a_2,a_3).$$
 Thus, when $a_1,a_2,a_3, a_4 \in \mathrm{adom}(\mathbf{I})$, we have
$$E_\varphi^\mathbf{I}(a_1,a_2, a_3, a_4)  =  \psi^{\mathbf{A}(\mathbf{I})}(a_1,a_2, a_4) + \chi^{\mathbf{A}(\mathbf{I})}(a_2,a_3) =\varphi^{\mathbf{A}(\mathbf{I})}(a_1,a_2, a_3, a_4),$$
as desired. 

The argument for the  case in which at least one of 
$a_1,a_2,a_3,a_4$ is not in $\mathrm{adom}(\mathbf{I})$ is similar to the argument for the analogous case in the previous two inductive steps. The details are left to the reader.

 

\item Assume that $\varphi$ is a $\mathcal{BRC}(\tau)$-formula of the form $\nabla \psi$. In this case, we take $E_{\varphi}$ to be $\text{supp}(E_{\psi})$.  By the fourth part of Proposition \ref{pro:basic}, we have that 
$(\nabla \psi)^{\mathbf{A}(\mathbf{I})}  = 
\text{supp}(\psi^{\mathbf{A}(\mathbf{I})})$. By induction hypothesis, we have that $\psi^{\mathbf{A}(\mathbf{I})}= E_\psi^{\mathbf{I}}$, hence 
$(\nabla \psi)^{\mathbf{A}(\mathbf{I})}= \text{supp}(E_\psi^{\bf I})= E_\varphi^{\bf I}$, as desired.

\item Assume that $\varphi$ is a 
$\mathcal{BRC}(\tau)$-formula of the form 
$\exists y_i\psi$, where  $\psi$ is
a $\mathcal{BRC}(\tau)$-formula with $y_1,\ldots,y_n$ as its free variables and
$i$ is an index such that $1\leq i\leq n$.
In this case, we take $E_\varphi$ to be
$\pi_v(E_\psi)$, where
 $v=(i_1,\ldots,i_{n-1})$ is the sequence of distinct indices from $\{1,\ldots,n\}\setminus \{i\}$ ordered by the standard  ordering of the natural numbers.  
 
 By the fifth part of Proposition \ref{pro:basic}, we have that
 $(\exists y_i \psi)^{\mathbf{A}(\mathbf{I})}  = \pi_v(\psi^{\mathbf{A}(\mathbf{I})} )$. By induction hypothesis, we have that $\psi^{\mathbf{A}(\mathbf{I})}= E_\psi^{\mathbf{I}}$, hence
 $(\exists y_i \psi)^{\mathbf{A}(\mathbf{I})}  = \pi_v(E_\psi^{\mathbf{I}} )= (\pi_v(E_\psi))^\mathbf{I}= E_\varphi^\mathbf{I}$, as desired. \qedhere
\end{itemize}
\end{proof}



\section{Codd's Theorem for Relational Algebra with Division}
\subsection{Relational algebra with  division}

We now introduce an additional relational algebra operation that is definable from the basic ones in the Boolean case but not in the general semiring framework, as we will soon see.
\begin{Def} [Division] \label{def:div} Let $R_1 \colon D^{n_1} \longrightarrow K$ and $R_2 \colon D^{n_2} \longrightarrow K$ be two $\mathbb{K}$-relations with $n_1 >  n_2$. The \emph{division} of  $R_1$ and $R_2$ is the $\mathbb{K}$-relation 
$R_1 \div R_2 \colon D^{n_1 - n_2} \longrightarrow K$  defined by
 $$ R_1 \div R_2({\bf a}) =  
 \mathsf{s}(\sum_{{\bf b} \in D^{n_2}}R_1({\bf a},{\bf b}))~
 \cdot
 \prod_{{\bf b}~:~ R_2({\bf b})\not = 0}  R_1({\bf a}, {\bf b}). $$

We stipulate that the empty product is equal to $1$. Observe that $ R_1 \div R_2 =  R_1 \div \text{supp}( R_2)$.
\end{Def}

Note that the quotient $R_1\div R_2 $ is well-defined. The reason is that, for every tuple ${\bf a} \in D^{n_1-n_2}$, we have that the  product $ \prod_{ R_2({\bf b})\not = 0}  R_1({\bf a}, {\bf b}) $
is finite because  $R_2$ is a $\mathbb K$-relation, thus there are finitely many tuples $\bf b$ such that $R_2({\bf b})\not = 0$. Furthermore, $R_1\div R_2$ has finite support because so does $R_1$, hence we have that 
$\mathsf{s}(\sum_{{\bf b} \in D^{n_2}}R_1({\bf a},{\bf b}))\not = 0$ for only finitely many tuples ${\bf a}\in D^{n_1-n_2}$. Thus, $R_1\div R_2$ is indeed a $\mathbb K$-relation. 
Note also that if $R_2$ is the $n_2$-ary  $\mathbb K$-relation with empty support, then
$ \prod_{ R_2({\bf b})\not = 0}  R_1({\bf a}, {\bf b}) = 1$, for every tuple ${\bf a} \in D^{n_1-n_2}$, because this product is empty. This is why the expression 
$\mathsf{s}(\sum_{{\bf b} \in D^{n_2}}R_1({\bf a},{\bf b}))$ is needed in the definition of $R_1 \div R_2$, as without it we would not have a $\mathbb K$-relation, i.e., a function with finite support.

We  extend the Basic Relational Algebra $\mathcal{BRA}$ to obtain the \emph{Relational Algebra} $\mathcal{RA}$ by adding the following clause to Definition~\ref{bradef}:

\begin{itemize}

\item If $E_1, E_2$ are $\mathcal{RA}(\tau)$-expressions of arities $n_1, n_2$, respectively, such that $n_1 >  n_2$,  then $E_1 \div E_2 $ is an $\mathcal{RA}(\tau)$-expression of arity $n_1-n_2$.
\end{itemize}

We also extend the semantics by adding the following clause to Definition~\ref{algebraop}:
\begin{itemize}
\item $(E_1\div E_2)^{\mathbf I}= E_1^{\mathbf I} \div E_2^{\mathbf I}
$.
\end{itemize}

\begin{Exm}\label{exam1:div}
   If\/ ${\mathbb B}=(\{0,1\}, \wedge,\vee, 0, 1)$ is the Boolean semiring, then
 the division operation introduced here coincides with the division (also known as \emph{quotient}) operation of relational algebra, i.e., if $R_1$ and $R_2$ are ordinary relations with $\textrm {arity}(R_2)<  \textrm {arity}(R_1)$, then
$R_1 \div R_2({\bf a}) = 1$ if and only if for all $\bf b$ with $R_2({\bf b})=1$, we have that $R_1({\bf a}, {\bf b})=1$. 
\end{Exm}

\begin{Exm} \label{exam2-div}
We illustrate the meaning of the division operation for other semirings of interest.

Let\/ ${\mathbb K}=(K,+_{\mathbb K},\cdot_{\mathbb K}, 0_{\mathbb K}, 1_{\mathbb K})$ be a positive semiring with monus and support. Assume that $R_1$ and $R_2$ are two $\mathbb K$-relations with  $\textrm {arity}(R_2)<  \textrm {arity}(R_1)$,
and let ${\bf a}$ be a tuple in $D^{n_1-n_2}$.

\begin{enumerate}
\item If ${\bf a} \not \in \supp(\pi_{1\ldots n_1 -n_2}(R_1))$, then
$R_1\div R_2({\bf a})=0_{\mathbb K}$.

\item If ${\bf a}  \in \supp(\pi_{1\ldots n_1 -n_2}(R_1))$ and there is a tuple ${\bf b}\in \supp(R_2)$ such that
$({\bf a}, {\bf b}) \not \in \supp(R_1)$, then $R_1\div R_2({\bf a})=0_{\mathbb K}$.

\item  If ${\bf a}  \in \supp(\pi_{1\ldots n_1 -n_2}(R_1))$ and for every tuple ${\bf b}\in \supp(R_2)$ we have that
$({\bf a}, {\bf b})  \in \supp(R_1)$, then the following hold:
\begin{enumerate}
\item If\/ $\mathbb K$ is the bag semiring ${\mathbb N}=(N,+,\cdot,0,1)$, then
$R_1\div R_2({\bf a}) = \prod_{ R_2({\bf b})\not = 0}  R_1({\bf a}, {\bf b}).$
\item If\/ $\mathbb K$ is the tropical semiring $\mathbb{T}(N) = (N\cup\{\infty\},\min,+,\infty,0)$ of the natural numbers, then $R_1\div R_2({\bf a}) = \sum_{ R_2({\bf b})\not = 0}  R_1({\bf a}, {\bf b}).$
\item If\/ $\mathbb K$ is the fuzzy semiring ${\mathbb F} =([0,1], \max, \min, 0, 1)$, then $R_1\div R_2({\bf a}) = \min_{ R_2({\bf b})\not = 0}  R_1({\bf a}, {\bf b}).$
\end{enumerate}

\end{enumerate}
    \end{Exm}

In the case of the Boolean semiring
$\mathbb B$, it is well known that the division operation is expressible in terms of the other relational algebra operations; in fact, it is expressible using projection, Cartesian product, and difference (e.g., see~\cite[pages 153-154]{DBLP:books/cs/Ullman80}).  In what follows, we show that the state of affairs is quite different for the bag semiring $\mathbb N$.

\begin{Thm} \label{thm:division-bag}
Let $\mathbb{N}=(N, +, \cdot,\dotdiv, \mathsf{s}, 0, 1)$ be the bag semiring expanded with the truncated subtraction $\dotdiv$ as monus and with support $\mathsf{s}$. Then there is 
no expression of  basic relational algebra $\mathcal{BRA}$ that defines 
the division operation
$\div$ on $\mathbb{N}$-databases  (i.e.,  division is not expressible using union, difference, Cartesian product, projection, selection,  and support on $\mathbb{N}$-databases).
\end{Thm}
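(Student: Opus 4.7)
The plan is to exhibit a growth invariant that every $\mathcal{BRA}$-expression obeys, and then to display a family of $\mathbb{N}$-databases on which the division operation violates this invariant. More precisely, the strategy is to show that the output of any $\mathcal{BRA}$-expression is bounded polynomially in $|\mathrm{adom}(\mathbf{I})|$ and in the largest input multiplicity, whereas division can produce an output that is exponential in $|\mathrm{adom}(\mathbf{I})|$.

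The key lemma is the following: for every $\mathcal{BRA}(\tau)$-expression $E$ there exist constants $c_E, d_E \geq 0$, depending only on $E$, such that for every $\mathbb{N}$-database $\mathbf{I}=(R_1,\ldots,R_p)$ and every tuple $\mathbf{t}$,
$$E^{\mathbf{I}}(\mathbf{t}) \;\leq\; (n+1)^{c_E}\cdot (M+1)^{d_E},$$
where $n = |\mathrm{adom}(\mathbf{I})|$ and $M = \max\{R_i(\mathbf{s}) : 1\leq i\leq p,\ \mathbf{s}\in D^{\text{arity}(R_i)}\}$. The lemma is proved by induction on the construction of $E$ from Definition~\ref{bradef}, where the inductive steps are the following: for a relation symbol $\hat{R_i}$, take $c=0$, $d=1$; for $E_1\cup E_2$, combine the two bounds additively (the resulting constants can be taken to be the maxima plus $1$); for $E_1\setminus E_2$, use the crucial fact that truncated subtraction satisfies $a\dotdiv b\leq a$, so the bound for $E_1$ suffices; for $E_1\times E_2$, add the degrees; for $\pi_v(E)$, absorb the sum over at most $n^{\text{arity}(E)-|v|}$ tuples into the $n^c$ factor; for $\sigma_\theta(E)$ use that the output is pointwise bounded by $E^{\mathbf{I}}$; and for $\mathrm{supp}(E)$, note that the output lies in $\{0,1\}$.

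For the counterexample, I take the schema $\tau$ consisting of a binary relation symbol $\hat{R_1}$ and a unary symbol $\hat{R_2}$. For each $n\geq 1$, pick distinct elements $a, b_1, \ldots, b_n$ of $D$ and define the $\mathbb{N}$-database $\mathbf{I}_n=(R_1,R_2)$ by $R_1(a,b_i)=2$ for $1\leq i\leq n$ and $R_1=0$ elsewhere, and $R_2(b_i)=1$ for $1\leq i\leq n$ and $R_2=0$ elsewhere. Then $|\mathrm{adom}(\mathbf{I}_n)| = n+1$ and the largest multiplicity in $\mathbf{I}_n$ is $M=2$. By Definition~\ref{def:div},
$$(R_1\div R_2)(a) \;=\; \mathsf{s}\!\left(\sum_{b\in D} R_1(a,b)\right)\cdot \prod_{i=1}^n R_1(a,b_i) \;=\; 1\cdot 2^n \;=\; 2^n,$$
which grows exponentially in $n$. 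If some $\mathcal{BRA}(\tau)$-expression $E$ of arity $1$ satisfied $E^{\mathbf{I}}= R_1\div R_2$ on all $\mathbb{N}$-databases, then in particular $E^{\mathbf{I}_n}(a) = 2^n$ for all $n$. However, the key lemma gives $E^{\mathbf{I}_n}(a) \leq (n+2)^{c_E}\cdot 3^{d_E}$, which is polynomial in $n$ for fixed $E$; for $n$ large enough this is strictly less than $2^n$, a contradiction.

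The main obstacle will be the careful bookkeeping in the inductive step for projection (which multiplies the value by up to $n^{\text{arity}(E)}$ summands) and the step for difference (where one must observe that truncated subtraction can only decrease the value, so no new exponents in $M$ are introduced). Once the invariant is set up with the right parameters, the counterexample above immediately yields the inexpressibility. Note that the argument uses only that the bag semiring is commutative, positive, and that its monus is truncated subtraction; both the polynomial bound on $\mathcal{BRA}$ outputs and the exponential growth produced by $\div$ are features of the multiplication in $\mathbb{N}$.
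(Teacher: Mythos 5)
Your proposal is correct and takes essentially the same route as the paper: the same family of databases (a binary relation with all multiplicities $2$ over a unary relation with multiplicities $1$, so that division yields $2^n$), together with an induction on the structure of the $\mathcal{BRA}$-expression showing that output multiplicities are polynomially bounded in $n$ for fixed $E$ (the paper splits this into a support-size bound and a highest-multiplicity bound, while you fold the support-size observation into the projection step, but the content is identical). The only point worth making explicit is the fact, needed in your projection step, that $\supp(E^{\mathbf{I}})\subseteq \mathrm{adom}(\mathbf{I})^{\mathrm{arity}(E)}$, which follows by the same induction.
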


\begin{proof} 
Let $\tau=(R,S)$ be a schema consisting of  a binary relation symbol $R$ and a unary relation symbol $S$.
For every $n\geq 1$, let $a, b_1,\ldots,b_n$ be distinct elements from the domain $D$ and 
let $\mathbf{I}(n)=(R_n,S_n)$ be the bag database ($\mathbb{N}$-database), where $R_n$ and $S_n$ are the following bags (${\mathbb{N}}$-relations):
\begin{itemize}
    \item $R_n(a,b_i) = 2$, for $1\leq i\leq n$, while $R_n(c,d)=0$, for all other pairs of elements of $D$.
    \item $S_n(b_i)=1$, for $1\leq i\leq n$, while $S_n(c)=0$, for every other element of $D$.
\end{itemize}
Clearly, for every $n\geq 1$, the division $R_n\div S_n$ is the bag such that
$(R_n\div S_n)(a) = 2^n$, while ($R_n\div S_n)(c) = 0$, for all other elements of $D$.

We will show that  no $\mathcal{BRA}$-expression 
defines $R\div S$   by showing that, for sufficiently large $n$, no such expression involving $R_n$ and $S_n$ can have an element of multiplicity $2^n$.

Let $E$ be an expression of the basic relational calculus $\mathcal{BRC}$. 
\begin{itemize}
\item The \emph{length} $l(E)$ of $E$  is defined by induction  as follows: if $E$ is $R$ or $S$, then $l(E)=1$; if  $E$ is $E_1 ~\mathrm{op} ~ E_2$, where $\mathrm{op}$ is union, difference, or Cartesian product, then $l(E)=l(E_1)+l(E_2)+1$; if $E$ is $\mathrm{op}(F)$, where $\mathrm{op}$ is projection,  selection, or support, then $l(E)=l(F)+1$.
\item We write $|\supp(E^{\mathbf{I}(n)})|$ to denote the number of tuples in the support 
of the bag 
$E^{\mathbf{I}(n)}$
obtained by evaluating $E$ on the bag database $\mathbf{I}(n)$.
\item We write $\hm(E^{\mathbf{I}(n)})$ to denote the highest multiplicity of a tuple in the bag  $E^{\mathbf{I}(n)}$.
\end{itemize}
\smallskip
\noindent{\bf Claim:} The following statements are true:
\begin{enumerate}
\item $ |\supp(E^{\mathbf{I}(n)})| \leq n^{l(E)}$, for all $n\geq 2$.
\item There is a polynomial $p_E(n)$ such that
$\hm(E^{\mathbf{I}(n)})\leq p_E(n)2^{l(E)}$, for all  $n\geq 2$.
\end{enumerate}
Both statements  are proved by induction on the construction of the $\mathcal{BRC}$-expression $E$; the proof of the second statement uses also the first statement.

For the first statement, if $E$ is $R$ or $E$ is $S$, then $|\supp(E^{\mathbf{I}(n)})| = n = n^{l(E)}$, since 
in this case $l(E)=1.$ For the induction steps, the more interesting cases those of union and Cartesian product.  If $E=E_1 \cup E_2$, then
$|\supp(E^{\mathbf{I}(n)})|\leq |\supp(E_1^{\mathbf{I}(n)})|+ |\supp(E_2^{\mathbf{I}(n)})| \leq n^{l(E_1)}+ n^{l(E_2)} \leq 2n^{l(E_1)+l(E_2)}\leq n^{l(E_1)+l(E_2)+1}=n^{l(E)},$ where the second inequality follows from the induction hypothesis, while the one before the last one follows from the assumption that $n\geq 2$.
If $E=E_1 \times E_2$, then
$|\supp(E^{\mathbf{I}(n)})|= |\supp(E_1^{\mathbf{I}(n)})| \cdot |\supp(E_2^{\mathbf{I}(n)})| \leq n^{l(E_1)} \cdot n^{l(E_2)} =  n^{l(E_1)+l(E_2)}\leq n^{l(E_1)+l(E_2)+1}=n^{l(E)},$ where the first inequality follows from the induction hypothesis.

For the second statement,  if $E$ is $R$ or $E$ is $S$, then $\hm(E^{\mathbf{I}(n)}) \leq 2$, so we can put $p_E(n)=1$.  For the induction steps, the more interesting cases are those of  Cartesian product and  projection.
If $E=E_1 \times E_2$, then $\hm(E^{\mathbf{I}(n)}) = \hm(E_1^{\mathbf{I}(n)}) \hm(E_2^{\mathbf{I}(n)})\leq p_{E_1}(n)2^{l(E_1)}  p_{E_2}(n)2^{l(E_2)}= 
 p_{E_1}(n) p_{E_2}(n) 2^{l(E_1)+l(E_2)}\leq 
 p_{E_1}(n) p_{E_2}(n) 2^{l(E_1)+l(E_2)+1}=
  p_{E_1}(n) p_{E_2}(n)2^{l(E)}$, where the first inequality follows from the induction hypothesis.  So, in this case, we can  put $p_E(n) =  p_{E_1}(n) p_{E_2}(n)$. If $E$ is $\pi_v(F)$, where $\pi$ is the projection operation and $v$ is a (possibly empty) sequence of indices, then 
  $\hm(E^{\mathbf{I}(n)})=\hm((\pi_v(F))^{\mathbf{I}(n)})\leq |\supp(F^{\mathbf{I}(n)})| \hm(F^{\mathbf{I}(n)})\leq n^{l(F)}p_F(n)2^{l(F)}\leq n^{l(F)}p_F(n)2^{l(F)+1} = n^{l(F)}p_F(n)2^{l(E)}$, where the first inequality follows from the first part of the claim and the induction hypothesis. Thus, in this case, we can put $p_E( n) = n^{l(F)}p_F(n)$. 
  
  \smallskip
  The second part of the preceding claim implies easily the theorem we are after. Indeed, if the division $R\div S$ could be expressed by some $\mathcal{BRA}$-expression $E$, then, by the second statement in the above claim,  there is a polynomial $p_E(n)$ such that for every $n\geq 2$, it is the case that $\hm(E^{\mathbf{I}(n)})\leq p_E(n)2^{l(E)}$. However, we also have that
  $\hm(E^{\mathbf{I}(n)})=2^n$, which, for sufficiently large $n$, is greater than $p_E(n)2^{l(E)}$, since $p_E(n)$ is some fixed polynomial and $2^{l(E)}$ is a constant.
\end{proof}

\begin{Rmk}
The support operation cannot be expressed in terms of the five basic relational algebra operations $\pi, \sigma, \cup, \times, \setminus$ and $\div$.
To see this, consider a relation symbol $R$ of arity 1 and the bag that has only the element a in its support and with multiplicity 2.  
 Let $E$ be an expression built from $R$ using the five basic relational algebra operations and division.   Either $E$ denotes the empty bag or $E$ denotes a bag of arity $r$ for some $r \geq 1$ whose support contains only a tuple of the form
$(a,a,\dots, a)$ and with an even number as multiplicity. The proof is by  induction on the construction of $E$.  
\end{Rmk}

\subsection{Relational calculus with universal quantifier}

For the remainder of this section, we assume that $\mathbb{K}$ is a positive semiring with monus and support. We now extend the basic relational calculus $\mathcal{BRC}$ with a universal quantifier $\forall$ and call the resulting system \emph{relational calculus} 
 $\mathcal{RC}$. Specifically,
we add the following clause to the syntax in Definition \ref{brc-synt-def}:

\begin{itemize}
    \item If $\varphi$ is an $\mathcal{RC}(\tau)$-formula with $y_1,\dots,y_n$ as its free variables, then, for every $i$ with $1\leq i\leq n$, the expression
$\forall y_i\varphi$ is a  $\mathcal{RC}(\tau)$-formula with the variables $y_1,\ldots,y_n$ other than $y_i$ as its free variables.
\end{itemize}
Furthermore, we add the following clause to the semantics in Definition~\ref{FOL-sem-def}:
\begin{itemize}
%


 \item $\|\forall y_i\varphi\| (\mathbf{A}, \alpha) = \prod_{b\in A}\| \varphi\|(\mathbf{A}, \alpha[y_i/b])$.

\end{itemize}

As in the case of the existential quantifier, the semantics of the universal quantifier is well defined because of our blanket assumption that the structure $\mathbf{A}$ has a finite universe.

\paragraph*{Connections to non-classical logics}  \label{connections:sec}

It should be noted that the syntax and the semiring semantics of relational calculus $\mathcal{RC}$ that we defined here  have also been considered in the context of algebraic semantics of non-classical logics. Specifically, there are  connections to dual intuitionistic logic that we now discuss.

First-order intuitionistic logic has $\wedge, \vee, \rightarrow, 0$ as primitive connectives and $\forall$ and $\exists$ as quantifiers.  The connective $\rightarrow$ is \emph{intuitionistic implication}; it is used to define the negation of a formula as $\neg \varphi := \varphi \rightarrow 0$. Algebraic semantics is one of several different approaches to defining rigorous semantics of intuitionistic logic. The main idea, which originated with Heyting
\cite{heyting1930formalen}, is to consider
the class $\mathcal{H}$ of all 
\emph{Heyting algebras}, where a Heyting algebra is a complete bounded distributive lattices expanded with a binary operation $\Rightarrow$ that has the following property: $(a\wedge b) \preceq c$ if and only if $a\preceq (b \Rightarrow c)$. If $\mathbb{H}$ is a Heyting algebra and $\varphi$ is a formula of first-order intuitionistic logic, then the semantics $\|\varphi\|$ of $\varphi$ on $\mathbb{H}$ is defined using $\Rightarrow$ to interpret the intuitionistic implication $\rightarrow$. A sentence $\psi$ of intuitionistic logic is \emph{valid} if for every Heyting algebra $\mathbb{K}$ and every (finite or infinite) $\mathbb{K}$-structure $\mathbf{A}$, we have that $\psi^\mathbf{A}=1$. 
Soundness and completeness theorems are then established about proof systems that are capable to derive precisely the valid sentences of intuitionistic logic.

Dual first-order intuitionistic logic has the same syntax as relational calculus $\mathcal{RC}$, except that the support $\nabla$ is not included as a connective. Informally, dual intuitionistic logic is obtained from intutitionistic logic by replacing the $\Rightarrow$ connective  by the $\butnot$ connective. Goodman~\cite{Good} gave algebraic semantics to dual first-order intuitionistic logic by considering the class $\mathcal{B}$ of all Brouwerian algebras discussed in Example~\ref{latt}. The semantics $\|\varphi\|$ of a formula $\varphi$ of dual first-order intuitionistic logic on a Brouwerian algebra $\mathbb{K}$ given by Goodman is precisely the semantics of $\varphi$ as an $\mathcal{RC}$-formula (without $\nabla$) on $\mathbb{K}$ that we gave here. In particular, the  connective $\butnot$ is interpreted using the monus of $\mathbb{K}$.

\subsection{Codd's Theorem for Relational Algebra and Relational Calculus}

We now establish a version of Codd's Theorem for relational algebra  $\mathcal{RA}$
and  relational calculus $\mathcal{RC}$.

\begin{Thm}\label{Codd2} 
Let\/ $\mathbb{K}$ be a positive semiring with monus and support, let $\tau=(\hat{R_1},\ldots \hat{R_p})$ be a schema, and let $q$ be an $n$-ary query. The following
statements are equivalent:
\begin{itemize}
\item[1.] There is an $\mathcal{RA}(\tau)$-expression $E$ of arity $n$  such that
$q^\mathbf{I} = E^\mathbf{I}$, for every $\mathbb{K}$-database  $\mathbf{I}$.
\item[2.] There is a domain independent $\mathcal{RC}(\tau)$-formula $\varphi(y_1,\ldots,y_n)$ such
that $q^\mathbf{I}= \varphi^{\mathbf{A}(\mathbf{I})}$, for every $\mathbb{K}$-database  $\mathbf{I}$.
\item[3.] There is an $\mathcal{RC}(\tau)$-formula
$\varphi(y_1,\ldots,y_n)$ such that $q^\mathbf{I}= \varphi^{\mathbf{A}(\mathbf{I})}$, for every $\mathbb{K}$-database  $\mathbf{I}$.
\end{itemize}
\end{Thm}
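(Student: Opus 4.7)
The plan is to mirror the induction used in the proof of Theorem~\ref{Codd1}. The implication $(2) \Rightarrow (3)$ is immediate, and for the other two implications I would proceed by induction on $\mathcal{RA}(\tau)$-expressions and on $\mathcal{RC}(\tau)$-formulas, respectively. Since every positive semiring is zero-sum-free, and since every $\mathcal{BRA}$-expression (resp.\ $\mathcal{BRC}$-formula) is also an $\mathcal{RA}$-expression (resp.\ $\mathcal{RC}$-formula), every inductive step already established in the proof of Theorem~\ref{Codd1} carries over verbatim. It therefore suffices to treat the two new cases: the division operation on the algebra side and the universal quantifier on the calculus side.

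For the division case in $(1) \Rightarrow (2)$, suppose $E = E_1 \div E_2$ with arities $n_1 > n_2$. By the inductive hypothesis there are domain independent formulas $\varphi_1({\bf y}, {\bf z})$ and $\varphi_2({\bf z})$ with $|{\bf y}| = n_1 - n_2$ and $|{\bf z}| = n_2$ (variables renamed to align with the coordinate split) such that $E_i^{\mathbf I} = \varphi_i^{\mathbf{A}(\mathbf I)}$. I would set
$$\varphi_E({\bf y}) \;:=\; \nabla\bigl(\exists {\bf z}\,\varphi_1({\bf y}, {\bf z})\bigr) \;\wedge\; \forall {\bf z}\,\bigl[\bigl(\varphi_1({\bf y}, {\bf z}) \wedge \nabla\varphi_2({\bf z})\bigr) \vee \bigl((y_1 = y_1) \butnot \nabla\varphi_2({\bf z})\bigr)\bigr].$$
Using Proposition~\ref{monusprop}, the bracketed subformula evaluates to $R_1({\bf a},{\bf b})$ when $R_2({\bf b}) \neq 0$ (in which case $1 \butnot 1 = 0$) and to $1$ when $R_2({\bf b}) = 0$ (in which case $R_1 \cdot 0 = 0$ and $1 \butnot 0 = 1$). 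The universal quantifier therefore produces exactly $\prod_{{\bf b} : R_2({\bf b}) \neq 0} R_1({\bf a}, {\bf b})$, and the $\nabla\exists$ prefix reconstructs the leading support factor of Definition~\ref{def:div}. Domain independence follows because for any extension $\mathbf B$ with $\mathrm{adom}(\mathbf I) \subseteq B$, every ${\bf b} \in B^{n_2} \setminus \mathrm{adom}(\mathbf I)^{n_2}$ yields $\nabla\varphi_2^{\mathbf B}({\bf b}) = 0$ by the domain independence of $\varphi_2$, hence contributes the identity factor $1$ to the universal product and leaves the value unchanged.

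For the universal quantifier in $(3) \Rightarrow (1)$, by using projection to permute coordinates I may assume without loss of generality that $i = n$, where $n$ is the arity of $\varphi$. Using the expression $E_\varphi$ from the inductive hypothesis and the expression $E_{\mathrm{adom}}$ from Proposition~\ref{pro:active}, I would set
$$E_{\forall y_n \varphi} \;:=\; E_\varphi \div E_{\mathrm{adom}}.$$
Expanding Definition~\ref{def:div} at a tuple ${\bf a} \in \mathrm{adom}(\mathbf I)^{n-1}$ gives $\mathsf{s}\bigl(\sum_b E_\varphi^{\mathbf I}({\bf a}, b)\bigr) \cdot \prod_{b \in \mathrm{adom}(\mathbf I)} E_\varphi^{\mathbf I}({\bf a}, b)$. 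Zero-sum-freeness collapses this to just $\prod_{b \in \mathrm{adom}(\mathbf I)} E_\varphi^{\mathbf I}({\bf a}, b)$: if some $E_\varphi^{\mathbf I}({\bf a}, b) \neq 0$ then the support factor is $1$, and if all are $0$ then the product itself is $0$, using that $\mathrm{adom}(\mathbf I)$ is non-empty by our blanket assumption of non-trivial databases. This matches $(\forall y_n\,\varphi)^{\mathbf{A}(\mathbf I)}({\bf a})$ by the inductive hypothesis, and tuples outside $\mathrm{adom}(\mathbf I)^{n-1}$ give $0$ on both sides (using that elements of the support of any $\mathcal{RA}$-output lie in the active domain).

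The hard part is the $\mathcal{RC}$-formula for $E_1 \div E_2$: the division is not pure multiplication quantified over $\supp(R_2)$ but carries the extra leading support factor that pins the result to the active domain, and the treatment of tuples inside versus outside $\supp(R_2)$ is sharply asymmetric. Encoding this in $\mathcal{RC}$ requires threading $\nabla$ and $\butnot$ together with a ``definitional $1$'' (implemented by the trivially-true equality $y_1 = y_1$) to produce the multiplicative identity at ${\bf b} \notin \supp(R_2)$, while simultaneously maintaining free-variable bookkeeping and domain independence. Positivity enters twice: zero-sum-freeness is needed both to invoke Proposition~\ref{pro:active} and to absorb the support factor in the forall translation, while the absence of zero divisors underpins the correct behaviour of the multiplicative factors throughout.
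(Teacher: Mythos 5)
Your proposal is correct and follows essentially the same route as the paper's proof: the universal quantifier is translated as $E_\psi \div E_{\mathrm{adom}}$, and division is translated by the same $\nabla\exists$-prefixed universal quantification over an ``if $\nabla\varphi_2$ then $\varphi_1$ else $1$'' gadget built from $\vee$, $\wedge$, $\butnot$, and $\nabla$ (the paper implements the constant $1$ via the sentence $\nabla\exists y_1(y_1=y_1)$ rather than your open formula $y_1=y_1$, an immaterial difference). The paper's write-up is merely more explicit about the case analysis for domain independence of $\varphi_{E_1\div E_2}$, noting that the disjuncts have different free variables so Proposition~\ref{pro:dom-ind} cannot be applied modularly.
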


\begin{proof}

The proof expands that of Theorem~\ref{Codd1}. The direction $(2) \implies (3)$ is obvious, so we only prove the other two directions.   We  focus on the case of  the division operation in direction $(1) \implies (2)$, and on the case of universal quantification in direction $(3) \implies (1)$. All other cases are identical to the ones  in the proof of Theorem~\ref{Codd1}.

$(1) \implies (2):$  
Assume that $E$ is an $\mathcal{RA}(\tau)$-expression of the form $E_1 \div E_2$, where $E_1$ and $E_2$ are $\mathcal{RA}(\tau)$-expressions of arities  $n_1$ and $n_2$ with  $n_1 > n_2$. By induction hypothesis and by renaming  variables if necessary, there are  domain independent $\mathcal{RC}(\tau)$-formulas $\varphi_{1}(y_1,\dots, y_{n_1})$ and $ \varphi_2(y_{n_1-n_2+1},\ldots, y_{n_1})$, 
such that for every $\mathbb K$-database $\mathbf{I}$, we have that
$E_1^{\mathbf I} = \varphi_1^{\mathbf{A}(\mathbf{I})}$ and
$E_2^{\mathbf I} = \varphi_2^{\mathbf{A}(\mathbf{I})}$. We will  construct a domain independent
$\mathcal{RC}(\tau)$-formula $\varphi_E$ such that for every $\mathbb K$-database $\mathbf I$, we have that
$E^{\mathbf I} = \varphi_E^{\mathbf{A}(\mathbf{I})}$. 

Let $\eta$ be the $\mathcal{RC}(\tau)$-sentence  $\nabla \exists y_1(y_1=y_1)$. It is easy to see that $\eta$ has the following property:
\begin{itemize}
\item 
For every $\mathbb K$-database $\mathbf{I}=(R_1, \dots, R_p)$  and for every 
$\mathbb K$-structure
  $\mathbf{B}=(B,R_1, \dots, R_p)$  with the same relations as $\mathbf I$ and with 
  $\mathrm{adom}(\mathbf{I})\subseteq B\subseteq D$, we have that 
  $\eta^{\mathbf B}(<>) = 1$.
  \end{itemize}
Indeed, this property holds because
$$\eta^{\mathbf B}(<>)= \|\nabla \exists y_1(y_1=y_1)\|({\mathbf B},<>)= \mathsf{s}(\sum_{b\in B}(\|y_1=y_1\|({\mathbf B},b)))=1, $$
where the first two equalities follow from Definitions \ref{FOL-sem-def}
and \ref{brc-eval-def}, while the third equality holds for the following reason: from the blanket assumption that  $\mathbf I$ is non-trivial, we have that $\mathrm{adom}({\mathbf I})\not = \emptyset$, hence $B\not = \emptyset$,  and so  
$\sum_{b\in B}(\|y_1=y_1\|({\mathbf B},b))= \sum_{b\in B}1 \not = 0$, since $\mathbb K$ is a zero-sum free semiring.
\footnote{
Clearly, the $\mathcal {RC}$-sentence $\eta$ uses equality $=$. There are equality-free $\mathcal{RC}$-sentences that have the properties of $\eta$. For example, it is easy to verify that $\nabla(\bigvee_{i=1}^p \exists y_1\ldots
\exists y_{r_i}\hat{R_i}(y_1,\ldots,y_{r_i}))$ is such a sentence.}

\commentout{
In addition to having the $\mathcal{RC}(\tau)$-formulas $\varphi_1$ and
$\varphi_2$ as building blocks, the formula
$\varphi_E$ will also have another $\mathcal{RC}(\tau)$-formula  as a building block. We describe that other formula next.

We claim that there is a domain independent $\mathcal{RC}(\tau)$-sentence $\theta_\tau$ (i.e., $\theta$ has no free variables) such that if $\mathbf{I}=(R_1, \dots, R_p)$ is a $\mathbb K$-database and 
  $\mathbf{B}=(B,R_1, \dots, R_p)$  is a $\mathbb{K}$-structure with $\mathrm{adom}(\mathbf{I})\subseteq B\subseteq D$, then
\begin{center}
$\theta_\tau^{\mathbf{B}}(< >) =
\begin{cases}  1 & \mbox{if $\mathrm{adom}(\mathbf{I}) \neq \emptyset$}\\
0 & \mbox{otherwise.}
\end{cases}
$
\end{center}

 Let
$\theta_\tau$ be the $\mathcal{RC}(\tau)$-sentence 
$\nabla(\bigvee_{i=1}^p \exists y_1\ldots
\exists y_{r_i}\hat{R_i}(y_1,\ldots,y_{r_i}))$.
By Definitions \ref{FOL-sem-def} and \ref{brc-eval-def}, we have 
 $$\theta_\tau^\mathbf{B}(<>)=  \|\theta_\tau\|(\mathbf{B},< >) = 
\|\nabla(\bigvee_{i=1}^p \exists y_1\ldots
\exists y_{r_i}\hat{R_i}(y_1,\ldots,y_{r_i}))\|(\mathbf{B},< >)=
$$
\begin{center}
$
\mathsf{s} (\|\bigvee_{i=1}^p \exists y_1\ldots
\exists y_{r_i}\hat{R_i}(y_1,\ldots,y_{r_i})\|(\mathbf{B},< >)) = 
\begin{cases} 1& \mbox{if $\|\bigvee_{i=1}^p \exists y_1\ldots
\exists y_{r_i}\hat{R_i}(y_1,\ldots,y_{r_i})\|(\mathbf{B},< >) \neq 0$} \\
0 & \mbox{otherwise.} \end{cases}$
\end{center}
Moreover, $\|\bigvee_{i=1}^p \exists y_1\ldots
\exists y_{r_i}\hat{R_i}(y_1,\ldots,y_{r_i})\|(\mathbf{B},< >) = \sum_{i=1}^p \|\exists y_1\ldots
\exists y_{r_i}\hat{R_i}(y_1,\ldots,y_{r_i})\|(\mathbb{B},<>)$. Since the relations of $\mathbb B$ are the same as the relations of $\mathbb I$, we have that
$$\sum_{i=1}^p \|\exists y_1\ldots
\exists y_{r_i}\hat{R_i}(y_1,\ldots,y_{r_i})\|(\mathbb{B},<>)=
\sum_{i=1}^p \sum_{b_1,\ldots,b_{r_i} \in B} R_i(b_1,\ldots,b_{r_i}).$$
Since $\mathbb K$ is a zero-sum free semiring, we have that $\sum_{i=1}^p \|\exists y_1\ldots
\exists y_{r_i}\hat{R_i}(y_1,\ldots,y_{r_i})\|(\mathbb{B},<>) \not = 0$ if and only if 
$R_i(b_1,\ldots,b_{r_i})\not = 0$, for some $i\leq p$ and some elements $b_1,\ldots,b_{r_i}$ in $B$. Since  $\mathrm{adom}(\mathbf{I})\subseteq B\subseteq D$, it follows that 
$\sum_{i=1}^p \|\exists y_1\ldots
\exists y_{r_i}\hat{R_i}(y_1,\ldots,y_{r_i})\|(\mathbb{B},<>) \not = 0$ if and only if 
$\mathrm{adom}(\mathbf{I}) \not = \emptyset$. Consequently, $\theta_\tau$ has the claimed properties, that is, 
$\theta_\tau^{\mathbf B}(<>) = 1$ if $\mathrm{adom}(\mathbf{I}) \not = \emptyset$, and $\theta_\tau^{\mathbf B}(<>) = 0$ if
$\mathrm{adom}(\mathbf{I}) = \emptyset$. Furthermore, since these  properties hold for every structure ${\mathbf B}=(B,R_1,\ldots,R_p)$ with
$\mathrm{adom}(\mathbf{I})\subseteq B\subseteq D$ and since
${\mathbf A}({\mathbf I})$ is such a structure, it  follows that $\theta_\tau^{\mathbf B}= \theta_\tau^{{\mathbf A}({\mathbf I})}$, hence 
$\theta_\tau$ is domain independent.
}

With the $\mathcal{RC}(\tau)$-sentence $\eta$   at hand, we take  $\varphi_E$  to be the $\mathcal{RC}(\tau)$-formula $$   
(\nabla \exists y_{n_1-n_2+1}\ldots \exists y_{n_1}\varphi_1) \wedge (\forall y_{n_1-n_2+1} \ldots \forall  y_{n_1} ((  \eta \butnot \nabla \varphi_{2}) \vee   (  \nabla\varphi_{2} \wedge \varphi_{1} ))),$$
 whose free variables are
 $y_1,\ldots,y_{n_1-n_2}$. We will show that $\varphi_E$ is domain independent and  that $E^{\mathbf I} = \varphi_E^{\mathbf{A}(\mathbf{I})}$
 holds
for every $\mathbb K$-database ${\mathbf I}  $.  We begin with the latter task.

 We must show that for all  $a_1,\ldots,a_{n_1-n_2} \in D$, it holds that
 $E^{\mathbf I}( a_1,\ldots,a_{n_1-n_2}) = \varphi_E^{\mathbf{A}(\mathbf{I})}(a_1,\ldots,a_{n_1-n_2})$.
Assume first that $a_1,\ldots,a_{n_1-n_2} \in \mathrm{adom}({\mathbf I})$.
By putting
${\bf a}=(a_1,\ldots,a_{n_1-n_2})$ and ${\bf y}= (y_{n_1-n_2+1} \ldots   y_{n_1})$, we have that
$$\varphi_E^{\mathbf{A}(\mathbf{I})}({\bf a}) = \|
(\nabla \exists {\bf y} \varphi_1) \wedge 
(\forall {\bf y} ((  \eta \butnot \nabla \varphi_{2}) \vee   (  \nabla\varphi_{2} \wedge \varphi_{1} )))\|(\mathbf{A}(\mathbf{I}), {\bf a})= $$

$$\|\nabla \exists {\bf y} \varphi_1\|(\mathbf{A}(\mathbf{I}), {\bf a}) ~ \cdot 
\prod_{{\bf b} \in \mathrm{adom}({\mathbf I})^{n_2}}\| (\eta \butnot \nabla \varphi_{2}) \vee   (  \nabla\varphi_{2} \wedge \varphi_{1} )\|(\mathbf{A}(\mathbf{I}), {\bf a},{\bf b}),
$$
where the first equality follows from the definition of the formula $\varphi_E$ and 
the second equality
follows from Definitions \ref{FOL-sem-def}
and \ref{brc-eval-def} (and  their extensions for formulas with universal quantification). We now distinguish two cases, namely, whether or not there is a tuple ${\bf b}$ such
that $\varphi_1^{\mathbf{A}({\mathbf I})}({\bf a},{\bf b})\not = 0$.

If for all tuples $\bf b$, we have that $\varphi_1^{\mathbf{A}({\mathbf I})}({\bf a},{\bf b}) = 0$, then
$\|\nabla \exists {\bf y} \varphi_1\|(\mathbf{A}(\mathbf{I}), {\bf a})= 0$, hence $\varphi_E^{\mathbf{A}(\mathbf{I})}({\bf a}) =0$.
By induction hypothesis about $E_1$ and $\varphi_1$, 
we have that $E_1^{\mathbf I}({\bf a},{\bf b})=0$, for 
all tuples $\bf b$. Consequently,  $\mathsf{s}(\sum_{{\bf b} \in D^{n_2}}E_1^{\mathbf I}({\bf a},{\bf b}))=0$, hence, by Definition \ref{def:div}, 
$E_1^{\mathbf I}\div E_2^{\mathbf I}({\bf a})=0$ and so $E^{\mathbf I}({\bf a})=
0$.
 Thus, in this case
$\varphi_E^{\mathbf{A}(\mathbf{I})}({\bf a}) = E^{\mathbf I}({\bf a})$.

If there is a tuple $\bf b$ such that $\varphi_1^{\mathbf{A}({\mathbf I})}({\bf a},{\bf b})\not = 0$, then
$\|\nabla \exists {\bf y} \varphi_1\|(\mathbf{A}(\mathbf{I}), {\bf a})=1$.
Therefore, we have that
$$\varphi_E^{\mathbf{A}(\mathbf{I})}({\bf a}) =\prod_{{\bf b}\in \mathrm{adom}({\mathbf I})^{n_2}}(\| \eta \butnot \nabla \varphi_{2})\|(\mathbf{A}(\mathbf{I}),{\bf b}) + \|(\nabla\varphi_{2} \wedge \varphi_{1} ))\|(\mathbf{A}(\mathbf{I}), {\bf a},{\bf b}))=$$
$$
\prod_{{\bf b}\in \mathrm{adom}({\mathbf I})^{n_2}} (\eta^{\mathbf{A}(\mathbf{I})}(<>) -(\nabla \varphi_2)^{\mathbf{A}(\mathbf{I})}({\bf b}) + 
(\nabla \varphi_2)^{\mathbf{A}(\mathbf{I})}({\bf b})\cdot  \varphi_1^{\mathbf{A}(\mathbf{I})}({\bf a},{\bf b}))= 
$$
$$\prod_{{\bf b}\in \mathrm{adom}({\mathbf I})^{n_2}}(1-(\nabla \varphi_2)^{\mathbf{A}(\mathbf{I})}({\bf b}) + 
(\nabla \varphi_2)^{\mathbf{A}(\mathbf{I})}({\bf b})\cdot  \varphi_1^{\mathbf{A}(\mathbf{I})}({\bf a},{\bf b}))
,$$
where the second equality
follows from Definitions \ref{FOL-sem-def}
and \ref{brc-eval-def}, and the third equality follows  from the properties of the sentence $\eta$. Fix a  tuple
${\mathbf b} \in \mathrm{adom}({\mathbf I})^{n_2}$ and consider the expression
$$1-(\nabla \varphi_2)^{\mathbf{A}(\mathbf{I})}({\bf b}) + 
(\nabla \varphi_2)^{\mathbf{A}(\mathbf{I})}({\bf b})\cdot \varphi_1^{\mathbf{A}(\mathbf{I})}({\bf a},{\bf b}).$$
If $\varphi_2^{\mathbf{A}(\mathbf{I})}({\bf b}) = 0$, then $(\nabla \varphi_2)^{\mathbf{A}(\mathbf{I})}({\bf b}) = 0$, hence the above expression evaluates to 1, since $1-0=1$
and $0\cdot x=0$, for every element $x$ in the universe of the semiring $\mathbb K$.
If $\varphi_2^{\mathbf{A}(\mathbf{I})}({\bf b}) \not = 0$, then $(\nabla \varphi_2)^{\mathbf{A}(\mathbf{I})}({\bf b}) = 1$, hence the above expression evaluates to $\varphi_1^{\mathbf{A}(\mathbf{I})}({\bf a},{\bf b})$,  since $1-1=0$.
From the preceding analysis, it follows that
$$\varphi_E^{\mathbf{A}(\mathbf{I})}({\bf a}) = \prod_{{\bf b}~:~\varphi_2^{\mathbf{A}(\mathbf{I})}({\bf b}) \not = 0} \varphi_1^{\mathbf{A}(\mathbf{I})}({\bf a},{\bf b}). $$
By induction hypothesis, we have that $\varphi_1^{\mathbf{A}(\mathbf{I})}({\bf a},{\bf b})=E_1^{\mathbf I}({\bf a},{\bf b})$ and $\varphi_2^{\mathbf{A}(\mathbf{I})}({\bf b})= E_2^{\mathbf I}({\bf b})$. 
Furthermore, $\mathsf{s}(\sum_{{\bf b} \in D^{n_2}}E_1^{\mathbf I}({\bf a},{\bf b}))=1$, because
$\|\nabla \exists {\bf y} \varphi_1\|(\mathbf{A}(\mathbf{I}), {\bf a})=1$.
Therefore, we have that
$$\varphi_E^{\mathbf{A}(\mathbf{I})}({\bf a}) = \mathsf{s}(\sum_{{\bf b} \in D^{n_2}}E_1^{\mathbf I}({\bf a},{\bf b})) \cdot \prod_{{\bf b}~:~E_2^{\mathbf I}({\bf b}) \not = 0} E_1^{\mathbf I}({\bf a},{\bf b}) = (E_1^{\mathbf I}\div E_2^{\mathbf I})({\bf a})= E^{\mathbf I}({\bf a}).$$

Suppose now that  ${\bf a}=(a_1, \dots, a_{n_1-n_2})\not \in \mathrm{adom}({\mathbf I})^{n_1-n_2}$.
In this case, we have
that $\varphi_E^{\mathbf{A}(\mathbf{I})}=0$. Furthermore, for every tuple  ${\bf b}$, we have
that $({\bf a},{\bf b})\not \in \mathrm{adom}({\mathbf I})^{n_1}$, hence
$\varphi_1^{\mathbf{A}(\mathbf{I})}({\bf a},{\bf b})=0$. By induction hypothesis about $\varphi_1$ and $E_1$, we have that
$E_1^{\mathbf I}({\bf a},{\bf b})=0$  for every tuple
$\bf b$, hence $\mathsf{s}(\sum_{{\bf b} \in D^{n_2}}E_1^{\mathbf I}({\bf a},{\bf b}))=0$ and so. by Definition \ref{def:div},
$E^{\mathbf I}({\bf a})=(E_1^{\mathbf I}\div E_2^{\mathbf I})({\bf a}) = 0$. Thus, once again, we have that $\varphi_E^{\mathbf{A}(\mathbf{I})}({\bf a}) = E^{\mathbf I}({\bf a})$.
This completes the proof of the correctness of the formula
$\varphi_E$.

Next, we show that $\varphi_E$ is domain independent. For this, we have to show that
for every $\mathbb K$-database $\mathbf{I}=(R_1, \dots, R_p)$  and for every 
$\mathbb K$-structure
  $\mathbf{B}=(B,R_1, \dots, R_p)$  with the same relations as $\mathbf I$ and with 
  $\mathrm{adom}(\mathbf{I})\subseteq B\subseteq D$, we have
  that $\varphi_E^{\mathbf B}= \varphi_E^{\mathbf{A}({\mathbf I})}$, which means that  $\varphi_E^{\mathbf B}({\bf a})= \varphi_E^{\mathbf{A}({\mathbf I})}({\bf a})$ holds, for every tuple ${\bf a} \in D^{n_1-n_2}$. 
  We distinguish three cases for a tuple ${\bf a}$ in $D^{n_1-n_2}$. 

\smallskip

 \noindent{\bf{Case 1:}} 
${\bf a} \not \in B^{n_1-n_2}$. Since $\mathrm{adom}({\mathbf I})\subseteq B$, we have that ${\bf a} \not \in \mathrm{adom}({\mathbf I})^{n_1-n_2}$, as well. Hence, by Definition \ref{brc-eval-def}, we have that 
$\varphi_E^{\mathbf B}({\bf a})=0 =  \varphi_E^{\mathbf{A}({\mathbf I})}({\bf a})$.

\smallskip

 \noindent{\bf{Case 2:}} 
${\bf a} \in B^{n_1-n_2}$ and for every tuple ${\bf b} \in D^{n_2}$, we have that
$\varphi_1^{\mathbf B}({\bf a},{\bf b})=0$.
Since ${\bf a}\in B^{n_1-n_2}$,  we have that
$$\varphi_E^{\mathbf{B}}({\bf a}) = \|
(\nabla \exists {\bf y} \varphi_1) \wedge 
(\forall {\bf y} ((  \eta \butnot \nabla \varphi_{2}) \vee   (  \nabla\varphi_{2} \wedge \varphi_{1} )))\|(\mathbf{B}, {\bf a})= $$
$$\|\nabla \exists {\bf y} \varphi_1\|(\mathbf{B}, {\bf a}) ~ \cdot 
\prod_{{\bf b}\in B^{n_2}}(\| (\eta \butnot \nabla \varphi_{2}) \vee   (  \nabla\varphi_{2} \wedge \varphi_{1} ))\|(\mathbf{B}, {\bf a},{\bf b}).
$$
Since $\varphi_1^{\mathbf B}({\bf a},{\bf b})=0$ for every tuple ${\bf b}\in B^{n_2}$, it follows that
$\|\nabla \exists {\bf y} \varphi_1\|(\mathbf{B}, {\bf a}) =0$, hence $\varphi_E^{{\mathbf B}}({\bf a})=0$. Let us now consider $\varphi_E^{{\mathbf A}({\mathbf I})}({\bf a})$.
If ${\bf a}\not \in \mathrm{adom}({\mathbf I})^{n_1-n_2}$, then, by Definition \ref{brc-eval-def}, we have that 
$\varphi_E^{{\mathbf A}({\mathbf I})}({\bf a})=0$.   If  ${\bf a} \in \mathrm{adom}({\mathbf I})^{n_1-n_2}$, then by the same calculations as above we have that
$$\varphi_E^{{\mathbf A}({\mathbf I})}({\bf a}) = 
\|\nabla \exists {\bf y} \varphi_1\|(\mathbf{A}(\mathbf{I}), {\bf a}) ~ \cdot 
\prod_{{\bf b}\in \mathrm{adom}({\mathbf I})^{n_2}}(\| (\eta \butnot \nabla \varphi_{2}) \vee   (  \nabla\varphi_{2} \wedge \varphi_{1} ))\|(\mathbf{A}(\mathbf{I}), {\bf a},{\bf b}).
$$
Since $\varphi_1$ is domain independent and since $\varphi_1^{\mathbf B}({\bf a},{\bf b})=0$ for every tuple ${\bf b}\in B^{n_2}$, we have that
$\varphi_1^{{\mathbf A}({\mathbf I})}({\bf a},{\bf b})=0$ for every tuple ${\bf b}\in \mathrm{adom}({\mathbf I})^{n_2}$, hence 
$\|\nabla \exists {\bf y} \varphi_1\|(\mathbf{A}(\mathbf{I}), {\bf a})= 0$ and so $\varphi_E^{{\mathbf A}({\mathbf I})}({\bf a}) =  0 =
\varphi_E^{\mathbf B}({\bf a})$.

\smallskip

 \noindent{\bf{Case 3:}} 
${\bf a} \in B^{n_1-n_2}$ and there is a  tuple ${\bf b} \in D^{n_2}$ such that
$\varphi_1^{\mathbf B}({\bf a},{\bf b}) \not =0$. By the domain independence of $\varphi_1$, we have
that $\varphi_1^{{\mathbf A}({\mathbf I})}({\bf a},{\bf b}) \not =0$, hence the tuple
$({\bf a}, {\bf b})$ must belong to  $\mathrm{adom}({\mathbf I})^{n_1}$ and so the tuple ${\bf a}$ must belong to  $\mathrm{adom}({\mathbf I})^{n_1-n_2}$. 
Furthermore, we have that 
$\|\nabla \exists {\bf y} \varphi_1\|(\mathbf{B}, {\bf a})= 1 = 
\|\nabla \exists {\bf y} \varphi_1\|(\mathbf{A}(\mathbf{I}), {\bf a})= 1$,
 by the domain independence of $\nabla \varphi_1$ (which follows from the domain independence of $\varphi_1$ and the third part of Proposition \ref{pro:dom-ind}). Using the same calculations as above,
 we have that
 $$\varphi_E^{{\mathbf B}}({\bf a}) =
\prod_{{\bf b}\in {\mathbf B}^{n_2}}\| (\eta \butnot \nabla \varphi_{2}) \vee   (  \nabla\varphi_{2} \wedge \varphi_{1} ))\|(\mathbf{B}, {\bf a},{\bf b})
$$
and
$$\varphi_E^{{\mathbf A}({\mathbf I})}({\bf a}) = 
\prod_{{\bf b}\in \mathrm{adom}({\mathbf I})^{n_2}}\| (\eta \butnot \nabla \varphi_{2}) \vee   (  \nabla\varphi_{2} \wedge \varphi_{1} ))\|(\mathbf{A}(\mathbf{I}), {\bf a},{\bf b}).
$$
Thus, to show that 
$\varphi_E^{{\mathbf B}}({\bf a}) = \varphi_E^{{\mathbf A}({\mathbf I})}({\bf a})$, it remains to show that the expressions in the right-hand sides of the previous two equations have the same value.
Fix a tuple ${\bf b}\in B^{n_2}$. If $\|\nabla \varphi_2\|({\mathbf B}, {\bf b})=0$, then $$\| (\eta \butnot \nabla \varphi_{2}) \vee   (  \nabla\varphi_{2} \wedge \varphi_{1} ))\|(\mathbf{B}, {\bf a},{\bf b}) = 1,$$ hence such tuples ${\bf b}$ do not contribute to the product and can be ignored. Suppose that $\|\nabla \varphi_2\|({\mathbf B}, {\bf b})=1$, which also means that $(\nabla \varphi_2)^{\mathbf B}({\bf b})=1$. Since $\varphi_2$ is domain independent, the third part of Proposition \ref{pro:dom-ind} implies that $\nabla \varphi_2$ is domain independent, hence 
$(\nabla \varphi_2)^{{\mathbf A}({\mathbf I})}({\bf b})=1$, which, in turn, implies that
${\bf b}$ belongs to $\mathrm{adom}({\mathbf I})^{n_2}$ (otherwise, $(\nabla \varphi_2)^{{\mathbf A}({\mathbf I})}({\bf b})=0$).
This analysis shows that the only tuples  that contribute to the right-hand sides of the two equations above are the tuples ${\bf b} \in \mathrm{adom}({\mathbf I})^{n_2}$  such that 
$ (\nabla \varphi_2)^{\mathbf B}({\bf b}) =  (\nabla \varphi_2)^{{\mathbf A}({\mathbf I})}({\bf b}) = 1$, which also means
that $ \|\nabla \varphi_2\|({\mathbf B},{\bf b}) =  \|\nabla \varphi_2\|({\mathbf A}({\mathbf I}),{\bf b}) = 1$.
For such tuples ${\bf b}$ and since ${\bf a} \in \mathrm{adom}({\mathbf I})^{n_1-n_2}$, we have that
{\small
$$\| (\eta \butnot \nabla \varphi_{2}) \vee   (  \nabla\varphi_{2} \wedge \varphi_{1} ))\|(\mathbf{B}, {\bf a},{\bf b}) = \|\eta \butnot \nabla \varphi_{2}\|({\mathbf B},{\bf b}) + 
\|\nabla\varphi_{2} \wedge \varphi_{1} \|({\mathbf B},{\bf a}, {\bf b}) = \|\varphi_1\|({\mathbf B},{\bf a}, {\bf b})$$
}
and
{\small
$$\| (\eta \butnot \nabla \varphi_{2}) \vee   (  \nabla\varphi_{2} \wedge \varphi_{1} ))\|(\mathbf{A}({\mathbf I}), {\bf a},{\bf b}) = \|\eta \butnot \nabla \varphi_{2}\|(\mathbf{A}({\mathbf I}),{\bf b}) + 
\|\nabla\varphi_{2} \wedge \varphi_{1} \|(\mathbf{A}({\mathbf I}),{\bf a}, {\bf b}) = \|\varphi_1\|(\mathbf{A}({\mathbf I}),{\bf a}, {\bf b}).$$}
By the domain independence of $\varphi_1$, we have that 
$\|\varphi_1\|({\mathbf B},{\bf a}, {\bf b})  = \|\varphi_1\|(\mathbf{A}({\mathbf I}),{\bf a}, {\bf b})$, which, in view of the preceding calculations, implies that 
$\varphi_E^{{\mathbf B}}({\bf a}) = \varphi_E^{{\mathbf A}({\mathbf I})}({\bf a})$ holds in Case 3, as well. 
This completes the proof of
the direction $(1) \implies (2)$.
Note that the building blocks of the formula
$\varphi_E$ do not have the same free variables; therefore,   by Remark \ref{rmk:dom-ind} one cannot rely entirely on  Proposition \ref{pro:dom-ind} and argue in a modular way that
$\varphi_E$ is domain independent.
\medskip

$(3) \implies (1):$
Assume that $\varphi$ is an 
$\mathcal{RC}(\tau)$-formula of the form 
$\forall y_i\psi$, where $\psi$ is
an $\mathcal{RC}(\tau)$-formula with $y_1,\ldots,y_n$ as its free variables and
$i$ is an index such that $1\leq i\leq n$. For notational simplicity, we assume that $i=n$, so that $\varphi$ is of the form
$\forall y_n\psi$, where the free variables of $\psi$ are
$y_1,\ldots,y_n$. By induction hypothesis, there is an $\mathcal{RA}(\tau)$-expression $E_\psi$ of arity $n$ such that $E_\psi^{\mathbf I}= \psi^{\mathbf{A}(\mathbf{I})}$, for every $\mathbb K$-database $\mathbf I$. We take $E_\varphi$ to be the 
$\mathcal{RA}$-expression
 $$E_\psi \div E_{\mathrm{adom}},$$ 
 which has arity $n-1$. We must show that
 $E_\varphi^{\mathbf I} = \varphi^{\mathbf{A}(\mathbf{I})}$, for every $\mathbf K$-database $\mathbf I$, which means that $E_\varphi^{\mathbf I}({\bf a}) = \varphi^{\mathbf{A}(\mathbf{I})}({\bf a})$
 holds, for every tuple ${\bf a}=(a_1,\ldots,a_{n-1})\in D^{n-1}$.
 
For every tuple ${\bf a}\in D^{n-1}$ and by the semantics of the division operation, we have that
$$E_\varphi^{\mathbf I}({\bf a})= (E_\psi \div E_{\mathrm{adom}})^{\mathbf I}({\bf a})= 
(E^{\mathbf I}_\psi \div E_{\mathrm{adom}}^{\mathbf I})({\bf a}) = 
\mathsf{s}(\sum_{b\in D}E^{\mathbf I}_{\psi}({\bf a },b)) ~ \cdot \prod_{E_{\mathrm{adom}}^
{\mathbf I}(b) \not = 0} E_\psi^{\mathbf {I}}({\bf a},b).
$$
By Proposition \ref{pro:active}, 
$E_{\mathrm{adom}}^
{\mathbf I}= \mathrm{adom}(\mathbf{I})$. Hence, for every $a\in D$, we have that
$E_{\mathrm{adom}}^
{\mathbf I}(a) \not = 0$ if and only if $a \in \mathrm{adom}(\mathbf{I})$.
Consequently, for every tuple ${\mathbf a} \in D^{n-1}$, we have that 
$$E_\varphi^{\mathbf I}({\bf a})= \mathsf{s}(\sum_{b\in D}E^{\mathbf I}_{\psi}({\bf a },b)) ~ \cdot 
\prod_{b \in \mathrm{adom}(
{\mathbf I})} E_\psi^{\mathbf {I}}({\bf a},b).
$$
 We now distinguish two cases for a tuple ${\bf a}\in D^{n-1}$.

\smallskip

\noindent{{\bf Case 1:}} For every element $b\in  D$, we have that  $\psi^{\mathbf{A}({\mathbf I})}({\bf a}, b)= 0$.  By induction hypothesis about $\psi$
and $E_\psi$, we have that 
$E_\psi^{\mathbf I}({\bf a},b) = \psi^{\mathbf{A}({\mathbf I})}({\bf a}, b)=0$ holds, for every $b\in D$. It follows that $\sum_{b\in D}E^{\mathbf I}_{\psi}({\bf a },b)=0$, hence, by the last displayed equation, we have that $E_\varphi^{\mathbf I}({\bf a})=0$. We claim that also 
$\varphi^{\mathbf{A}(\mathbf{I})}({\bf a}) =0$
 holds in this case.  
  Indeed, if ${\bf a}\not \in \mathrm{adom}({\mathbf I})$, then $\varphi^{\mathbf{A}(\mathbf{I})}({\bf a}) =0$, by Definition \ref{brc-eval-def}. If  ${\bf a} \in \mathrm{adom}({\mathbf I})$, then, by Definitions \ref{FOL-sem-def} and \ref{brc-eval-def}, we have that
  $$\varphi^{\mathbf{A}(\mathbf{I})}({\bf a}) = \|\forall y_i\psi\|(\mathbf{A}(\mathbf{I}),{\bf a}) = \prod_{b \in \mathrm{adom}({\mathbf I})}\|\psi\|(\mathbf{A}(\mathbf{I}),{\bf a})
  = \prod_{b \in \mathrm{adom}({\mathbf I})} \psi^{\mathbf{A}(\mathbf{I})}({\bf a},b)=
  0,$$
 as desired. Note that the product 
 $\prod_{b \in \mathrm{adom}({\mathbf I})} \psi^{\mathbf{A}(\mathbf{I})}({\bf a},b)$
 is non-empty because of our blanket assumption that $\mathrm{adom}({\mathbf I})$ is non-empty. 

 \smallskip
 \noindent{{\bf Case 2:}} There is an  element $b\in  D$ such that  $\psi^{\mathbf{A}({\mathbf I})}({\bf a}, b) \not = 0$.  By induction hypothesis about $\psi$
and $E_\psi$, we have that 
$E_\psi^{\mathbf I}({\bf a},b) = \psi^{\mathbf{A}({\mathbf I})}({\bf a}, b)$, hence $ E_\psi^{\mathbf I}({\bf a},b) \not = 0$ holds for at least one element $b\in D$.
It follows that 
$\sum_{b\in D}E^{\mathbf I}_{\psi}({\bf a },b)=1$, hence, we have that $$E_\varphi^{\mathbf I}({\bf a})=\prod_{b \in \mathrm{adom}(
{\mathbf I})} E_\psi^{\mathbf {I}}({\bf a},b).$$
Since $\psi^{\mathbf{A}({\mathbf I})}({\bf a}, b) \not = 0$ for some element $b\in D$, it follows that $({\bf a},b) \in \mathrm{adom}(\mathbf{I})^n$ (otherwise,  $\psi^{\mathbf{A}({\mathbf I})}({\bf a}, b) = 0$, by Definition \ref{brc-eval-def}). Consequently, ${\bf a} \in \mathrm{adom}(\mathbf{I})^{n-1}$, hence,  by Definitions \ref{FOL-sem-def} and \ref{brc-eval-def}, and the induction hypothesis, 
$$\varphi_E^{\mathbf{A}({\mathbf I})}({\bf a})=\prod_{b \in \mathrm{adom}(
{\mathbf I})} \psi^{{\mathbf A} (\mathbf {I})}({\bf a},b) = \prod_{b \in
\mathrm{adom}({\mathbf I})}
E_\psi^{\mathbf I}({\bf a},b)= E_\varphi^{\mathbf I}({\bf a}).$$
as desired. 
This completes the proof of Theorem \ref{Codd2}.
 \qedhere
\end{proof}

Immediate corollaries of Theorem~\ref{Codd2} include versions of Codd's Theorem for the bag semiring, the fuzzy semiring, the \L ukasiewicz semiring, and the tropical semiring on the natural numbers.

\section{Concluding Remarks}
The work presented here delineates the connections between relational algebra and relational calculus on databases over semirings by establishing two versions of Codd's Theorem in this setting. Along the way, new differences between the Boolean semiring and other semirings of importance to databases were unveiled. In particular, the five basic operations of relational algebra cannot express the division operation on the bag semiring. From the several directions for future research, we single out  two:
\begin{itemize}
\item For which semirings is it the case that basic relational algebra $\mathcal{BRA}$ can express the division operation? Is there a characterization of these semirings in terms of their structural properties?
\item In this paper, we had to make certain choices concerning the difference operation in relational algebra and the form of negation in relational calculus; to this effect, we chose the monus operation and the $\butnot$ (``but not'') connective, respectively. As described in~\cite{DBLP:conf/birthday/Suciu24, Bauer:2012}, there are other ways to define the difference operation in relational algebra over semirings. Are there versions of Codd's Theorem when alternative choices for the difference operation and for negation are made?
\end{itemize}
Finally, our results establish also connections between database theory and non-classical logics, since the $\butnot$ (``but not'') connective has featured in such logics. In particular, we saw earlier that relational calculus $\mathcal{RC}$, as defined here, has connections to the dual intuitionistic logic.  This paves the way for interaction between two communities, database theory and non-classical logics, that hardly had any interaction in the past.

\paragraph{Acknowledgments}
We  thank Dan Suciu for providing Example \ref{exam:suciu} of a naturally ordered semiring that cannot be expanded to a semiring with monus. We also thank Tomasz Kowalski for discussions about semirings with monus and for pointers to the literature.
\bibliography{BKN_article_ArXiV_v4}

\end{document}